\keywords{Divergence, Bisimulation, Probabilistic process}
\tikzstyle{block}=[draw opacity=0.7,line width=1.4cm]
\newcommand{\defn}{\stackrel{\text{def}}{=}}
\def\PRCCS{\mathcal{P}_{\mathrm{RCCS}_{fs}}}
\newtheorem{theorem}{Theorem}[section]
\newtheorem{proposition}[theorem]{Proposition}
\newtheorem{lemma}[theorem]{Lemma}
\newtheorem{corollary}[theorem]{Corollary}
\theoremstyle{definition}
\newtheorem{definition}[theorem]{Definition}
\newtheorem{example}[theorem]{Example}
\theoremstyle{remark}
\newtheorem{remark}[theorem]{Remark}
\begin{document}
\title[Analyzing Divergence for Nondeterministic Probabilistic Models]{Analyzing Divergence for Nondeterministic Probabilistic Models}

\author[H.~Wu\lmcsorcid{0000-0002-8955-8216}]{Hao Wu}[a]
\author[Y.~Fu]{Yuxi Fu}[b]
\author[H.~Long\lmcsorcid{0000-0002-1328-6197}]{Huan long}[b]
\author[X.~Xu\lmcsorcid{0000-0001-9713-9751}]{Xian Xu}[c]
\author[W.~Zhang\lmcsorcid{0000−0003−1375−0081}]{Wenbo Zhang}[d]

\address{Shanghai Maritime University,  Shanghai, China}	
\email{wuhao@shmtu.edu.cn}  

\address{BASICS, Shanghai Jiao Tong University, Shanghai, China}	
\email{fu-yx@cs.sjtu.edu.cn (Corresponding author), longhuan@sjtu.edu.cn}  

\address{East China University of Science and Technology,  Shanghai, China}	
\email{xuxian@ecust.edu.cn}  

\address{Shanghai Ocean University, Shanghai, China}	
\email{wbzhang@shou.edu.cn}  





\begin{abstract}
	Branching and weak probabilistic bisimilarities are two well-known notions capturing behavioral equivalence between nondeterministic probabilistic systems.
	For probabilistic systems, divergence is of major concern.
	Recently several divergence-sensitive refinements of branching and weak probabilistic bisimilarities have been proposed in the literature.
	Both the definitions of these equivalences and the techniques to investigate them differ significantly.
	This paper presents a comprehensive comparative study on divergence-sensitive behavioral equivalence relations that refine branching and weak probabilistic bisimilarity.
	Additionally, these equivalence relations are shown to have efficient checking algorithms.
	The techniques of this paper may be of independent interest in a more general setting.
\end{abstract}

\maketitle		
	
	\section{Introduction}\label{sec-introduction}
	\subsubsection*{Background and Motivation}
	In the area of program analysis, probability and nondeterminism have received significant attention in recent years \cite{fu_TerminationNondeterministicProbabilistic_2019,chatterjee_TerminationNondeterministicRecursive_2017,etessami_RecursiveMarkovDecision_2015}.
	Many different nondeterministic probabilistic models have been studied from both theoretical and practical perspectives, such as 
	Markov decision processes (MDP) \cite{baier_PrinciplesModelChecking_2008,etessami_RecursiveMarkovDecision_2015,brazdil_ReachabilityRecursiveMarkov_2008}, 
	Probabilistic automata (PA) 
	\cite{segala_ModelingVerificationRandomized_1995, cattani_DecisionAlgorithmsProbabilistic_2002, turrini_PolynomialTimeDecision_2015}, 
	Randomized CCS ($\mathrm{RCCS}$)
	\cite{fu_ModelIndependentApproach_2021,zhang_UniformRandomProcess_2019, wu_ProbabilisticWeakBisimulation_2023}, etc..
	For these models, a fundamental question is how to define behavioral equivalence between probabilistic systems.
	Variants of equivalence for these nondeterministic probabilistic models have already been studied over the years, including strong bisimulation \cite{larsen_BisimulationProbabilisticTesting_1989,vanglabbeek_ReactiveGenerativeStratified_1995,castiglioni_ModalDecompositionNondeterministic_2016}, weak bisimulation \cite{baier_WeakBisimulationFully_1997, deng_AxiomatizationsProbabilisticFinitestate_2007,wu_ProbabilisticWeakBisimulation_2023,segala_ModelingVerificationRandomized_1995}, branching bisimulation \cite{segala_ProbabilisticSimulationsProbabilistic_1994,castiglioni_RaidersLostEquivalence_2020,castiglioni_ProbabilisticDivideCongruence_2020,fu_ModelIndependentApproach_2021}, trace equivalence \cite{jou_EquivalencesCongruencesComplete_1990a} and testing equivalence \cite{larsen_BisimulationProbabilisticTesting_1989,yi_TestingProbabilisticNondeterministic_1992,deng_TestingFinitaryProbabilistic_2009}.
	Among them probabilistic branching and weak bisimulations are of great importance.
	Their non-probabilistic versions have been intensively studied in the linear-time branching-time spectrum by van Glabbeek \cite{vanglabbeek_LinearTimeBranching_1993}. 
	Traditional branching and weak bisimulations  ignore the role of divergence, i.e., 
	infinite sequences of internal computation steps need not be bisimulated. However, divergence is crucial in practice as a non-terminating computation could be unintended in many applications. As it turns out, system behaviors become far more complicated when divergence is an issue.
	Liu et al. \cite{liu_AnalyzingDivergenceBisimulation_2017} have demonstrated the importance of divergence for non-probabilistic processes in system verification. They put forward divergence-sensitive branching and weak bisimilarities in the non-probabilistic setting, and give equivalent characterizations for them.
	
	There have been mainly two ways to capture divergence in the nondeterministic probabilistic models.
	The first one is defined by the existence of a \emph{divergent $\epsilon$-tree} (roughly, the probabilistic version of state-preserving internal action sequences)~\cite{fu_ModelIndependentApproach_2021}. The second one is defined by the reachability to a \emph{$\tau$-EC} (roughly, the probabilistic version of the internal action cycle)~\cite{he_DivergencesensitiveWeakProbabilistic_2023}.
	Although the two concepts are defined in the context of probabilistic branching and weak bisimulations respectively, they are actually independent of specific bisimulation semantics.
	
	We give an example to explain the motivation  of our work.
	In Figure \ref{fig-motivation}, $S$ is the specification of a probabilistic system, and $P_1, P_2$ are two implementation candidates. 
	We would like to tell whether $P_1$ and $P_2$ implement $S$ faithfully. In probabilistic program analysis, \emph{almost-sure termination} \cite{chatterjee_TerminationNondeterministicRecursive_2017,mciver_NewProofRule_2017,fu_TerminationNondeterministicProbabilistic_2019} is a standard criterion, which requires that a given probabilistic program terminates with probability $1$.
	In this example, if we ignore divergence, one can argue that $P_1$, $P_2$ and $S$ are pairwise branching (also weak) bisimilar to each other.
	However, only $P_1$ and $S$ are almost-surely terminating, whereas $P_2$ is not almost-surely terminating (as $P_2$ can reach a state $Q_2$ that can loop forever).
	Thus from the point of view of almost-sure termination, $P_1$ and $P_2$ are not equivalent, and it is reasonable to say that only $P_1$ implements $S$ faithfully.
	Since $P_2$ can reach a silent cycle whereas $S$ and $P_1$ cannot, the \emph{exhaustive weak probabilistic bisimilarity} proposed by He et al. \cite{he_DivergencesensitiveWeakProbabilistic_2023}
	distinguishes $P_2$ from $P_1$ (and $S$ as well).
	
	Let us take an even closer look at $P_2$, and consider the pair of states $(P_2,Q_2)$.
	Neither $P_2$ nor $Q_2$ is almost-surely terminating, and both can reach the cycle of $Q_2$. So they cannot be separated by the exhaustive weak probabilistic bisimilarity of He et al. \cite{he_DivergencesensitiveWeakProbabilistic_2023}.
	However, their behaviors might appear very different to environments, and from the perspective of observation they ought to be distinguished.
	Consider the two nondeterministic transitions from $Q_2$, one has $\mathsf{tr}_1 = Q_2 \xrightarrow{\tau} Q_2$ and $\mathsf{tr}_2 = Q_2 \xrightarrow{\tau} P_2$.
	By our understanding of nondeterminism, there is the possibility that $\mathsf{tr}_1$ is repeatedly executed ad infinitum, due to hardware malfunction for instance.
	An external observer $O$ can tell $P_2$ and $Q_2$ apart by interacting with them.
	There is a non-zero {\em probability} that $O$ communicates with $P_2$ through channel $a$ or $b$.
	On the other hand there is a {\em possibility} that $O$ may never communicate with $Q_2$.
	The distinction between probability and possibility must be maintained in probabilistic nondeterministic models.
	The subtle difference between $P_2$ and $Q_2$ cannot be detected by the $\tau$-EC approach.
	It can be recognized by the \emph{divergence-sensitive branching bisimilarity} of  Fu \cite{fu_ModelIndependentApproach_2021}.
	
	\begin{figure}[htb]
		\vspace{-1.15mm}
		\captionsetup{justification=centerlast}
		\hspace{-10mm}
		\begin{subfigure}[t]{0.25\textwidth}
			\vspace{0pt}
			\centering
			\begin{tikzpicture}[on grid,scale=0.8]
				\begin{scope}[ every node/.style={draw,circle,transform shape,inner sep=3pt}]
					\foreach \pos/\name/\text in {
						{(0,1.5)/a/{$a$}},
						{(1,3)/S/{$S$}},
						{(2,1.5)/b/{$b$}},
						{(1,0.75)/z/{$\mathbf{0}$}}}
					\node (\name) at \pos {\text};
				\end{scope}
				
				\begin{scope}[every node/.style={transform shape, auto=left,inner sep=1pt}, >={Latex[width'=0pt .5, length=5pt]}]
					\foreach \source/ \dest / \text in {
						b/z/{$b$}, S/b/{$\frac{1}{2}\tau$}}
					\path[->] (\source) edge node {\text} (\dest);
				\end{scope}	
				
				\begin{scope}[every node/.style={transform shape, auto=right,inner sep=1pt}, >={Latex[width'=0pt .5, length=5pt]}]
					\foreach \source/ \dest / \text in {
						a/z/{$a$}, S/a/{$\frac{1}{2}\tau$}}
					\path[->] (\source) edge node {\text} (\dest);
					
				\end{scope}	
			\end{tikzpicture}
			\caption{Specification.}\label{fig-motivation-a}
		\end{subfigure}
	    \hspace{5pt}
		\begin{subfigure}[t]{0.3\textwidth}
			\vspace{0pt}
			\centering
			\begin{tikzpicture}[on grid,scale=0.8]
				\begin{scope}[every node/.style={draw,circle,transform shape,inner sep=3pt}]
					\foreach \pos/\name/\text in {
						{(1,3)/P1/{$P_1$}},
						{(1,1.5)/a/{$a$}},
						{(2+0.5,1.5)/b/{$b$}},
						{(1.75,0.75)/z/{$\mathbf{0}$}}}
					\node (\name) at \pos {\text};
				\end{scope}
				
				\begin{scope}[ every node/.style={transform shape, auto=left,inner sep=1pt}, >={Latex[width'=0pt .5, length=5pt]}]
					\foreach \source/ \dest / \text in {
						b/z/{$b$}, P1/a/{$\frac{1}{3}\tau$}, P1/b/{$\frac{1}{3}\tau$}}
					\path[->] (\source) edge node {\text} (\dest);
				\end{scope}	
				
				\begin{scope}[ every node/.style={transform shape, auto=right,inner sep=1pt}, >={Latex[width'=0pt .5, length=5pt]}]
					\foreach \source/ \dest / \text in {
						a/z/{$a$}}
					\path[->] (\source) edge node {\text} (\dest);
				\end{scope}	
				
				\begin{scope}[every node/.style={transform shape, auto=right,inner sep=1pt}, >={Latex[width'=0pt .5, length=5pt]}]
					\path[->] (P1) edge [out=-115, in=-155, looseness=20] node {$\frac{1}{3}\tau$} (P1);
				\end{scope}	
			\end{tikzpicture}
			\vspace{-2pt}
			\caption{The first implementation.}\label{fig-motivation-b}
		\end{subfigure}
		\hfil
		\begin{subfigure}[t]{0.35\textwidth}
			\vspace{0pt}
			\centering
			\begin{tikzpicture}[on grid,scale=0.8]
				\begin{scope}[every node/.style={draw,circle,transform shape,inner sep=3pt}]
					\foreach \pos/\name/\text in {
						{(0-0.5,1.5)/B/{$Q_2$}},
						{(1,3)/A/{$P_2$}},
						{(1,1.5)/a/{$a$}},
						{(2+0.5,1.5)/b/{$b$}},
						{(1.75,0.75)/z/{$\mathbf{0}$}}}
					\node (\name) at \pos {\text};
				\end{scope}
				
				\begin{scope}[every node/.style={transform shape, auto=left,inner sep=1pt}, >={Latex[width'=0pt .5, length=5pt]}]
					\foreach \source/ \dest / \text in {
						b/z/{$b$}, A/a/{$\frac{1}{3}\tau$}, A/b/{$\frac{1}{3}\tau$}}
					\path[->] (\source) edge node {\text} (\dest);
					
					\path[->] (B) edge [out=-135, in=-45, looseness=6] node {$\tau$} (B);
					\path[->] (B) edge [bend left=45] node {$\tau$} (A);
				\end{scope}	
				
				\begin{scope}[every node/.style={transform shape, auto=right,inner sep=1pt}, >={Latex[width'=0pt .5, length=5pt]}]
					\foreach \source/ \dest / \text in {
						a/z/{$a$}, A/B/{$\frac{1}{3}\tau$}}
					\path[->] (\source) edge node {\text} (\dest);
					
				\end{scope}	
			\end{tikzpicture}
			\vspace{-8pt}
			\caption{The second implementation.}\label{fig-motivation-c}
		\end{subfigure}
		\caption{Examples of systems with different divergence behaviors.}\label{fig-motivation}
	\end{figure}
	
	\subsubsection*{Related Work}
	In \cite{milner_CommunicationConcurrency_1989}, \emph{strong bisimulation} and \emph{weak bisimulation} are introduced for the $\mathrm{CCS}$ model. These two bisimulation semantics differ in the way to treat internal computations: the former requires that for each pair of bisimilar processes, every action immediately enabled by one process must be matched by the same action immediately enabled by the other process; the latter allows the matching steps using additional internal actions.
	Then in \cite{vanglabbeek_BranchingTimeAbstraction_1996}, van Glabbeek and Weijland propose a refined alternative to weak bisimulation, namely \emph{branching bisimulation}. Branching bisimulation is finer than weak bisimulation as it requires that the additional internal actions used in the matching steps need to be \emph{state-preserving} (i.e., the intermediate states passed in the steps all belong to the same equivalence class).
	Traditional weak and branching bisimulation ignore the role of divergence, some of their divergence-sensitive invariants then are considered in \cite{vanglabbeek_BranchingBisimilarityExplicit_2009, liu_AnalyzingDivergenceBisimulation_2017}.
	In \cite{vanglabbeek_BranchingBisimilarityExplicit_2009}, van Glabbeek et al.\ propose the notion of \emph{branching bisimilarity with explicit divergence} and prove that it is an equivalence.
	Liu et al. \cite{liu_AnalyzingDivergenceBisimulation_2017} show that it is much more difficult  to prove the equivalence property of the \emph{weak bisimilarity with explicit divergence}.
	Instead of giving a direct proof, they get around the difficulty by constructing a new equivalence called \emph{complete weak bisimilarity} and showing that it is the largest weak bisimulation with explicit divergence.
	Recently, the notion of \emph{rooted divergence-preserving branching bisimilarity} has been proposed in \cite{sun_RootedDivergencePreservingBranching_2023} and has been proved to be a congruence for $\mathrm{CCS}$ with guarded recursion.

	When generalized to probabilistic process model, several probabilistic bisimilarities has been proposed and investigated, including strong probabilistic bisimilarity \cite{segala_ModelingVerificationRandomized_1995}, weak probabilistic bisimilarity \cite{segala_ModelingVerificationRandomized_1995, turrini_PolynomialTimeDecision_2015} and branching probabilistic bisimilarity \cite{fu_ModelIndependentApproach_2021, castiglioni_RaidersLostEquivalence_2020}.
	Two representative works are the \emph{distribution-based weak probabilistic bisimilarity} \cite{segala_ModelingVerificationRandomized_1995} and \emph{$\epsilon$-tree based branching probabilistic bisimilarity} \cite{fu_ModelIndependentApproach_2021}.
	Weak probabilistic bisimilarity has been introduced in \cite{segala_ModelingVerificationRandomized_1995} for the PA model and has been investigated  extensively over the past 30 years.
	The branching probabilistic bisimilarity proposed by Fu \cite{fu_ModelIndependentApproach_2021}  is a conservative generalization of the classical branching bisimilarity \cite{vanglabbeek_BranchingTimeAbstraction_1996} and has been shown to be a congruence for the RCCS model.
	It is  well-known that branching bisimilarities are strictly finer than weak bisimilarities~\cite{vanglabbeek_BranchingTimeAbstraction_1996}.
	However, exploring such relationship in probabilistic setting turns out to be a challenge. 
 The technical reason is that Fu \cite{fu_ModelIndependentApproach_2021} takes a tree-based characterization for probabilistic transitions whereas Segala
	\cite{segala_ModelingVerificationRandomized_1995} takes the distribution-based characterization. 
	Divergence issue in probabilistic setting has been considered in \cite{fu_ModelIndependentApproach_2021} and \cite{he_DivergencesensitiveWeakProbabilistic_2023}.
	In \cite{fu_ModelIndependentApproach_2021}, based on the notion of \emph{divergent $\epsilon$-tree}, Fu introduces a divergence-sensitive refinement of branching bisimilarity, which can be seen as a probabilistic generalization of branching bisimilarity with explicit divergence in classical CCS model.
	In \cite{he_DivergencesensitiveWeakProbabilistic_2023}, He et al.\ propose \emph{exhaustive weak probabilistic bisimilarity} as a divergence-sensitive refinement of weak probabilistic bisimilarity \cite{segala_ModelingVerificationRandomized_1995}, where the divergence property is based on the notion of \emph{$\tau$-EC}.
	The exhaustive weak probabilistic bisimilarity is actually a probabilistic version of the complete weak bisimilarity of Liu et al. \cite{liu_AnalyzingDivergenceBisimulation_2017}.	

	Quite a few equivalence checking algorithms for the above mentioned equivalences have been studied in the literature.
	Kanellakis and Smolka  \cite{kanellakis_CCSExpressionsFinite_1990} propose polynomial-time decision algorithms for strong bisimilarity and weak bisimilarity in CCS model. A key technique used in their algorithms is the \emph{partition-refinement approach} \cite{paige_ThreePartitionRefinement_1987}: given a set $S$ of processes, they start with the coarsest partition of $S$ and then keep refining it until the resulting partition satisfies the requirement of strong (or weak) bisimulation.
	An efficient decision algorithm for branching bisimilarity is given in \cite{groote_EfficientAlgorithmBranching_1990}.
	Later in \cite{baier_DecidingBisimilaritySimilarity_2000a}, Baier et al.\ generalizing the partition-refinement approach to probabilistic models and present efficient decision algorithm for strong probabilistic bisimilarity introduced in \cite{segala_ModelingVerificationRandomized_1995}. 
	Recently, Jacobs and Wi{\ss}mann \cite{jacobs_FastCoalgebraicBisimilarity_2023} present a generic algorithm for deciding a class of behavioural equivalences whose underlying transition structure is specified by a functor in the category of sets, subsuming strong bisimilarity \cite{milner_CommunicationConcurrency_1989} and strong probabilistic bisimilarity \cite{segala_ModelingVerificationRandomized_1995}.
	Turrini and Hermanns \cite{turrini_PolynomialTimeDecision_2015} give a delicate polynomial time algorithm for deciding weak probabilistic bisimilarity \cite{segala_ModelingVerificationRandomized_1995} for PA model, significantly improving the previous exponential complexity in \cite{cattani_DecisionAlgorithmsProbabilistic_2002}.  The key technique  in \cite{turrini_PolynomialTimeDecision_2015} is  a novel
	characterization of the weak combined transitions as a linear programming problem.
	Zhang et al. \cite{zhang_UniformRandomProcess_2019} introduce a novel notion of \emph{$\epsilon$-graph} and use it to give a polynomial algorithm for checking branching probabilistic bisimilarity proposed in \cite{fu_ModelIndependentApproach_2021}.
	Neither Turrini and Hermanns \cite{turrini_PolynomialTimeDecision_2015} nor Zhang et al. \cite{zhang_UniformRandomProcess_2019} give consideration to the divergence issue.
	To the best of our knowledge, algorithmic treatments to divergence-sensitive bisimilarity for probabilistic models appears in \cite{he_DivergencesensitiveWeakProbabilistic_2023} for the first time.
	By combining the classical partition-refinement framework with the inductive verification approach proposed in~\cite{liu_AnalyzingDivergenceBisimulation_2017}, He et al. \cite{he_DivergencesensitiveWeakProbabilistic_2023} present a polynomial verification algorithm for exhaustive weak probabilistic bisimilarity.
	
	
	The picture of the divergence-sensitive probabilistic bisimulation equivalences is far from complete.
	In this paper we focus on the divergence issue in this picture.
	We shall prove a number of separation results regarding the equivalence relations mentioned above, and carry out algorithmic studies on these equivalences. 
	
	\subsubsection*{Contribution}
	The main contributions of this paper are stated as follows.
	\begin{enumerate}
		\item \label{2023-07-10} We give a comprehensive comparison between variants of (divergence-sensitive) branching and weak bisimulation semantics for probabilistic processes (Theorem \ref{thm-spectrum}). Particularly, we show that the $\epsilon$-tree based branching bisimilarity is finer than the distribution-based weak bisimilarity (Theorem \ref{thm-BbInWb}).
		We also show that the divergent $\epsilon$-tree property is stronger than the $\tau$-EC property in the branching semantics (Theorem \ref{thm-dsbbVSebb}).
		
		\item 
		We give efficient verification algorithms for these divergence-sensitive bisimilarities. Particularly, for the exhaustive weak bisimilarity ($\approx_e$), rather than using the inductive verification method, we propose a new polynomial-time verification algorithm by making use of the so-called maximal $\tau$-EC.
		
		\item We also present some novel techniques that could be of independent interest.
		In establishing Theorem \ref{thm-BbInWb}, we come up with a way to relate distribution-based semantics and $\epsilon$-tree based semantics for probabilistic models.
		When proving Theorem \ref{thm-dsbbVSebb}, we apply a technical lemma (Lemma \ref{lem-EcAndTree}) that builds the connection between $\tau$-EC and divergent $\epsilon$-tree.
	\end{enumerate}
	
	\subsubsection*{Organization}
	The paper is organized as follows. Section \ref{sec-preliminary} summarizes the necessary knowledge about the finite state probabilistic model and two notions of bisimulations. The relationship between the branching and weak bisimilarities for such model is studied.
	Section~\ref{sec-branching} defines two divergence-sensitive branching bisimulation semantics, the branching bisimilarity with explicit divergence and the exhaustive branching bisimilarity, along with the discussion of their relationship.
	Section~\ref{sec-spectrum} builds up a lattice for the variants of the probabilistic bisimilarities. Section~\ref{sec-algorithm} gives the equivalence checking algorithms for the divergence-sensitive bisimilarities studied in the paper, all with polynomial time complexity.
	Section~\ref{sec-conclusion} concludes.

	\section{Preliminaries}\label{sec-preliminary}
	We begin by fixing the probabilistic process model of this paper.
	We then introduce the branching and weak bisimilarities without any consideration of divergence.
	The technical contribution of this section is a proof of the fact that the branching bisimilarity indeed implies the weak bisimilarity in the randomized $\mathrm{CCS}$ model.
	This is not a routine exercise since it calls for a comparison of the $\epsilon$-tree based semantics against the distribution-based semantics.
	
	\subsection{Background knowledge}
	\subsubsection{Finite state randomized \textrm{CCS} model}
	Let $\mathcal{A}$ be the set of external actions, ranged over by lowercase letters $a,b,c$. We use a special symbol $\tau \notin \mathcal{A}$ to represent the internal action. The set of actions is $Act = \mathcal{A} \cup \{\tau\}$, ranged over by $\alpha, \beta, \gamma,\ell$.
	Let $Act_p$ be the set $Act \cup \{p\tau \mid 0 <p <1\}$, ranged over by $\lambda$.
	The grammar of finite state randomized CCS model, $\mathrm{RCCS}_{fs}$, is defined as:
	\begin{equation}\label{eq-grammer}
		T :=  \mathbf{0} ~\Big{|}~ X ~\Big{|}~ \sum_{i \in I} \alpha_i.T_i ~\Big{|}~ \bigoplus_{i \in I}p_i\tau.T_i ~\Big{|}~ \mu X.T,\tag{$*$}
	\end{equation}
	where the non-empty index set $I$ is finite.
	In (\ref{eq-grammer}), $\mathbf{0}$ is the nil term,
	$X$ is a process variable, $\sum_{i \in I} \alpha_i.T_i$ is a nondeterministic choice term, $\bigoplus_{i \in I}p_i\tau.T_i$ is a probabilistic choice term, and $\mu X.T$ is a fixpoint term.
	A trailing $\mathbf{0}$ which appears at the end of a term is often omitted, e.g., $\tau.a$ represents $\tau.a.\mathbf{0}$.
	Sometimes we will use the infix notation of $\sum$ to specify particular summands in the nondeterministic choice term, writing for example $\sum_{i \in I'} \alpha_i. T_i + \beta.T' + \gamma.T''$.
	In the probabilistic choice term $\bigoplus_{i \in I}p_i\tau.T_i$, $I$ is a finite set with $|I| \ge 2$, each $p_i \in (0,1)$ and $\sum_{i \in I} p_i = 1$.
	A process variable $X$ that appears in $\sum_{i \in I} \alpha_i.T_i$ (or $\bigoplus_{i \in I}p_i\tau.T_i$) is \emph{guarded}. We shall assume that in the fixpoint term $\mu X.T$ the \emph{bound} variable $X$ is guarded in $T$. A variable in a term is \emph{free} if it is not bound. A term is a \emph{process} if it contains no free variables.
	We write $A,B,C,P,Q$ for processes. The set of all $\mathrm{RCCS}_{fs}$ processes is denoted by $\PRCCS$.
	The operational semantics of $\mathrm{RCCS}_{fs}$ is given by the \emph{labeled transition system} (LTS for short) in Figure \ref{fig-rccs}, where $\lambda \in Act_p$ and the transition relation $\longrightarrow\;\subseteq \PRCCS\times Act_p \times \PRCCS$.
	
	\begin{figure}[htb]	
		\vspace{-5mm}
		\begin{center}
			\[	\frac{}{~\sum_{i \in I} \alpha_i.T_i \xrightarrow{\alpha_i} T_i~} \qquad
			\frac{}{~\bigoplus_{i \in I}p_i\tau.T_i \xrightarrow{p_i\tau} T_i~} \qquad
			\frac{T\{\mu X.T/X\} \xrightarrow{\lambda}T'}{\mu X.T \xrightarrow{\lambda} T'}
			\]
		\end{center}
		\vspace{-1mm}
		\caption{LTS for $\mathrm{RCCS}_{fs}$.}\label{fig-rccs}
		\vspace{-2mm}
	\end{figure}
	
	For any $A\in \PRCCS$, there could be only a finite number of processes reachable from $A$.
	The \emph{induced transition graph of $A$}, denoted by $G_A = (V_A, E_A)$, is a directed labeled graph satisfying that $V_A$ contains all the processes reachable from $A$, $E_A$ contains all the transitions on $V_A$ and each edge $e_A = (A', A'') \in E_A$ with label $\lambda \in Act_p$ stands for the transition $A' \xrightarrow{\lambda} A''$ in LTS.
	
	\begin{example}
		The three probabilistic systems in Figure \ref{fig-motivation} can be defined as the following $\mathrm{RCCS}_{fs}$ processes: $S = \frac{1}{2}\tau. a \oplus \frac{1}{2}\tau.b$, $P_1 = \mu X. (\frac{1}{3}\tau.X \oplus \frac{1}{3}\tau.a \oplus  \frac{1}{3}\tau.b)$, $Q_2 = \mu X.(\tau.X + \tau.(\frac{1}{3}\tau.X \oplus\frac{1}{3}\tau.a \oplus \frac{1}{3}\tau.b))$ and $P_2 =\frac{1}{3}\tau.Q_2 \oplus \frac{1}{3}\tau.a \oplus \frac{1}{3}\tau.b$.
		Figures \ref{fig-motivation-a}, \ref{fig-motivation-b} and \ref{fig-motivation-c} then give the induced transition graph for the $\mathrm{RCCS}_{fs}$ process $S$, $P_1$ and $P_2$, respectively.
	\end{example}
	
	Following~\cite{fu_ModelIndependentApproach_2021}, a collection of probabilistic transitions $\left\{\bigoplus_{i \in I}p_i\tau.T_i \xrightarrow{p_i \tau}T_i\right\}_{i\in I}$ can be treated as a \emph{collective silent transition}, in notation $\bigoplus_{i \in I}p_i\tau.T_i \xrightarrow{\coprod_{i \in I}p_i \tau} \coprod_{i \in I} T_i$, where the  auxiliary notation $\coprod$ is used to indicate a collection of things.
	We extend the notation $\xrightarrow{\coprod_{i \in I}p_i \tau}$ to fixpoint terms as follows: if $T\{\mu X.T / X\} \xrightarrow{\coprod_{i \in I}p_i \tau} \coprod_{i \in I} T_i$, then we define $\mu X.T \xrightarrow{\coprod_{i \in I}p_i \tau} \coprod_{i \in I} T_i$. 
	We give an example as follows, where the notation $[k]$ stands for the set $\{1, \cdots, k\}$.
	\begin{example}
		Let $T = \frac{1}{3}\tau. X \oplus \frac{2}{3}\tau. \mathbf{0}$ and consider the fixpoint process $P = \mu X.T = \mu X.(\frac{1}{3}\tau. X \oplus \frac{2}{3}\tau. \mathbf{0})$. Let $p_1 = \frac{1}{3}$, $p_2 = \frac{2}{3}$, $T_1 = P$ and $T_2 = \mathbf{0}$, then $T\{P / X\} = \bigoplus_{i \in [2]}p_i\tau.T_i $.
		Since $T\{P / X\}$ can perform the collective silent transition $T\{P / X\} \xrightarrow{\coprod_{i \in [2]}p_i \tau} \coprod_{i \in [2]} T_i$, one has $P \xrightarrow{\coprod_{i \in [2]}p_i \tau} \coprod_{i \in [2]} T_i$.
	\end{example}

	An \emph{immediate silent transition of $A$}, denoted by $\mathsf{itr}_A$, is either a non-probabilistic silent transition $A \xrightarrow{\tau} A'$  or a collective silent transition $A \xrightarrow{\coprod_{j \in J}q_j \tau} \coprod_{j \in J} A_j$ (where $\sum_{j \in J} q_j = 1$).
	We use $tgt(\mathsf{itr}_A)$ to denote the \emph{target set} of $\mathsf{itr}_A$, which is defined as $tgt(A \xrightarrow{\tau} A') = \{A'\}$ and $tgt(A \xrightarrow{\coprod_{j \in J}q_j \tau} \coprod_{j \in J} A_j) = \{A_j~|~{j\in J}\}$.
	
	We will use $\mathcal{E}$ to denote an equivalence and $\mathcal{R}$ to denote a binary relation.
	We write $A ~\mathcal{E}~ B$ for $(A,B) \in \mathcal{E}$ and use $[A]_{\mathcal{E}}$ to denote the equivalence class containing $A$.
	For an equivalence $\mathcal{E}$ on $\PRCCS$,
	the notation $\PRCCS / \mathcal{E}$ stands for the set of equivalence classes defined by $\mathcal{E}$.
	Given an equivalence $\mathcal{E}$ on $\PRCCS$, we say that an immediate silent transition $\mathsf{itr} = A \xrightarrow{\tau} A'$ is \emph{state-preserving} if
	$A' ~\mathcal{E}~ A$ and $\mathsf{itr} = A \xrightarrow{\coprod_{j \in J}q_j \tau} \coprod_{j \in J} A_j$ is state-preserving if $A_j ~\mathcal{E}~ A$ for all $j \in J$.
	An immediate silent transition $\mathsf{itr}$ is called \emph{state-changing} if it is not state-preserving.
	
	\subsubsection{Branching bisimilarity}\label{subsec-Bb}
	
	Branching bisimilarity for $\mathrm{RCCS}_{fs}$ model was proposed by Fu \cite{fu_ModelIndependentApproach_2021}. It is a behavioral equivalence compatible with the classical branching bisimilarity \cite{vanglabbeek_BranchingTimeAbstraction_1996}.
	We start with the definition of  \emph{$\epsilon$-tree} \cite{fu_ModelIndependentApproach_2021}.
	Intuitively, $\epsilon$-tree is a probabilistic version of $\Longrightarrow_{\mathcal{E}}$ (a sequence of state-preserving internal actions with regard to the equivalence $\mathcal{E}$ in non-probabilistic setting).

	\begin{definition}[$\epsilon$-tree \cite{fu_ModelIndependentApproach_2021}]\label{defn-epsilontree}
		Let $\mathcal{E}$ be an equivalence on $\PRCCS$ and $A \in \PRCCS$ be a process. An \emph{$\epsilon$-tree $t_{\mathcal{E}}^{A}$ of $A$ with regard to $\mathcal{E}$} is a labeled tree such that the following statements are valid.
		\begin{itemize}
			\item Each node of $t_{\mathcal{E}}^{A}$ is labeled by an element of $[A]_{\mathcal{E}}$, and each edge is labeled by an element of $(0,1]$. The root of $t_{\mathcal{E}}^{A}$ is labeled by $A$.
			\item If a node labeled $B$ has only one child $B'$, then $B \xrightarrow{\tau} B'$ and the edge from $B$ to $B'$ is labeled $1$.
			\item If a node labeled $B$ has $k$ children $B_1, \cdots, B_k$ and each edge from $B$ to $B_i$ is labeled $p_i$, then $B \xrightarrow{\coprod_{i \in [k]}p_i \tau} \coprod_{i \in [k]} B_i$.
		\end{itemize}
	\end{definition}

	An $\epsilon$-tree $t_{\mathcal{E}}^{A}$ of $A$ with regard to $\mathcal{E}$ is called \emph{maximal} if there is no other $\epsilon$-tree $(t')_{\mathcal{E}}^{A}$ such that $t_{\mathcal{E}}^{A}$ is a proper subtree of $(t')_{\mathcal{E}}^{A}$.
	For a tree $t$, a \emph{branch}  is either a finite path from its root to a leaf or an infinite path.
	For a finite path $\pi$, we will use $\pi(i)$ to denote the label of the $i$-th edge in $\pi$ and use $|\pi|$ to denote the length of $\pi$. The probability $\mathsf{P} (\pi)$ of a finite path $\pi$ is $\prod_{i\leq |\pi|} \pi(i)$.
	The \emph{convergence probability} of $t_{\mathcal{E}}^{A}$ is then defined by $\mathsf{P}^c(t_{\mathcal{E}}^{A}) = \lim_{k\to \infty}\mathsf{P}_{k}(t_{\mathcal{E}}^{A})$, where
	$$
	\mathsf{P}_{k}(t_{\mathcal{E}}^{A}) ~\defn~ \sum\{\mathsf{P}
	(\pi) \mid \pi \text{ is a finite branch in $t_{\mathcal{E}}^{A}$ such that $|\pi| \le k$}\}.
	$$
	
	\begin{definition}[Regular and divergent $\epsilon$-tree \cite{fu_ModelIndependentApproach_2021}]\label{defn-regdiv}
		An $\epsilon$-tree $t_{\mathcal{E}}^{A}$ is \emph{regular} if $\mathsf{P}^c(t_{\mathcal{E}}^{A})=1$;
		it is \emph{divergent} if $\mathsf{P}^c(t_{\mathcal{E}}^{A})=0$.
	\end{definition}

	\begin{example}\label{example-etree}
		Let $P_2 = \mu X.(a. \mu Y.(\frac{1}{5}\tau. X \oplus \frac{3}{5}\tau.Y \oplus \frac{1}{5}\tau.\mathbf{0}))$, 
		$P_4 = \mu Y. (\frac{1}{5}\tau. P_2 \oplus \frac{3}{5}\tau.Y \oplus \frac{1}{5}\tau.\mathbf{0})$,
		$P_1 = \mu Z. (\frac{1}{3}\tau.P_2 \oplus \frac{2}{3}\tau. a. (\frac{1}{2}\tau. Z \oplus \frac{1}{2}\tau. \mathbf{0} ))$,
		$P_3 = a. (\frac{1}{2}\tau. P_1 \oplus \frac{1}{2}\tau. \mathbf{0} )$,
		$P_3 = \frac{1}{2}\tau. P_1 \oplus \frac{1}{2}\tau. \mathbf{0}$,
		and $P = \mu W.(\tau. W + \tau.P_1)$.
		The induced transition graph of process $P$ is depicted in Figure~\ref{fig-example-a}.
		Now consider the equivalence $\mathcal{E} = \{\{P, P_1, P_2, P_3\}, \{P_4,P_5\}, \{\mathbf{0}\}\}$. Figure \ref{fig-example-b} and \ref{fig-example-c} then give regular and divergent $\epsilon$-trees of $P$ with regard to $\mathcal{E}$, respectively.
	\end{example}

	\begin{figure}[htb]
		\begin{subfigure}[t]{0.38\textwidth}
			\centering
			\begin{tikzpicture}[on grid,scale=0.8]
				\begin{scope}[every node/.style={transform shape,inner sep=1.5pt}]
					\foreach \pos/\name/\text in {
						{(1,7)/P/{$P$}},
						{(1,5.5)/P1/{$P_1$}},
						{(0,4.5)/P2/{$P_2$}},
						{(2,4.5)/P3/{$P_3$}},
						{(0,3)/P4/{$P_4$}},
						{(2,3)/P5/{$P_5$}},
						{(0,1.5)/Q/{$\mathbf{0}$}}}
					\node (\name) at \pos {\text};
				\end{scope}
				
				\begin{scope}[every node/.style={transform shape, auto=left,inner sep=1pt}, >={Latex[width'=0pt .5, length=5pt]}]
					\foreach \source/ \dest / \text in {
						P1/P3/{\footnotesize$\frac{2}{3}\tau$},
						P3/P5/{\footnotesize$a$}
					},
					\path[->] (\source) edge node {\text} (\dest);
					
					\path[->] (P4) edge [bend left=45] node {\footnotesize$\frac{1}{5}\tau$} (P2);
					\path[->] (P5) edge [bend left=30] node {\footnotesize$\frac{1}{2}\tau$} (Q);
					
					\path[->] (P) edge [out=135, in=45, looseness=6] node {\footnotesize$\tau$} (P);
				\end{scope}	
				
				\begin{scope}[every node/.style={transform shape, auto=right,inner sep=1pt}, >={Latex[width'=0pt .5, length=5pt]}]
					\foreach \source/ \dest / \text in {
						P/P1/{\footnotesize$\tau$},
						P1/P2/{\footnotesize$\frac{1}{3}\tau$},
						P2/P4/{\footnotesize$a$},
						P4/Q/{\footnotesize$\frac{1}{5}\tau$}
					},
					\path[->] (\source) edge node {\text} (\dest);
					
					\path[->] (P5) edge [bend right=75] node {\footnotesize$\frac{1}{2}\tau$} (P1);			
					\path[->] (P4) edge [out=150, in=-150, looseness=7] node {\footnotesize$\frac{3}{5}\tau$} (P4);
				\end{scope}			
			\end{tikzpicture}
			\caption{The induced transition graph of $P$.}\label{fig-example-a}
		\end{subfigure}
		\hspace{1mm}
		\begin{subfigure}[t]{0.25\textwidth}
			\centering
			\vspace{-46mm}
			\begin{tikzpicture}[on grid,scale=0.8]
				\begin{scope}[ every node/.style={transform shape,inner sep=1.5pt}]
					\foreach \pos/\name/\text in {
						{(1,7)/P/{$P$}},
						{(1,5.5)/P1/{$P_1$}},
						{(0,4.5)/P2/{$P_2$}},
						{(2,4.5)/P3/{$P_3$}}}
					\node (\name) at \pos {\text};
				\end{scope}
				
				\begin{scope}[every node/.style={transform shape, auto=left,inner sep=1pt}, >={Latex[width'=0pt .5, length=5pt]}]
					\foreach \source/ \dest / \text in {
						P1/P3/{\footnotesize$\frac{2}{3}$}
					},
					\path[->] (\source) edge node {\text} (\dest);
				\end{scope}	
				
				\begin{scope}[every node/.style={transform shape, auto=right,inner sep=1pt}, >={Latex[width'=0pt .5, length=5pt]}]
					\foreach \source/ \dest / \text in {
						P1/P2/{\footnotesize$\frac{1}{3}$},
						P/P1/{\footnotesize$1$}
					},
					\path[->] (\source) edge node {\text} (\dest);
				\end{scope}	
			\end{tikzpicture}
			\vspace{23mm}
			\caption{A regular $\epsilon$-tree of $P$.}\label{fig-example-b}
		\end{subfigure}
		\begin{subfigure}[t]{0.3\textwidth}  
			\centering
			\vspace{-45.5mm}
			\begin{tikzpicture}[on grid,scale=0.8]
				\begin{scope}[every node/.style={transform shape,inner sep=1.5pt}]
					\foreach \pos/\name/\text in {
						{(1,5.5)/P/{$P$}},
						{(1,4)/P'/{$P$}},
						{(1,2.5)/P''/{$P$}}}
					\node (\name) at \pos {\text};
				\end{scope}
				
				\begin{scope}[every node/.style={transform shape, auto=right,inner sep=1pt}, >={Latex[width'=0pt .5, length=5pt]}]
					\foreach \source/ \dest / \text in {
						P/P'/{\footnotesize$1$},
						P'/P''/{\footnotesize$1$}
					},
					\path[->] (\source) edge node {\text} (\dest);
				\end{scope}	
				
				\begin{scope}[inner sep=0pt,dot/.style={fill,circle,minimum size=1pt}]
					\foreach \pos in {0.3, 0.5, 0.7}
					\node [dot] [below = (\pos) of P''] {};						
				\end{scope}					
			\end{tikzpicture}
			\vspace{13mm}
			\caption{A divergent $\epsilon$-tree of $P$.}\label{fig-example-c}
		\end{subfigure}
		\caption{Example of $\epsilon$-trees.}\label{fig-example}	
	\end{figure}
	
	We then give the definition of $\ell$-transition.
	Intuitively, $\ell$-transition can be seen as a probabilistic generalization of the transition $\Longrightarrow_{\mathcal{E}}\xrightarrow{\ell}$ in classical CCS model, where the state-preserving internal actions sequence $\Longrightarrow_{\mathcal{E}}$ is now replaced by a regular $\epsilon$-tree.
	
	\begin{definition}[$\ell$-transition \cite{fu_ModelIndependentApproach_2021}] \label{defn-LT}
		Suppose $\mathcal{B} \in \PRCCS/\mathcal{E}$ and $(\ell \in \mathcal{A}) \vee \left(\ell = \tau \wedge \mathcal{B} \ne [A]_{\mathcal{E}} \right)$. 
		We say that there is an $\ell$-transition from $A$ to $\mathcal{B}$ with regard to $\mathcal{E}$, written $A \rightsquigarrow_{\mathcal{E}}\xrightarrow{\ell} \mathcal{B}$, if there exists a regular $\epsilon$-tree $t_{\mathcal{E}}^{A}$ satisfying that for every leaf $L$ of $t_{\mathcal{E}}^{A}$, there exists a transition $L \xrightarrow{\ell} L'$ such that $L' \in \mathcal{B}$.
	\end{definition}


	\begin{example}
		Consider the process $P$ and equivalence $\mathcal{E}$ in Example \ref{example-etree}. Since $P_2 \xrightarrow{a} P_4$, $P_3  \xrightarrow{a} P_5$ and $[P_4]_{\mathcal{E}} = [P_5]_{\mathcal{E}} \ne [P]_{\mathcal{E}}$, the regular $\epsilon$-tree in Figure \ref{fig-example-b} then induces the $\ell$-transition $P \rightsquigarrow_{\mathcal{E}}\xrightarrow{a} [P_4]_{\mathcal{E}}$.
	\end{example}
	
	State-changing probabilistic silent actions are characterized by $q$-transitions in \cite{fu_ModelIndependentApproach_2021}.
	Intuitively $q$-transitions capture the idea that after some state-preserving internal actions, every derived process can evolve into some new equivalence
	class with the same conditional probability $q$.
	

	Given a collective silent transition $L \xrightarrow{\coprod_{i \in [k]} \, p_i\tau} \coprod_{i \in [k]} \, L_i$ and an equivalence class $\mathcal{B} \in \PRCCS/\mathcal{E}$, the probability of $L$ arrives at $\mathcal{B}$ is defined by $\mathsf{P}  ( L \xrightarrow{\coprod_{i \in [k]} \, p_i\tau} \mathcal{B} ) =  \sum_{\substack{i \in [k], L_i \in \mathcal{B}}} \,p_i$. 

	Suppose $L \xrightarrow{\coprod_{i \in [k]} \, p_i\tau} \coprod_{i \in [k]} \, L_i$ and $\mathsf{P}  ( L \xrightarrow{\coprod_{i \in [k]} \, p_i\tau} [L]_{\mathcal{E}} ) < 1$, the \emph{normalized probability} is defined as the conditional probability of $L$ arrives at $\mathcal{B}$ given that $L$ leaves $[L]_\mathcal{E}$, i.e., 
	\begin{flalign*}
		\mathsf{P}_{\mathcal{E}} ( L \xrightarrow{\coprod_{i \in [k]} \, p_i\tau} \mathcal{B}  )
		~\defn~  
		\mathsf{P}  ( L \xrightarrow{\coprod_{i \in [k]} \, p_i\tau} \mathcal{B} )/
		(1 - \mathsf{P} ( L \xrightarrow{\coprod_{i \in [k]} \, p_i\tau} [L]_{\mathcal{E}} )).
	\end{flalign*}
	
	\begin{definition}[$q$-transition \cite{fu_ModelIndependentApproach_2021}]\label{defn-qtrans}
		Suppose $\mathcal{B} \in \PRCCS/\mathcal{E}$ and $\mathcal{B} \ne [A]_{\mathcal{E}}$. 
		We say that there is a $q$-transition from $A$ to $\mathcal{B}$ with regard to $\mathcal{E}$, written $A \rightsquigarrow_{\mathcal{E}}\xrightarrow{q} \mathcal{B}$, if there exists a regular $\epsilon$-tree $t_{\mathcal{E}}^{A}$ satisfying that  for every leaf $L$ of $t_{\mathcal{E}}^{A}$, there exists a collective silent transition $L \xrightarrow{\coprod_{i \in [k]}p_i \tau} \coprod_{i \in [k]} L_i$ such that the normalized probability $\mathsf{P}_{\mathcal{E}}(L\xrightarrow{\coprod_{i \in [k]}p_i \tau}\mathcal{B}) = q$.
	\end{definition}
	

	\begin{example}
		Consider the process $P_4$ and equivalence $\mathcal{E}$ in Example \ref{example-etree}. Since the conditional probability of $P_4$ arrives at $[P_2]_{\mathcal{E}}$ given that it leaves $[P_4]_{\mathcal{E}}$ is $\frac{1}{5}/(1-\frac{3}{5}) = \frac{1}{2}$, the $\epsilon$-tree containing only one single node $P_4$ induces the $q$-transition $P_4 \rightsquigarrow_{\mathcal{E}}\xrightarrow{1/2} [P_2]_{\mathcal{E}}$. Similarly, we can show that there exists the $q$-transition $P_4 \rightsquigarrow_{\mathcal{E}}\xrightarrow{1/2} [\mathbf{0}]_{\mathcal{E}}$.
	\end{example}

	Now we present the definition of branching bisimulation for $\mathrm{RCCS}_{fs}$.
	
	\begin{definition}[Branching bisimulation \cite{fu_ModelIndependentApproach_2021}]\label{defn-BB}
		An equivalence $\mathcal{E}$ on $\PRCCS$ is a \emph{branching bisimulation} if, whenever $(A, B) \in \mathcal{E}$, then for all $\mathcal{C} \in \PRCCS/\mathcal{E}$ it holds that:
		\begin{enumerate}
			\item If $A \rightsquigarrow_{\mathcal{E}}\xrightarrow{\ell} \mathcal{C}$ and $(\ell \in \mathcal{A}) \vee \left(\ell = \tau \wedge \mathcal{C} \ne [A]_{\mathcal{E}} \right)$,
			then $B \rightsquigarrow_{\mathcal{E}}\xrightarrow{\ell} \mathcal{C}$.
			\item If $A  \rightsquigarrow_{\mathcal{E}}\xrightarrow{q} \mathcal{C}$ such that $\mathcal{C} \ne [A]_{\mathcal{E}}$, then $B \rightsquigarrow_{\mathcal{E}}\xrightarrow{q} \mathcal{C}$.
		\end{enumerate}
		We write $A \simeq B$ if there is a branching bisimulation $\mathcal{E}$ such that $(A,B) \in \mathcal{E}$.
	\end{definition}

	\begin{example}\label{example-bb}
		Consider the equivalence $\mathcal{E}$ in Example \ref{example-etree}. 
		It is not hard to verify that $\mathcal{E}$ is a branching bisimulation.
		Here the $\ell$-transition $P \rightsquigarrow_{\mathcal{E}}\xrightarrow{a} [P_4]_{\mathcal{E}}$ can be bisimulated by $P_1$, $P_2$ and $P_3$. 
		We also see that the $q$-transition $P_4 \rightsquigarrow_{\mathcal{E}}\xrightarrow{1/2} [P_2]_{\mathcal{E}}$ and $P_4 \rightsquigarrow_{\mathcal{E}}\xrightarrow{1/2} [\mathbf{0}]_{\mathcal{E}}$ can be bisimulated by $P_5$.
	\end{example}
	
	For a relation $\mathcal{E}$ on $\PRCCS$, we write $\mathcal{E}^{*}$ for its equivalence closure.
	
	\begin{lemma}[\cite{fu_ModelIndependentApproach_2021}]
		\label{lem-bb-equivalence}
		If $\{\mathcal{E}_i\}_{i \in I}$ is a collection of branching bisimulations, then $\mathcal{E} = (\bigcup_{i \in I} \mathcal{E}_i)^{*}$ is also a branching bisimulation.
	\end{lemma}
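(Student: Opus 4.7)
The plan is to reduce the claim to a single-step transfer lemma via a chain characterization of the equivalence closure, and then to verify that lemma by a decomposition argument on $\epsilon$-trees. I would first observe that $(A,B)\in\mathcal{E}$ iff there is a finite chain $A=C_0,C_1,\dots,C_n=B$ with $(C_{j-1},C_j)\in\mathcal{E}_{i_j}$ for each $j$, which immediately makes $\mathcal{E}$ an equivalence. A routine induction on $n$ then reduces the verification of clauses (1) and (2) of Definition \ref{defn-BB} for $\mathcal{E}$ to the following single-step lemma: whenever $(A,B)\in\mathcal{E}_i$ and $\mathcal{E}\supseteq\mathcal{E}_i$ is any equivalence, every $\mathcal{E}$-$\ell$-transition and every $\mathcal{E}$-$q$-transition from $A$ is matched by one from $B$. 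Two background facts justify this reduction: since $[A]_{\mathcal{E}_i}\subseteq[A]_{\mathcal{E}}$, every $\mathcal{E}_i$-$\epsilon$-tree is automatically an $\mathcal{E}$-$\epsilon$-tree; and regularity of an $\epsilon$-tree, being determined by its convergence probability, is intrinsic to the tree and does not depend on the indexing equivalence.

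For the single-step lemma, let $t_{\mathcal{E}}^A$ be a regular tree witnessing $A\rightsquigarrow_{\mathcal{E}}\xrightarrow{\ell}\mathcal{C}$. I would extract the maximal top sub-tree $t_0\subseteq t_{\mathcal{E}}^A$ rooted at $A$ whose internal (non-leaf) nodes all lie in $[A]_{\mathcal{E}_i}$. Each leaf of $t_0$ is either a leaf of $t_{\mathcal{E}}^A$ performing the terminal $\ell$-action into $\mathcal{C}$, or an internal node of $t_{\mathcal{E}}^A$ whose immediate silent transition pushes at least some probability mass outside $[A]_{\mathcal{E}_i}$. Viewed from $A$, the tree $t_0$ together with its leaf exit actions realises a mixture of $\mathcal{E}_i$-$\ell$-transitions into $\mathcal{E}_i$-subclasses of $\mathcal{C}$ and $\mathcal{E}_i$-$q$-transitions into other $\mathcal{E}_i$-subclasses of $[A]_{\mathcal{E}}$. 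The bisimulation clauses applied to $(A,B)\in\mathcal{E}_i$ produce matching regular $\mathcal{E}_i$-$\epsilon$-trees at $B$ for each of these individual transitions. For each matched target we continue recursively: the process reached on $B$'s side is $\mathcal{E}_i$-equivalent to the corresponding target on $A$'s side, so the same construction applies. The recursion terminates because $\PRCCS/\mathcal{E}_i$ restricted to $[A]_{\mathcal{E}}$ is finite and because the probability mass delivered to $\mathcal{C}$-leaves converges monotonically to $1$; splicing the matching sub-trees yields the required $\mathcal{E}$-$\epsilon$-tree at $B$.

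The principal obstacle is two-fold. First, Definition \ref{defn-BB} supplies independent matching trees at $B$ for each outgoing $\mathcal{E}_i$-transition one at a time, but combining these independent trees into a single coherent $\mathcal{E}$-$\epsilon$-tree at $B$ whose leaf probabilities add up consistently is the technical heart of the argument; one must show that the individual matching trees can be selected and assembled so that the total mass reaching $\mathcal{C}$ equals $1$ (for the $\ell$-case) or equals exactly the prescribed conditional probability (for the $q$-case). Second, the normalization in Definition \ref{defn-qtrans} is taken relative to $[L]_{\mathcal{E}}$, which is strictly coarser than $[L]_{\mathcal{E}_i}$, so intermediate $\mathcal{E}_i$-boundary crossings within $[A]_{\mathcal{E}}$ are \emph{not} exits from $[L]_{\mathcal{E}}$; the key bookkeeping fact is that $\mathcal{E}_i$-bisimulation preserves $\mathcal{E}_i$-level normalized probabilities, and these combine multiplicatively along the recursion to give exactly the target $\mathcal{E}$-level $q$. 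Once this probability accounting is verified, the single-step lemma is complete and the chain induction delivers the conclusion.
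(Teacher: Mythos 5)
First, a point of reference: the paper does not prove this lemma at all --- it is imported verbatim from \cite{fu_ModelIndependentApproach_2021} and used as a black box (e.g.\ in the proof of Lemma \ref{lem-dsbb-equivalence}). So there is no in-paper proof to compare against; your attempt has to stand on its own. Its overall architecture is reasonable and is in the same spirit as the tree-by-tree constructions the paper does carry out (the induction on the structure of $t_{\mathcal{E}}^A$ in Lemma \ref{lem-dsbb-equivalence}, the splicing-and-regularity estimate in Lemma \ref{lem-tauEC}): reduce to a single-step transfer along a chain, decompose the witnessing $\mathcal{E}$-tree into its maximal $[A]_{\mathcal{E}_i}$-internal top part, match each $\mathcal{E}_i$-level exit using Definition \ref{defn-BB}, and splice.

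The problem is that the two places you label ``the principal obstacle'' are exactly where the proof lives, and you do not resolve either of them. (i) Definition \ref{defn-BB} gives, for each target class $\mathcal{C}$ separately, \emph{some} regular $\epsilon$-tree of $B$ witnessing the match; when a leaf $L$ of $t_0$ has a collective transition distributing mass over several $\mathcal{E}_i$-classes simultaneously, you need a \emph{single} tree of $B$ whose leaves realise all of those normalized probabilities at once. Writing ``one must show that the individual matching trees can be selected and assembled so that the total mass \dots equals $1$'' states the obligation; it does not discharge it, and it is not obvious --- this uniformisation of per-class witnesses is a genuine lemma about $\epsilon$-trees, not bookkeeping. (ii) Your termination argument is off target: the construction produces an infinite tree in general, and what must be proved is that the spliced tree is \emph{regular} ($\mathsf{P}^c = 1$) and, for the $q$-case, that \emph{every} leaf (not the aggregate mass) has $\mathcal{E}$-normalized probability exactly $q$ into $\mathcal{C}$ --- Definition \ref{defn-qtrans} is a uniform per-leaf condition, whereas your convergence claim reads as a statement about total delivered mass. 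Regularity of an infinite splice of regular trees does not follow from finiteness of $\PRCCS/\mathcal{E}_i$ restricted to $[A]_{\mathcal{E}}$; it needs a quantitative argument (compare the geometric bound in Lemma \ref{lem-EcAndTree} or the $\delta/2+\delta/2$ estimate in Lemma \ref{lem-tauEC}), and the per-leaf identity $\mathsf{P}_{\mathcal{E}}(L'\to\mathcal{C}) = \mathsf{P}_{\mathcal{E}}(L\to\mathcal{C})$ requires the explicit computation relating $\mathcal{E}$-normalization to $\mathcal{E}_i$-normalization (using $[L]_{\mathcal{E}_i}\subseteq[L]_{\mathcal{E}}$ and the fact that the $\mathcal{E}_i$-match preserves \emph{all} the $\mathcal{E}_i$-level conditional probabilities), which you gesture at but never perform. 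As it stands the proposal is a plausible plan with its two hardest steps left as assertions, so it cannot be accepted as a proof.
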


	\begin{theorem}\label{thm-bb-equivalence}
	The relation $\simeq$ is the largest branching bisimulation, and it is an equivalence relation.
	\end{theorem}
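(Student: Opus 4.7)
The plan is to derive this theorem almost directly from Lemma \ref{lem-bb-equivalence}. Unpacking Definition \ref{defn-BB}, the relation $\simeq$ is, by its very definition, $\bigcup \{\mathcal{E} \mid \mathcal{E} \text{ is a branching bisimulation on } \PRCCS\}$. Since the collection of all branching bisimulations is in particular a collection of equivalences on $\PRCCS$, Lemma \ref{lem-bb-equivalence} tells us that $\mathcal{F} \defn (\bigcup \{\mathcal{E} \mid \mathcal{E} \text{ is a branching bisimulation}\})^{*}$ is itself a branching bisimulation. My first step is therefore to record this and to observe that $\mathcal{F}$ is an equivalence relation by construction (being an equivalence closure).

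Next I would show $\simeq\; =\; \mathcal{F}$. The inclusion $\simeq\; \subseteq\; \mathcal{F}$ is immediate from the definitions of union and equivalence closure. For the reverse inclusion, I use the fact that $\mathcal{F}$ is itself a branching bisimulation; hence for every pair $(A,B) \in \mathcal{F}$ there exists a branching bisimulation (namely $\mathcal{F}$) containing it, and thus $(A,B) \in\; \simeq$. This gives both that $\simeq\; =\; \mathcal{F}$ is a branching bisimulation and that it is an equivalence relation.

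Finally, to conclude that $\simeq$ is the \emph{largest} branching bisimulation, I would argue as follows: if $\mathcal{E}$ is any branching bisimulation, then $\mathcal{E} \subseteq \bigcup \{\mathcal{E}' \mid \mathcal{E}' \text{ is a branching bisimulation}\} \subseteq \mathcal{F} =\; \simeq$, so every branching bisimulation is contained in $\simeq$.

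The only delicate point, and the one that in principle could go wrong, is the appeal to Lemma \ref{lem-bb-equivalence}: that lemma requires the closure under equivalence of a \emph{union} of branching bisimulations to again be a branching bisimulation, a non-trivial fact since the $\ell$-transitions and $q$-transitions of Definitions \ref{defn-LT} and \ref{defn-qtrans} are defined relative to the equivalence $\mathcal{E}$ itself, so enlarging $\mathcal{E}$ can in principle create new regular $\epsilon$-trees that need to be matched. However, since that bookkeeping is entirely absorbed into Lemma \ref{lem-bb-equivalence}, the proof of Theorem \ref{thm-bb-equivalence} itself is essentially bookkeeping in set-theoretic closure, and I do not expect any genuine obstacle beyond writing the three inclusions above cleanly.
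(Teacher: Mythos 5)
Your proposal is correct and follows exactly the route the paper intends: the paper states Theorem \ref{thm-bb-equivalence} immediately after Lemma \ref{lem-bb-equivalence} with no written proof, precisely because the standard closure argument you spell out (take $\mathcal{F}=(\simeq)^{*}$, invoke the lemma, show $\mathcal{F}=\;\simeq$, and note every branching bisimulation sits inside the union) is the intended derivation. You also correctly identify that all the real work lives in Lemma \ref{lem-bb-equivalence}.
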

	
    \subsubsection{Weak bisimilarity}\label{subsec-Wb}
    \vspace*{-3pt}
    We start by recalling the necessary notions for defining the weak bisimilarity for $\mathrm{RCCS}_{fs}$.
    A probabilistic distribution over a countable set $S$ is a function $\rho: S \to [0,1]$ such that $\sum_{A \in S} \rho(A) = 1$.
    We denote by $\mathsf{Distr}(S)$ the set of probabilistic distributions over $S$.
    For $S' \subseteq S$, we define $\rho(S') = \sum_{A \in S'} \rho(A)$.
    We use $\delta_A$ to denote the \emph{Dirac} distributions, defined by $\delta_A(A)=1$ and $\delta_A(A')=0$ for all $A'\ne A$.
    The \emph{support} of a probabilistic distribution $\rho$, denoted by $\mathsf{Supp}(\rho)$, is the set $\{A\mid \rho(A)>0\}$.
    For a distribution with finite support, we also write $\rho = \{(A: \rho(A))\mid A \in \mathsf{Supp}(\rho)\}$ to enumerate the probability associated with each element of $\mathsf{Supp}(\rho)$.
    Given a countable set of distributions $\{\rho_i \in \mathsf{Distr}(S)\}_{i \in I}$ and a countable set of real numbers $\{c_i \in [0,1]\}_{i \in I}$ such that $\sum_{i \in I} c_i = 1$, we say that $\rho$ is the \emph{convex combination} of $\{\rho_i\}_{i\in I}$ according to $\{c_i\}_{i\in I}$, denoted by $\sum_{i\in I} c_i\cdot \rho_i$, if for each $A \in S$, $\rho(A) = \sum_{i\in I} c_i\cdot \rho_i(A)$.
    
    To define the weak bisimilarity in $\mathrm{RCCS}_{fs}$, we need to introduce \emph{a probabilistic labeled transition system} (pLTS for short).
    The system is defined in Figure \ref{fig-pLTS}, where $\beta \in Act$ and the probabilistic transition relation $\longrightarrow\;\subseteq \PRCCS\times Act \times \mathsf{Distr}(\PRCCS)$. Although we use the same symbol $\longrightarrow$ for the LTS and pLTS rules, its meaning should be clear from the context.
    \begin{figure}[htb]	
    	\vspace{-5mm}
    	\begin{center}
    		\[	\frac{}{~\sum_{i \in I} \alpha_i.T_i \xrightarrow{\alpha_i} \delta_{T_i}~} \qquad
    		\frac{}{~\bigoplus_{i \in I}p_i\tau.T_i \xrightarrow{\tau} \{T_i: p_i\}_{i\in I}~} \qquad
    		\frac{T\{\mu X.T/X\} \xrightarrow{\beta} \rho}{\mu X.T \xrightarrow{\beta} \rho}
    		\]
    	\end{center}
    	\vspace{-2mm}
    	\caption{pLTS for $\mathrm{RCCS}_{fs}$.}\label{fig-pLTS}
    	\vspace{-2mm}
    \end{figure}
    
    Let $\mathsf{Tr} = \{(A,\alpha,\rho)\mid \text{$A \xrightarrow{\alpha} \rho$ can be derived in the pLTS}\}$ be the set of transitions.
    For a transition $\mathsf{tr} = (A,\alpha,\rho)$, we denote by $src(\mathsf{tr})$ the source process $A$, by $act(\mathsf{tr})$ the action $\alpha$, and by $\rho_{\mathsf{tr}}$ the evolved distribution $\rho$.
    Let $\mathsf{Tr}(\alpha) = \{\mathsf{tr}\in \mathsf{Tr} \mid act(\mathsf{tr}) = \alpha\}$.
    An \emph{execution fragment} of some process $A_0$ is a finite or infinite sequence of alternating states and actions $\omega = A_0 \alpha_0 A_1 \alpha_1 A_2 \alpha_2 \cdots$ such that $A_i \xrightarrow{\alpha_i} \rho_i$ and $\rho_i(A_{i+1}) >0$.
    If $\omega$ is finite, we denote by $\mathsf{last}(\omega)$ the last state of $\omega$.
    We denote by $frags^{*}(A)$ and $frags(A)$ the set of finite and all execution fragments of $A$, respectively.
    Given $\alpha\in Act$, we define $\widehat{\alpha} = \alpha$ if $\alpha \in \mathcal{A}$, and $\widehat{\alpha} = \epsilon$ (the empty string) if $\alpha =\tau$.
    The \emph{trace} of an execution fragment $\omega$ is the sub-sequence of external actions of $\omega$, i.e., $trace(\omega) = \widehat{\alpha_0} \widehat{\alpha_1}\widehat{\alpha_2}\cdots$.
    
    In \cite{turrini_PolynomialTimeDecision_2015}, the notion of \emph{scheduler} is used  to resolve non-determinism.
    To a process $A$, a scheduler is a function $\sigma:frags^{*}(A) \to \mathsf{Distr}(\mathsf{Tr}\cup \{\bot\})$ such that for each $\omega \in frags^{*}(A), \sigma(\omega) \in \mathsf{Distr}(\{\mathsf{tr}\in\mathsf{Tr}\mid src(\mathsf{tr})= last(\omega)\}\cup \{\bot\})$. Intuitively a scheduler specifies a distribution over possible next transitions starting from state $last(\omega)$. 
    If a scheduler takes the special value $\bot$, it chooses no further transition and terminates.
    We call a scheduler (of $A$) $Dirac$ if for each $\omega\in frags^{*}(A)$, $\sigma(\omega) = \delta_{\mathsf{tr}}$ for some $\mathsf{tr} \in \mathsf{Tr}$ or $\sigma(\omega) = \delta_{\bot}$.
    A scheduler $\sigma$ and a process $A$ induce a probability distribution $\rho_{\sigma,A}$ over finite execution fragments as follows.
    The basic measurable events are the \emph{cones} of finite execution fragments, where the cone of $\omega$ is defined by $C_{\omega} = \{\omega'\in frags(A)\mid \text{$\omega$ is a prefix of $\omega'$}\}$.
    The probability $\rho_{\sigma,A}$ of a cone $C_{\omega}$ is defined recursively as follows:
    $$
    \rho_{\sigma,A}(C_{\omega})= \begin{cases}
    	1, & \text{if $\omega = A$,} \\
    	0, & \text{if $\omega = B$ for a process $B\ne A$,} \\
    	\rho_{\sigma,A}(C_{\omega'})\cdot\sum_{\mathsf{tr}\in \mathsf{Tr}(\alpha)}\sigma(\omega')(\mathsf{tr})\cdot \rho_{\mathsf{tr}}(B), & \text{if $\omega = \omega'\alpha B$.}
    \end{cases}
    $$
    Finally, for any $\omega \in frags^{*}(A)$, $\rho_{\sigma,A}(\omega)$
    is defined as $\rho_{\sigma,A}(\omega) = \rho_{\sigma,A}(C_{\omega}) \cdot \sigma(\omega)(\bot)$, where $\sigma(\omega)(\bot)$ is the probability of choosing no further transition  (i.e., terminating) after $\omega$.
    
    The next definition of weak combined transition is standard \cite{segala_ModelingVerificationRandomized_1995, turrini_PolynomialTimeDecision_2015}.
    The fact that state $A$ can weakly transfer to  distribution $\rho$ by executing an observable action $\alpha$ is defined as follows: if there exists a scheduler $\sigma$, from $A$ by doing $\alpha$ and a finite number of silent actions following $\sigma$, the probability of the final state being $B$ equals $\rho(B)$.
    \begin{definition}[Weak combined transition]
    	Given a process $A \in \PRCCS$, an action $\alpha \in Act$ and a distribution $\rho \in \mathsf{Distr}(\PRCCS)$, we say that there is a \emph{weak combined transition} from $A$ to $\rho$ labeled by $\alpha$, denoted by $A \stackrel{\alpha}{\Longrightarrow}_{c} \rho$, if there exists a scheduler $\sigma$ such that the following holds for the induced distribution $\rho_{\sigma,A}$:
    	\begin{enumerate}
    		\item $\rho_{\sigma,A}(frags^{*}(A)) = 1$.
    		\item For each $\omega \in frags^{*}(A)$, if $\rho_{\sigma,A}(\omega)>0$ then $trace(\omega) = \widehat{\alpha}$.
    		\item For each process $B$, $\rho_{\sigma,A}\{\omega\in frags^{*}(A)\mid last(\omega)=B\} = \rho(B)$.
    	\end{enumerate}
    \end{definition}
    
%

     \begin{definition}[Relation lifting \cite{turrini_PolynomialTimeDecision_2015}]
    	Given a binary relation $\mathcal{R}\subseteq X \times Y$. The lifting of $\mathcal{R}$ is the relation $\mathcal{R}^{\dagger} \subseteq \mathsf{Distr}(X) \times\mathsf{Distr}(Y)$ satisfying that $(\rho_X,\rho_Y) \in \mathcal{R}^{\dagger}$ iff there exists a \emph{weighting function} $\mathsf{w}: X \times Y \to [0,1]$ such that
    	\begin{itemize}
    		\item $\mathsf{w}(x,y) > 0$ implies $(x,y) \in \mathcal{R}$,
    		\item $\sum_{y \in Y} \mathsf{w}(x,y) = \rho_X(x)$, and 
    		\item $\sum_{x \in X} \mathsf{w}(x,y) = \rho_Y(y)$.
    	\end{itemize}
    \end{definition}
    The following equivalent definition of relation lifting comes from \cite{deng_LocalAlgorithmChecking_2009}.
    \begin{proposition}[\cite{deng_LocalAlgorithmChecking_2009}, Proposition $2.3$ $(1)$]\label{prop-lifting}
    	Given a binary relation $\mathcal{R}\subseteq X \times Y$ and two distributions $\rho_X \in \mathsf{Distr}(X)$, $\rho_Y \in \mathsf{Distr}(Y)$.
    	Then $(\rho_X,\rho_Y) \in \mathcal{R}^{\dagger}$ iff there exists an index set $I$ and a set of weights $p_i \in (0,1]$ with $\sum_{i \in I} p_i = 1$ such that 
    	\begin{itemize}
    		\item $\rho_X= \sum_{i \in I}p_i \delta_{x_i}$,
    		\item $\rho_Y = \sum_{i \in I}p_i \delta_{y_i}$, and 
    		\item $(x_i,y_i) \in \mathcal{R}$ for all $i \in I$.
    	\end{itemize}
    \end{proposition}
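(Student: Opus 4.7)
The plan is to prove the biconditional by showing that the two formulations encode the same data in slightly different form: a weighting function on $X \times Y$ on the one hand, and an indexed convex decomposition of $(\rho_X,\rho_Y)$ into Dirac-pair measures on the other. Both directions are constructive and essentially amount to a bookkeeping argument, with no real analytic content, so I expect no serious obstacle beyond carefully tracking supports.

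For the $(\Leftarrow)$ direction, assume an index set $I$, weights $\{p_i\}_{i\in I}$, and pairs $\{(x_i,y_i)\}_{i\in I}$ with the three stated properties. I would define the candidate weighting function
\[
\mathsf{w}(x,y) \;=\; \sum_{\{i\in I\,:\,x_i=x,\;y_i=y\}} p_i.
\]
Then I check the three clauses of the lifting definition. If $\mathsf{w}(x,y)>0$, some index $i$ contributes, so $(x,y)=(x_i,y_i)\in\mathcal{R}$. Summing over $y$ gives $\sum_y\mathsf{w}(x,y)=\sum_{i:\,x_i=x}p_i$, which equals $\bigl(\sum_i p_i\delta_{x_i}\bigr)(x)=\rho_X(x)$ by the assumed decomposition of $\rho_X$; the marginal on $Y$ is symmetric.

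For the $(\Rightarrow)$ direction, assume a weighting function $\mathsf{w}$ witnessing $(\rho_X,\rho_Y)\in\mathcal{R}^{\dagger}$. Let $I=\{(x,y)\in X\times Y\mid \mathsf{w}(x,y)>0\}$, which is countable because $\sum_{x,y}\mathsf{w}(x,y)=\sum_x\rho_X(x)=1$. For each $i=(x,y)\in I$ set $p_i=\mathsf{w}(x,y)$, $x_i=x$, $y_i=y$. Then $p_i\in(0,1]$, the first clause $\mathsf{w}(x,y)>0\Rightarrow(x,y)\in\mathcal{R}$ immediately gives $(x_i,y_i)\in\mathcal{R}$, and the two marginal conditions of the weighting function yield, for every $x\in X$,
\[
\Bigl(\sum_{i\in I}p_i\delta_{x_i}\Bigr)(x) \;=\; \sum_{y:\,\mathsf{w}(x,y)>0}\mathsf{w}(x,y) \;=\; \sum_{y\in Y}\mathsf{w}(x,y) \;=\; \rho_X(x),
\]
so $\rho_X=\sum_i p_i\delta_{x_i}$, and similarly $\rho_Y=\sum_i p_i\delta_{y_i}$; summing either marginal also gives $\sum_i p_i=1$.

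The only step that might trip up a careless reader is ensuring countability of $I$ (so that the decomposition is of the form required in the statement) and confirming that zero-weight pairs can be safely discarded without breaking either marginal identity; both points are handled automatically by the fact that $\rho_X$ and $\rho_Y$ are probability measures with countable support, so all relevant sums are unconditionally absolutely convergent.
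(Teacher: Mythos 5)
Your argument is correct: both directions are the standard translation between a weighting function and an indexed Dirac decomposition, and you handle the only delicate points (countability of the index set and the marginal computations via Tonelli-type rearrangement of nonnegative sums) properly. Note that the paper itself offers no proof of this statement --- it is imported verbatim as Proposition 2.3(1) of the cited work of Deng et al. --- so there is no in-paper argument to compare against; your proof matches the expected folklore one.
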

	An immediate result of Proposition \ref{prop-lifting} is the following theorem.
	\begin{theorem}[\cite{deng_LocalAlgorithmChecking_2009}, Proposition $2.3$ $(2)$]
		Given an equivalence $\mathcal{E}$ on a set $X$ and two distributions $\rho_1, \rho_2 \in  \mathsf{Distr}(X)$. Then $(\rho_1,\rho_2) \in \mathcal{E}^{\dagger}$ iff for each $\mathcal{C} \in X \slash \mathcal{E}$, $\rho_1(\mathcal{C}) = \rho_2(\mathcal{C})$.
	\end{theorem}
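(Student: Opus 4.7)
The plan is to prove both implications by direct appeal to Proposition \ref{prop-lifting}, which characterizes $\mathcal{E}^\dagger$ in terms of convex combinations of Dirac distributions indexed by pairs of $\mathcal{E}$-related elements. Since $\mathcal{E}$ is an equivalence, the pair-condition $(x_i, y_i)\in \mathcal{E}$ forces $x_i$ and $y_i$ to inhabit a common equivalence class, and this single observation drives both directions.

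For the forward direction, suppose $(\rho_1, \rho_2)\in \mathcal{E}^\dagger$. By Proposition \ref{prop-lifting}, there exist an index set $I$, weights $\{p_i\}_{i \in I}$ with $\sum_{i\in I} p_i = 1$, and pairs $(x_i, y_i) \in \mathcal{E}$ such that $\rho_1 = \sum_{i \in I} p_i \delta_{x_i}$ and $\rho_2 = \sum_{i \in I} p_i \delta_{y_i}$. For any $\mathcal{C}\in X/\mathcal{E}$, the two index sets $\{i : x_i \in \mathcal{C}\}$ and $\{i : y_i \in \mathcal{C}\}$ coincide, so $\rho_1(\mathcal{C}) = \sum_{i : x_i \in \mathcal{C}} p_i = \sum_{i : y_i \in \mathcal{C}} p_i = \rho_2(\mathcal{C})$.

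For the backward direction, assume $\rho_1(\mathcal{C}) = \rho_2(\mathcal{C}) =: r_{\mathcal{C}}$ for every $\mathcal{C}\in X/\mathcal{E}$, and construct the desired representation class by class. For each class $\mathcal{C}$ with $r_{\mathcal{C}} > 0$ and each pair $(x, y) \in \mathcal{C}\times\mathcal{C}$ with $\rho_1(x)\rho_2(y) > 0$, introduce an index $i_{x,y}$ with weight $p_{i_{x,y}} := \rho_1(x)\cdot\rho_2(y)/r_{\mathcal{C}}$ and set $(x_{i_{x,y}}, y_{i_{x,y}}) := (x,y)$. A routine computation using $\sum_{y\in\mathcal{C}}\rho_2(y) = r_{\mathcal{C}}$ yields $\sum_{y\in\mathcal{C}} p_{i_{x,y}} = \rho_1(x)$ for every $x \in \mathcal{C}$, giving $\rho_1 = \sum_{i} p_i\, \delta_{x_i}$; the symmetric identity produces $\rho_2 = \sum_{i} p_i\, \delta_{y_i}$; and each selected pair lies in $\mathcal{E}$ by construction. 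Proposition \ref{prop-lifting} then delivers $(\rho_1, \rho_2) \in \mathcal{E}^\dagger$.

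The argument is essentially bookkeeping, and no genuine obstacle arises. The only points to watch are the harmless exclusion of classes of zero probability (which contribute no indices), and the need to observe that the ``product'' indexing used in the backward direction remains countable because the supports of $\rho_1$ and $\rho_2$ are countable, so that Proposition \ref{prop-lifting} applies directly without measure-theoretic subtleties.
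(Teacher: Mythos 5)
Your proof is correct and follows exactly the route the paper intends: the paper omits the proof entirely, merely remarking that the theorem is ``an immediate result of Proposition \ref{prop-lifting}'', and your argument fills in precisely that derivation (the forward direction from the coincidence of the index sets within each class, the backward direction via the standard product-weight construction $\rho_1(x)\rho_2(y)/r_{\mathcal{C}}$). Both directions check out, including the countability and zero-mass caveats you flag.
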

	Now for an equivalence $\mathcal{E}$, we often use $\rho_1=_{\mathcal{E}} \rho_2$ to denote that $(\rho_1,\rho_2) \in \mathcal{E}^{\dagger}$.
    The next definition resembles the traditional conception for probabilistic automata \cite{he_DivergencesensitiveWeakProbabilistic_2023, turrini_PolynomialTimeDecision_2015}.
    \begin{definition}[Weak bisimulation]\label{defn-WB}
    	An equivalence $\mathcal{E}$ on $\PRCCS$ is a \emph{weak bisimulation} if, for all $(A,B) \in \mathcal{E}$, if $A \xrightarrow{\alpha} \rho_A$, then there exists $\rho_B$ such that $B \stackrel{\alpha}{\Longrightarrow}_{c} \rho_B$ and $\rho_A\; \mathcal{E}^{\dagger} \;\rho_B$.
    	
    	We write $A \approx B$ if there is a weak bisimulation $\mathcal{E}$ such that $(A,B) \in \mathcal{E}$.
    \end{definition}
    
    \begin{theorem}[\cite{turrini_PolynomialTimeDecision_2015}]\label{thm-wb-equivalence}
	The relation $\approx$ is the largest weak bisimulation, and it is an equivalence relation.
    \end{theorem}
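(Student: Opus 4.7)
The plan is to follow the standard bisimulation methodology: exhibit $\approx$ as the equivalence closure of the union of all weak bisimulations and then verify, in one blow, that this closure is itself a weak bisimulation. Reflexivity of $\approx$ would be handled by noting that the identity relation is trivially a weak bisimulation, since every transition $A \xrightarrow{\alpha}\rho_A$ is matched by itself (a one-step weak combined transition) with $\rho_A\;\mathrm{Id}^{\dagger}\;\rho_A$. Symmetry is immediate because each witnessing relation $\mathcal{E}$ is already required to be an equivalence. The substantive work lies in transitivity and the largest-property, both of which reduce to the following closure lemma: if $\{\mathcal{E}_i\}_{i\in I}$ is any family of weak bisimulations, then $\mathcal{E} \defn (\bigcup_{i\in I}\mathcal{E}_i)^{*}$ is a weak bisimulation.

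To prove this closure lemma, I would first establish an auxiliary extension property, namely that the matching condition of Definition~\ref{defn-WB} lifts from strong transitions to weak combined transitions: whenever $\mathcal{F}$ is a weak bisimulation, $(A,B)\in\mathcal{F}$, and $A \stackrel{\alpha}{\Longrightarrow}_{c}\rho_A$, there exists $\rho_B$ with $B \stackrel{\alpha}{\Longrightarrow}_{c}\rho_B$ and $\rho_A\;\mathcal{F}^{\dagger}\;\rho_B$. This would be proved by induction on the structure of the scheduler realising $A \stackrel{\alpha}{\Longrightarrow}_{c}\rho_A$: each $\tau$-step or $\alpha$-step taken at a state in the support is matched by a weak combined transition at the $\mathcal{F}$-equivalent state on the $B$ side, and the resulting local schedulers are glued via Proposition~\ref{prop-lifting}. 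A companion observation is that weak combined transitions are closed under convex combination: if $\rho=\sum_{i} c_i\cdot \delta_{A_i}$ and each $A_i \stackrel{\alpha}{\Longrightarrow}_{c}\rho_i$, then $\sum_{i} c_i\cdot \rho_i$ is again the image of a weak combined transition out of $\rho$ (via the obviously defined combined scheduler).

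With these two tools, the closure lemma follows by a chain argument. Given $(A,C)\in\mathcal{E}$ there is a finite zig-zag $A = A_0,A_1,\dots,A_n = C$ with each $(A_{j},A_{j+1})$ lying in some $\mathcal{E}_{i_j}$. If $A_0 \xrightarrow{\alpha}\rho_0$, apply the bisimulation property of $\mathcal{E}_{i_0}$ to obtain $A_1 \stackrel{\alpha}{\Longrightarrow}_{c}\rho_1$ with $\rho_0\;\mathcal{E}_{i_0}^{\dagger}\;\rho_1$; then apply the extension property with respect to $\mathcal{E}_{i_1}$ (using that $(A_1,A_2)\in\mathcal{E}_{i_1}$, combined with the convex-combination closure to handle the distribution $\rho_1$ pointwise) to obtain $A_2\stackrel{\alpha}{\Longrightarrow}_{c}\rho_2$ with $\rho_1\;\mathcal{E}_{i_1}^{\dagger}\;\rho_2$. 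Iterating through the zig-zag and using that each $\mathcal{E}_{i_j}^{\dagger}\subseteq \mathcal{E}^{\dagger}$, transitivity of $\mathcal{E}^{\dagger}$ yields $\rho_0\;\mathcal{E}^{\dagger}\;\rho_n$. Transitivity of $\approx$ is the instance $n=2$, and the largest property is $\approx = \mathcal{E}$ when the family ranges over all weak bisimulations.

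The main obstacle I anticipate is the extension-to-weak-combined-transitions lemma. Pushing matching across a scheduler requires constructing a global matching scheduler on the $B$-side from a family of local matching schedulers, and this is where the distribution-based semantics really bites: one must carefully use Proposition~\ref{prop-lifting} to split $\rho_A$ along the weighting function realising $\rho_A\;\mathcal{F}^{\dagger}\;\rho_B$, show that each slice gives rise to a coherent subprobability sub-scheduler, and then reassemble. Once this bookkeeping is set up, the remainder of the argument is essentially formal.
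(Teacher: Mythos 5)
Your proposal is correct and follows the standard argument: the paper states this theorem by citation to Turrini and Hermanns without reproducing a proof, and the proof in that line of work (going back to Segala) is exactly your route — the closure lemma for equivalence-closed unions of weak bisimulations, driven by the lemma that matching extends from single transitions to weak combined transitions, plus closure of weak combined transitions under convex combination. The only cosmetic quibble is that in your chain argument the extension property applied to $A_1 \stackrel{\alpha}{\Longrightarrow}_{c}\rho_1$ and $(A_1,A_2)\in\mathcal{E}_{i_1}$ already yields $\rho_2$ directly; the convex-combination closure is needed inside the proof of that extension lemma rather than as an extra step of the chain.
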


	\subsection{The comparison of branching and weak bisimulation semantics}\label{subsec-BbInWb}
	The comparison between branching and weak semantics for probabilistic models is not trivial because their definitions are quite different.
	To establish the containment relationship between them, we need to find a way to relate $\epsilon$-trees with probability distributions. To this end, we start by proving some technical lemmas.
	Lemma \ref{lem-distribution} states that given a set of distributions with the same conditional probability of leaving some $[A]_{\mathcal{E}}$ for any other equivalence class $\mathcal{C}$, the convex combination of these distributions will not change the corresponding conditional probability.
	
	\begin{lemma}\label{lem-distribution}
		Given a process $A$ and an equivalence $\mathcal{E}$ on $\PRCCS$.
		Let $\{\rho_i\}_{i \in I}$ be a countable set of distributions satisfying the following for all $i \in I$:
		\begin{itemize}
			\item $\rho_i([A]_{\mathcal{E}}) = p_i <1$.
			\item For each equivalence class $\mathcal{C} \in \PRCCS/\mathcal{E}$ and $\mathcal{C} \ne [A]_{\mathcal{E}}$, the conditional probability ${\rho_i}_{\mid !A}(\mathcal{C})$ is a constant $q_{\mathcal{C}}$, where ${\rho_i}_{\mid !A}(\mathcal{C}) = \rho_i(\mathcal{C})/(1 - \rho_i([A]_{\mathcal{E}}))$.
		\end{itemize}
		Then for any convex combination $\rho = \sum_{i\in I} c_i \rho_i$ of $\{\rho_i\}_{i \in I}$ according to $\{c_i\}_{i\in I}$, we have $\rho([A]_{\mathcal{E}}) <1$ and ${\rho}_{\mid !A}(\mathcal{C}) = q_{\mathcal{C}}$ for all $\mathcal{C} \in \PRCCS/\mathcal{E}$ and $\mathcal{C} \ne [A]_{\mathcal{E}}$.
	\end{lemma}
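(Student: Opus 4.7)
The plan is to establish both conclusions by direct algebraic manipulation, reading the hypothesis on the conditional probabilities in its unconditioned form $\rho_i(\mathcal{C}) = q_{\mathcal{C}}(1-p_i)$ and then exploiting the linearity of convex combinations. The key identity to prepare is that for any set $S \subseteq \PRCCS$, $\rho(S) = \sum_{i \in I} c_i\, \rho_i(S)$; this follows from the definition of a convex combination by interchanging two non-negative sums (Tonelli), which is legitimate because $I$ is countable and every term is non-negative.

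For the first claim, $\rho([A]_{\mathcal{E}}) < 1$, I would rewrite $\rho([A]_{\mathcal{E}}) = \sum_{i \in I} c_i p_i = 1 - \sum_{i \in I} c_i (1 - p_i)$ and argue that the subtracted series is strictly positive. Since $\sum_{i \in I} c_i = 1$, at least one coefficient $c_{i_0}$ must be strictly positive, and the hypothesis $p_{i_0} < 1$ then makes the single term $c_{i_0}(1 - p_{i_0})$ strictly positive, so the whole non-negative series is strictly positive. This extraction of one positive index is exactly the place where the strict (not merely non-strict) inequality $p_i<1$ is used, and it is needed because for a countably infinite $I$ one cannot simply appeal to $\max_i p_i$.

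For the second claim, I would substitute the rewritten hypothesis into the expression for $\rho(\mathcal{C})$, obtaining $\rho(\mathcal{C}) = \sum_{i \in I} c_i \rho_i(\mathcal{C}) = q_{\mathcal{C}}\sum_{i \in I} c_i (1 - p_i) = q_{\mathcal{C}}\bigl(1 - \rho([A]_{\mathcal{E}})\bigr)$, and then divide by $1 - \rho([A]_{\mathcal{E}})$, which is nonzero by the first part. This yields ${\rho}_{\mid !A}(\mathcal{C}) = q_{\mathcal{C}}$ immediately.

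There is no genuine obstacle here; the lemma is, morally, just the statement that a convex combination of probability measures all agreeing on their conditional distribution outside $[A]_{\mathcal{E}}$ inherits that common conditional distribution. The only points requiring mild care are the countable interchange of non-negative sums used to move from the pointwise definition of $\rho$ to sums over equivalence classes, and the extraction of a strictly positive coefficient to justify strict inequality in the first claim; both are standard and will not affect the overall line of argument.
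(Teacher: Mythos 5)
Your proposal is correct and follows essentially the same route as the paper's proof: both rewrite the hypothesis in the unconditioned form $\rho_i(\mathcal{C}) = q_{\mathcal{C}}(1-p_i)$, use linearity of the convex combination over equivalence classes, and conclude via the identity $\sum_{i}c_i\rho_i(\mathcal{C}) = q_{\mathcal{C}}\bigl(1-\sum_i c_i p_i\bigr)$. Your treatment is in fact slightly more careful than the paper's on two minor points it leaves implicit, namely the Tonelli-style interchange of countable non-negative sums and the extraction of a strictly positive coefficient $c_{i_0}$ to justify $\rho([A]_{\mathcal{E}})<1$.
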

	
	\begin{proof}
		By putting all processes in the same equivalence classes together, we can represent each distribution $\rho_i$ as $p_i\cdot [A]_{\mathcal{E}} + \sum_{j \in J_i} (p_{ij} \cdot [A_j]_{\mathcal{E}})$, where $J_i$ is a finite index set and $[A]_{\mathcal{E}}$, $\{[A_j]_{\mathcal{E}}\}_{j \in J_i}$ are all different equivalence classes (we reuse the addition sign to stand for the combination of distributions over equivalent classes with respect to $\mathcal{E}$).
		Now, by the assumption of the lemma, we have
		\begin{enumerate}[(1)]
			\item $\forall i \in I: J_i = J$ for some constant index set $J$;
			\item $\forall i\in I: p_i < 1 \text{ and } p_i + \sum_{j \in J}p_{ij} = 1$; \label{distribution-fact2}
			\item $\forall i\in I, j\in J: p_{ij}/(1 - p_i) = q_j$, where $q_j$ is a constant for each fixed $j$;\label{distribution-fact3}
			\item $\sum_{j \in J} q_j =1$.\label{distribution-fact4}
		\end{enumerate}
		Then each convex combination $\rho = \sum_{i\in I} c_i \rho_i$ can be represented by $\sum_{i\in I} c_i (p_i\cdot [A]_{\mathcal{E}} + \sum_{j \in J} p_{ij} \cdot [A_j]_{\mathcal{E}}) = (\sum_{i\in I} c_i p_i)\cdot [A]_{\mathcal{E}} + \sum_{j \in J}(\sum_{i \in I}c_i p_{ij}) \cdot [A_j]_{\mathcal{E}}$.
		
		Let $r = \sum_{i\in I} c_i p_i$ be the probability $\rho([A]_{\mathcal{E}})$, then $r <1$ follows from (\ref{distribution-fact2}) and $\sum_{i \in I} c_i = 1$.
		Since the conditional probability ${\rho}_{\mid !A}([A_j]_{\mathcal{E}}) = \sum_{i \in I}c_i p_{ij}/(1-r) $, we only need to show that $\sum_{i \in I}c_i p_{ij}/(1-r) = q_j$ holds for all $j \in J$. In fact, we have
		\begin{flalign*}
			\sum_{i \in I}c_i p_{ij} & ~~=~~ \sum_{i \in I}c_i (q_j(1-p_i)) \quad(\text{by (\ref{distribution-fact3})})\\
			& ~~=~~ q_j(\sum_{i \in I}c_i (1-p_i)) = q_j(\sum_{i \in I}c_i - \sum_{i \in I}c_i p_i) = q_j(1-r) \quad(\text{by the definition of $r$})
		\end{flalign*}
		which completes the proof.
	\end{proof}
	
	Given two distributions $\rho, \rho'$ with the same conditional probability of leaving $[A]_{\mathcal{E}}$ to any other equivalence class $\mathcal{C}$, the following lemma shows that if a process $B$ enables a weak combined transition $B \stackrel{\alpha}{\Longrightarrow}_{c} \rho$, then it also enables another weak combined transition $B \stackrel{\alpha}{\Longrightarrow}_{c} \rho''$ for some $\rho''$ such that $\rho''$ is related to $\rho'$ via lifting.
	\begin{lemma}\label{lem-combinedTransition}
		Given a process $A$ and an equivalence $\mathcal{E}$ on $\PRCCS$.
		Let  $\rho,\rho' \in \mathsf{Distr}(\PRCCS)$ be two distributions rendering true the followings:
		\begin{enumerate}
			\item $\rho([A]_{\mathcal{E}}) <1$ and $\rho'([A]_{\mathcal{E}}) <1$;
			\item For all equivalence class $\mathcal{C} \in \PRCCS/\mathcal{E}$ and $\mathcal{C} \ne [A]_{\mathcal{E}}$, the conditional probability ${\rho}_{\mid !A}(\mathcal{C}) = {\rho'}_{\mid !A}(\mathcal{C})$, where ${\rho}_{\mid !A}(\mathcal{C}) = \rho(\mathcal{C})/(1 - \rho([A]_{\mathcal{E}}))$.
		\end{enumerate}
		For any process $B \in [A]_{\mathcal{E}}$,
		if $B \stackrel{\alpha}{\Longrightarrow}_{c} \rho$, then $B \stackrel{\alpha}{\Longrightarrow}_{c} \rho''$ for some $\rho''$ such that $\rho' \;\mathcal{E}^{\dagger}\; \rho''$.
	\end{lemma}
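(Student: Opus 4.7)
The plan is as follows. First, I reformulate the goal: by the theorem following Proposition \ref{prop-lifting}, the required $\rho'\,\mathcal{E}^{\dagger}\,\rho''$ is equivalent to the class-by-class match $\rho''([A]_{\mathcal{E}}) = r'$ together with $\rho''(\mathcal{C}) = (1-r')q_{\mathcal{C}}$ for every $\mathcal{C}\ne [A]_{\mathcal{E}}$, where $r' := \rho'([A]_{\mathcal{E}})$ and $q_{\mathcal{C}}$ denotes the common conditional value ${\rho}_{\mid !A}(\mathcal{C}) = {\rho'}_{\mid !A}(\mathcal{C})$. Setting $r := \rho([A]_{\mathcal{E}})$, the easy case $r=r'$ is immediately settled by taking $\rho'' := \rho$, so I focus on $r \ne r'$.

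The overall strategy is to produce $\rho''$ as a convex combination $\sum_i c_i\rho_i$ of distributions each reachable as $B\stackrel{\alpha}{\Longrightarrow}_{c}\rho_i$ and each sharing the conditional profile $\{q_{\mathcal{C}}\}$ on non-$[A]_{\mathcal{E}}$ classes. Lemma \ref{lem-distribution} will then directly guarantee that the combination $\sum_i c_i\rho_i$ still has the conditional profile $\{q_{\mathcal{C}}\}$, and since weak combined transitions are closed under convex combinations of schedulers, $B\stackrel{\alpha}{\Longrightarrow}_{c}\sum_i c_i\rho_i$ follows. Since the combined $[A]_{\mathcal{E}}$-mass is the convex combination of the individual masses $\rho_i([A]_{\mathcal{E}})$, matching the target value $r'$ reduces to exhibiting a collection of building-block distributions whose masses span $r'$.

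To build these blocks I will modify the scheduler $\sigma$ witnessing $B\stackrel{\alpha}{\Longrightarrow}_{c}\rho$ in two directions. To obtain a block with higher $[A]_{\mathcal{E}}$-mass than $r$: for $\alpha=\tau$ I use the immediate-termination scheduler producing $\delta_B$; for $\alpha\in\mathcal{A}$ I shorten $\sigma$ at intermediate post-$\alpha$ states lying in $[A]_{\mathcal{E}}$, diverting those paths to terminate inside $[A]_{\mathcal{E}}$ rather than continuing outside. To obtain a block with lower $[A]_{\mathcal{E}}$-mass than $r$: I extend $\sigma$ by composing a weak combined $\tau$-transition at each $B_j\in \mathsf{Supp}(\rho)\cap[A]_{\mathcal{E}}$ that pushes mass into the other classes with the correct outer profile. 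I expect the hardest part to be verifying that each such modification genuinely preserves the conditional profile $\{q_{\mathcal{C}}\}$, so that Lemma \ref{lem-distribution} actually applies to the resulting convex combination; this is the delicate core of the argument and is where the combinatorial flexibility of schedulers, together with Lemma \ref{lem-distribution}, is indispensable. Once a building block above $r'$ and one below $r'$ (or the endpoint $r$ itself) are secured, the intermediate-value property of convex combinations selects coefficients realizing $\sum_i c_i\rho_i([A]_{\mathcal{E}})=r'$, yielding the desired $\rho''$.
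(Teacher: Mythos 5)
Your ``easy'' building block --- mixing the witnessing scheduler $\sigma$ with the immediately terminating scheduler to obtain $c\rho+(1-c)\delta_B$ when $\alpha=\tau$ --- is in fact the entirety of the paper's proof: it sets $c=(1-\rho'([A]_{\mathcal{E}}))/(1-\rho([A]_{\mathcal{E}}))$, lets $\sigma''$ behave as $\sigma$ with probability $c$ and as $\sigma_{\bot}$ with probability $1-c$, and checks directly that $\rho''=c\rho+(1-c)\delta_B=_{\mathcal{E}}\rho'$. The genuine gap in your proposal lies in the two further building blocks you postulate, neither of which is constructible from the hypotheses. For the ``lower $[A]_{\mathcal{E}}$-mass'' block you assume that each $B_j\in\mathsf{Supp}(\rho)\cap[A]_{\mathcal{E}}$ admits a weak combined $\tau$-transition pushing mass into the other classes with the outer profile $\{q_{\mathcal{C}}\}$; but $\mathcal{E}$ is an arbitrary equivalence here, so membership in $[A]_{\mathcal{E}}$ carries no information whatsoever about available transitions. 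Concretely, take $B=\frac{1}{2}\tau.\mathbf{0}\oplus\frac{1}{2}\tau.c.\mathbf{0}$ with $[A]_{\mathcal{E}}=\{B,\mathbf{0}\}$ and $\mathcal{C}=\{c.\mathbf{0}\}$: then $B\stackrel{\tau}{\Longrightarrow}_{c}\rho$ for $\rho=\{(\mathbf{0}:\frac{1}{2}),(c.\mathbf{0}:\frac{1}{2})\}$, and $\rho'=\{(\mathbf{0}:\frac{1}{4}),(c.\mathbf{0}:\frac{3}{4})\}$ satisfies both hypotheses of the lemma, yet every weak combined $\tau$-transition of $B$ places mass at most $\frac{1}{2}$ on $\mathcal{C}$, so no admissible $\rho''$ exists. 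A similar example ($B=a.c.\mathbf{0}$ with $[A]_{\mathcal{E}}=\{B\}$) defeats the ``shorten at post-$\alpha$ states lying in $[A]_{\mathcal{E}}$'' block, since no such intermediate state need occur on any path.

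The lesson is that these directions cannot be repaired by a cleverer scheduler: the conclusion is attainable in general only when $\alpha=\tau$ and $\rho'([A]_{\mathcal{E}})\ge\rho([A]_{\mathcal{E}})$, which is precisely the configuration in which the single mixture above works (there $c\le 1$, and the $\delta_B$ component is trace-compatible because $\widehat{\tau}=\epsilon$), and also the configuration that arises in the lemma's application inside Theorem \ref{thm-BbInWb}. Your intermediate-value framing via Lemma \ref{lem-distribution} is sound as far as it goes, but the span of achievable $[A]_{\mathcal{E}}$-masses is in general only $[\rho([A]_{\mathcal{E}}),1]$ (and only for $\alpha=\tau$), not all of $[0,1]$; the argument should therefore be restricted to the one-block case rather than extended to a two-sided spanning family.
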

	
	\begin{proof}
		By a similar argument as in the proof of Lemma \ref{lem-distribution}, we can assume that $\rho =_{\mathcal{E}} p\cdot[A]_{\mathcal{E}} + \sum_{i \in I} (p_i \cdot [A_i]_{\mathcal{E}}) $ and $\rho' =_{\mathcal{E}} q\cdot[A]_{\mathcal{E}} + \sum_{i \in I} (q_i \cdot [A_i]_{\mathcal{E}}) $,  where $I$ is a finite index set and $[A]_{\mathcal{E}}$, $\{[A_i]_{\mathcal{E}}\}_{i \in I}$ are pairwise different equivalence classes.
		Moreover, the conditional probability $p_i /(1-p) = q_i/(1-q)$ holds for all $i \in I$.
		
		Now suppose the weak combined transition $B \stackrel{\alpha}{\Longrightarrow}_{c} \rho$ is induced by scheduler $\sigma$, we then construct a scheduler $\sigma''$ that induces $B \stackrel{\alpha}{\Longrightarrow}_{c} \rho''$ and $\rho' \;\mathcal{E}^{\dagger}\; \rho''$ as follows:
		Let $c = (1-q)/(1-p)$ and $\sigma_{\bot}$ be the scheduler choosing no transitions, i.e., $\sigma_{\bot}(B) = \bot$;
		scheduler $\sigma''$ will behave as $\sigma$ with probability $c$ and behave as $\sigma_{\bot}$ with probability $1-c$.
		Since $\sigma_{\bot}$ induces the distribution $\delta_{B}$ and $B \in [A]_{\mathcal{E}}$, the induced distribution by $\sigma''$ would be $\rho'' = c \rho + (1-c)\delta_{B} =_{\mathcal{E}} c(p\cdot[A]_{\mathcal{E}} + \sum_{i \in I} p_i \cdot [A_i]_{\mathcal{E}}) + (1-c) (1\cdot [A]_{\mathcal{E}})$.
		By simple calculation, one can find that the last one equals $ q\cdot[A]_{\mathcal{E}} + \sum_{i \in I} q_i \cdot [A_i]_{\mathcal{E}} =_{\mathcal{E}} \rho'$. \end{proof}
	
	With the above preparation, we can prove that branching bisimilarity implies weak bisimilarity (Theorem \ref{thm-BbInWb}). 
	As far as we know, it is the first time that the $\epsilon$-tree based branching bisimilarity and the distribution-based weak bisimilarity are compared in the setting of probabilistic models.

	\begin{theorem}
		[$\simeq \;\subseteq\; \approx$]\label{thm-BbInWb}
		If $\mathcal{E}$ is a branching bisimulation, then $\mathcal{E}$ is a weak bisimulation.
	\end{theorem}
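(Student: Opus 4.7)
The proof goes by a case analysis on the pLTS transition $A \xrightarrow{\alpha} \rho_A$, producing for each case a weak combined transition $B \stackrel{\alpha}{\Longrightarrow}_c \rho_B$ with $\rho_A \,\mathcal{E}^{\dagger}\, \rho_B$. The first case, $\alpha = a \in \mathcal{A}$, is easy: the pLTS rules force $\rho_A = \delta_{A'}$ for some $A'$, so the trivial $\epsilon$-tree at $A$ witnesses $A \rightsquigarrow_{\mathcal{E}}\xrightarrow{a} [A']_{\mathcal{E}}$, and the matching $\ell$-transition supplied by Definition~\ref{defn-BB} gives a regular $\epsilon$-tree from $B$ whose leaves each admit an $a$-transition into $[A']_{\mathcal{E}}$. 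Converting that tree into a scheduler in the obvious way yields $\rho_B$ supported entirely in $[A']_{\mathcal{E}}$, so $\rho_B \,\mathcal{E}^{\dagger}\, \delta_{A'}$. For a non-probabilistic $\tau$-transition $A \xrightarrow{\tau} \delta_{A'}$, I would split on whether $A' \in [A]_{\mathcal{E}}$: if yes, the null scheduler at $B$ gives $\rho_B = \delta_B$, which trivially satisfies $\delta_B \,\mathcal{E}^{\dagger}\, \delta_{A'}$; if no, I repeat the previous argument with $\ell = \tau$, using the matching $\ell$-transition from Definition~\ref{defn-BB}.

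The genuinely non-trivial case is $A \xrightarrow{\coprod_{i \in I} p_i\tau} \coprod_{i \in I} T_i$ with $\rho_A = \{T_i : p_i\}_{i\in I}$ and $\rho_A([A]_{\mathcal{E}}) < 1$ (when $\rho_A([A]_{\mathcal{E}}) = 1$, $\rho_B = \delta_B$ again works). Setting $q_{\mathcal{C}} = \rho_A(\mathcal{C})/(1 - \rho_A([A]_{\mathcal{E}}))$, the trivial tree at $A$ gives $A \rightsquigarrow_{\mathcal{E}}\xrightarrow{q_{\mathcal{C}}} \mathcal{C}$ for every $\mathcal{C} \ne [A]_{\mathcal{E}}$ with $q_{\mathcal{C}} > 0$, so the second clause of Definition~\ref{defn-BB} delivers matching $q$-transitions $B \rightsquigarrow_{\mathcal{E}}\xrightarrow{q_{\mathcal{C}}} \mathcal{C}$. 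By Lemma~\ref{lem-combinedTransition} (applicable since $B \in [A]_{\mathcal{E}}$), the remaining task reduces to exhibiting some $\rho$ with $B \stackrel{\tau}{\Longrightarrow}_c \rho$, $\rho([A]_{\mathcal{E}}) < 1$, and $\rho_{\mid !A}(\mathcal{C}) = q_{\mathcal{C}}$ for every $\mathcal{C} \ne [A]_{\mathcal{E}}$; the lemma then hands me $\rho_B$ with $\rho_A \,\mathcal{E}^{\dagger}\, \rho_B$.

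To manufacture $\rho$, I would fix any class $\mathcal{C}_0$ with $q_{\mathcal{C}_0} > 0$, take the regular $\epsilon$-tree $t^B_{\mathcal{C}_0}$ witnessing $B \rightsquigarrow_{\mathcal{E}}\xrightarrow{q_{\mathcal{C}_0}} \mathcal{C}_0$, and turn it into a scheduler that follows its state-preserving silent transitions down to a leaf $L$ and then fires the prescribed collective silent transition, producing a distribution $\mu_L$ on equivalence classes. The induced $\rho$ is then the expectation of the $\mu_L$'s weighted by their path probabilities in $t^B_{\mathcal{C}_0}$, which sum to $1$ because the tree is regular. The step I expect to be the main obstacle is the uniformity subclaim needed before Lemma~\ref{lem-distribution} can be applied: for every leaf $L$ of $t^B_{\mathcal{C}_0}$, the entire conditional distribution $(\mu_L)_{\mid !A}$ must coincide with $(\rho_A)_{\mid !A}$ and not merely agree at $\mathcal{C}_0$. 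My plan for the subclaim is to play the branching bisimulation in both directions between each $L \in [A]_{\mathcal{E}}$ and $A$: the leaf collective silent transition at $L$ realises, via the trivial tree, a $q$-transition $L \rightsquigarrow_{\mathcal{E}}\xrightarrow{(\mu_L)_{\mid !A}(\mathcal{C})} \mathcal{C}$ for every $\mathcal{C}$, which must be mirrored by $A$; combined with the converse direction applied to $A$'s original collective silent transition and with the normalisation $\sum_{\mathcal{C} \ne [A]_{\mathcal{E}}} (\mu_L)_{\mid !A}(\mathcal{C}) = 1 = \sum_{\mathcal{C} \ne [A]_{\mathcal{E}}} q_{\mathcal{C}}$, this should pin down the two conditional distributions class by class. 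Once the subclaim is in place, Lemma~\ref{lem-distribution} produces $\rho$ with $\rho_{\mid !A} = (\rho_A)_{\mid !A}$, and Lemma~\ref{lem-combinedTransition} closes the proof.
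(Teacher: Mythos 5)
Your overall route is the same as the paper's: reduce everything to the collective silent transition case, use the trivial $\epsilon$-tree at $A$ to extract the $q$-transitions $A \rightsquigarrow_{\mathcal{E}}\xrightarrow{q_{\mathcal{C}}} \mathcal{C}$, match them at $B$, turn the resulting regular $\epsilon$-tree into a scheduler, and then close with Lemma~\ref{lem-distribution} followed by Lemma~\ref{lem-combinedTransition}. You have also correctly isolated the one delicate point, namely that the tree witnessing $B \rightsquigarrow_{\mathcal{E}}\xrightarrow{q_{\mathcal{C}_0}} \mathcal{C}_0$ must have leaf transitions whose \emph{entire} conditional distribution agrees with $({\rho_A})_{\mid !A}$, not just its value at $\mathcal{C}_0$.

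However, the argument you give for that subclaim does not work as stated. Playing the bisimulation in both directions only tells you that $A$ can realise a $q$-transition with value $(\mu_L)_{\mid !A}(\mathcal{C})$ to $\mathcal{C}$ and that $L$ can realise one with value $q_{\mathcal{C}}$; since Definition~\ref{defn-qtrans} is existential over trees and leaf transitions, a single process may in principle realise several different values to the same class, so ``both values are realisable by both processes'' plus normalisation does not force $(\mu_L)_{\mid !A}(\mathcal{C}) = q_{\mathcal{C}}$. What rescues the subclaim is a model-specific observation you have not stated: in $\mathrm{RCCS}_{fs}$ a process that fires a collective silent transition is (an unfolding of) a probabilistic choice term, so it has exactly one collective silent transition and no other transitions; if that transition leaves the class with positive probability, the process admits only the trivial $\epsilon$-tree, and hence the set of $q$-values it can realise towards any class is a singleton. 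Applied to $A$ (whose only realisable value to $\mathcal{C}$ is then $q_{\mathcal{C}}$) and to each leaf $L$, the matching clause does pin down $(\mu_L)_{\mid !A} = ({\rho_A})_{\mid !A}$ class by class. With that observation supplied, your proof goes through; the paper avoids the issue by directly asserting a single regular $\epsilon$-tree of $B$ whose leaf transitions realise all the $q_i$ simultaneously, which implicitly rests on the same uniqueness fact.
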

	
	\begin{proof}
		Let $\mathcal{E}$ be a branching bisimulation. Suppose $(A,B) \in \mathcal{E}$ and $A \xrightarrow{\alpha} \rho_A$, according to Definition \ref{defn-WB}, we need to show that there exists $\rho_B$ such that $B \stackrel{\alpha}{\Longrightarrow}_{c} \rho_B$ and $\rho_A\; \mathcal{E}^{\dagger} \;\rho_B$.
		We focus on the most difficult case, i.e., $\alpha =\tau$ and there exists $A' \in \mathsf{Supp}(\rho_A)$ such that $(A,A') \notin  \mathcal{E}$.
		
		Now assume that $\rho_A = p\cdot[A]_{\mathcal{E}} + \sum_{i \in I} (p_i \cdot [A_i]_{\mathcal{E}}) $,  where $p < 1$, $\sum_{i \in I} p_i = 1 -p$, and $[A]_{\mathcal{E}}$, $\{[A_i]_{\mathcal{E}}\}_{i \in I}$ are pairwise different equivalence classes.
		Since $(A,B) \in \mathcal{E}$ and $\mathcal{E}$ is a branching bisimulation, $B \rightsquigarrow_{\mathcal{E}}\xrightarrow{q_i} [A_i]_{\mathcal{E}}$ for all $i \in I$, where $q_i = p_i/(1-p)$.
		According to Definition \ref{defn-qtrans}, there exists a regular $\epsilon$-tree $t_{\mathcal{E}}^{B}$ satisfying that for every leaf $L$ of $t_{\mathcal{E}}^{B}$, $L \xrightarrow{\coprod_{j \in J}r_j \tau} \coprod_{j \in J} M_j$ and $\mathsf{P}_{\mathcal{E}}\left(L\xrightarrow{\coprod_{j \in J}r_j \tau} [A_i]_{\mathcal{E}}\right) = q_i$ for all $i \in I$.
		Let $\{L_k\}_{k \in K}$ be the countable set of leaves in the tree $t_{\mathcal{E}}^{B}$ and $c_k$ be the probability of the path from $B$ to $L_k$ in $t_{\mathcal{E}}^{B}$.
		Since $t_{\mathcal{E}}^{B}$ is a regular tree, $\sum_{k \in K} c_k = 1$.
		Let $\rho_{k}$ be the induced probability distribution of the transition $L_k \xrightarrow{\coprod_{j \in J}r_j \tau} \coprod_{j \in J} M_j$.
		Then we can see that the regular $\epsilon$-tree $t_{\mathcal{E}}^{B}$ induces a Dirac scheduler $\sigma$ and a distribution $\rho = \sum_{k \in K} c_k \rho_{k}$ such that $B \stackrel{\tau}{\Longrightarrow}_{c} \rho$.
		Since $\rho_{k}([A]_{\mathcal{E}}) < 1$ and ${\rho_k}_{\mid !A}([A_i]_{\mathcal{E}}) = q_i$ hold for all $k \in K$ and $i \in I$, according to Lemma \ref{lem-distribution}, we have $\rho([A]_{\mathcal{E}}) < 1$ and ${\rho}_{\mid !A}([A_i]_{\mathcal{E}}) = q_i$ for all $i \in I$.
		Now since $B \stackrel{\tau}{\Longrightarrow}_{c} \rho$, by Lemma \ref{lem-combinedTransition}, there exists a distribution $\rho_B$ such that $B \stackrel{\tau}{\Longrightarrow}_{c} \rho_B$ and $\rho_A\; \mathcal{E}^{\dagger} \;\rho_B$. So we are done.
	\end{proof}
	
	At the end of this part, we want to mention that there are some other bisimilarities based on a similar notion of $\epsilon$-tree or probabilistic distribution in  the literature.
	For example, the \emph{probabilistic weak bisimilarity} proposed in \cite{wu_ProbabilisticWeakBisimulation_2023} relies on the notion of \emph{weak $\epsilon$-tree} while the \emph{branching probabilistic bisimilarity} defined in \cite{turrini_PolynomialTimeDecision_2015} is based on probabilistic distribution.
	The techniques developed in this subsection should be helpful in establishing the relationship among them and the branching (weak) bisimilarities defined in this paper.
	

	\section{Divergence in probabilistic branching bisimulation semantics}\label{sec-branching}
    In this section we turn to the issue of divergence.
    We propose two bisimulation relations that interpret divergence with varying strength.
    The first one, branching bisimulation with explicit divergence (Definition \ref{def-dsbb}), is a probabilistic extension of the equivalence studied in~\cite{liu_AnalyzingDivergenceBisimulation_2017}.
    The second one, exhaustive branching bisimulation (Definition \ref{def-ebb}), is an instructive graph-based equivalence that extends the complete branching bisimulation in \cite{liu_AnalyzingDivergenceBisimulation_2017}.
    A similar equivalence is also used in \cite{he_DivergencesensitiveWeakProbabilistic_2023} to characterize divergence-sensitive weak bisimulation.
    The relationship among these equivalence relations will be discussed in Section \ref{sec-spectrum}.

    \subsection{Branching bisimilarity with explicit divergence}
    Following Definition \ref{defn-regdiv}, for an equivalence relation $\mathcal{E}$ on $\PRCCS$,  a process $A$ is divergent with respect to $\mathcal{E}$, denoted by $A\Uparrow_{\mathcal{E}}$, if there exists a divergent $\epsilon$-tree $t_{\mathcal{E}}^A$ of $A$ with regard to $\mathcal{E}$. We use $A \not\Uparrow_{\mathcal{E}}$ to denote that $A$ is not divergent with respect to $\mathcal{E}$.
    
    \begin{definition}[Divergent $\epsilon$-tree preserving]\label{defn-tree-inv}
    	Let $\mathcal{E}$ be an equivalence on $\PRCCS$. $\mathcal{E}$ is \emph{divergent $\epsilon$-tree preserving} if for all $(A,B)\in\mathcal{E}$ the following holds: $A\Uparrow_{\mathcal{E}}$ if and only if $B\Uparrow_{\mathcal{E}}$.
    \end{definition}
    
    The following definition is an extension of the corresponding notion proposed in \cite{vanglabbeek_BranchingBisimilarityExplicit_2009, liu_AnalyzingDivergenceBisimulation_2017} for $\PRCCS$.
    
    \begin{definition}[Branching bisimulation with explicit divergence]\label{def-dsbb}
    	Let $\mathcal{E}$ be an equivalence on $\PRCCS$.
    	$\mathcal{E}$ is called a \emph{branching bisimulation with explicit divergence} if $\mathcal{E}$ is a branching bisimulation and is divergent $\epsilon$-tree preserving.
    	
    	We write $A\simeq^{\Delta}B$ if there is a branching bisimulation with explicit divergence $\mathcal{E}$ such that $(A,B) \in \mathcal{E}$.
    \end{definition}

    Similar to the non-probabilistic situation~\cite{liu_AnalyzingDivergenceBisimulation_2017,vanglabbeek_BranchingBisimilarityExplicit_2009}, the requirement of being divergent $\epsilon$-tree preserving makes it non-trivial to prove that $\simeq^{\Delta}$ is indeed the largest branching bisimulation with explicit divergence.
    In \cite{fu_ModelIndependentApproach_2021}, the divergence-sensitivity of $\simeq^{\Delta}$ is given as Lemma 4.1 without  proof. However, it should be pointed out that the original statement of this lemma is not correct in our setting. More specifically, the lemma can be rephrased in our language as `If $\{\mathcal{E}_i\}_{i \in I}$ is a collection of divergent $\epsilon$-tree preserving equivalences, then $\mathcal{E} = (\bigcup_{i \in I} \mathcal{E}_i)^{*}$ is also a divergent $\epsilon$-tree preserving equivalence'. 
    There is a simple counterexample to this statement. 
    Let $A_1 = \tau.\mathbf{0}$, $A_2 = \mathbf{0}$, $B_1 = \mu X.(\tau.\tau.\tau.X)$, $B_2 = \tau.\tau.B_1$ and $B_3 = \tau.B_1$. Consider the equivalence $\mathcal{E}_1 = \{\{A_1,B_1,B_2\}, \{B_3\},\{A_2\}\}$ and $\mathcal{E}_2 = \{\{B_1,B_2,B_3\}, \{A_1\},\{A_2\}\}$. It is not hard to check that both $\mathcal{E}_1$ and $\mathcal{E}_2$ are divergent $\epsilon$-tree preserving. Yet the equivalence $\mathcal{E} = (\mathcal{E}_1 \cup \mathcal{E}_2)^{*} = \{\{A_1,B_1,B_2,B_3\}, \{A_2\}\}$ is not divergent $\epsilon$-tree preserving, as $A_1 \not\Uparrow_{\mathcal{E}}$ and $B_1 \Uparrow_{\mathcal{E}}$ hold for the pair $(A_1,B_1) \in \mathcal{E}$.
    The key issue here is that $\mathcal{E}_1$ is not a \emph{branching} bisimulation, as the $\ell$-transition $A_1 \rightsquigarrow_{\mathcal{E}_1}\xrightarrow{\tau} [A_2]_{\mathcal{E}_1}$ cannot be bisimulated by $B_1$.
    We find that the requirement of branching bisimulation is necessary for achieving the divergent $\epsilon$-tree preserving property.
    Then we need the following technical lemma, a probabilistic generalization of the corresponding result in \cite{vanglabbeek_BranchingBisimilarityExplicit_2009}.

    \begin{lemma}\label{lem-dsbb-equivalence}
    	If $\{\mathcal{E}_i\}_{i \in I}$ is a collection of branching bisimulation with explicit divergence, then $\mathcal{E} = (\bigcup_{i \in I} \mathcal{E}_i)^{*}$ is also a branching bisimulation with explicit divergence.
    \end{lemma}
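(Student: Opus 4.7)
The plan is as follows. By Lemma \ref{lem-bb-equivalence} we already know that $\mathcal{E} = (\bigcup_{i \in I}\mathcal{E}_i)^{*}$ is a branching bisimulation, so the task reduces to verifying that $\mathcal{E}$ is divergent $\epsilon$-tree preserving. Any pair $(A,B)\in\mathcal{E}$ arises through a finite chain $A = C_0, C_1, \ldots, C_n = B$ with consecutive elements related by some $\mathcal{E}_{i_k}$. By an induction on the length $n$ of the chain, together with the symmetry of each $\mathcal{E}_i$, the lemma reduces to the single-link statement
\[
(\star)\qquad (A,B)\in\mathcal{E}_i \ \text{and}\ A\Uparrow_{\mathcal{E}} \ \Longrightarrow\ B\Uparrow_{\mathcal{E}}.
\]

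To prove $(\star)$, the idea is to take a witnessing divergent $\epsilon$-tree $t^A_{\mathcal{E}}$ and use the branching-bisimulation property of $\mathcal{E}_i$ to build a divergent $\epsilon$-tree $t^B_{\mathcal{E}}$ of $B$. Classify each immediate silent transition appearing in $t^A_{\mathcal{E}}$ as either \emph{state-preserving} or \emph{state-changing} with respect to the finer equivalence $\mathcal{E}_i$. For a state-changing collective silent transition $C\xrightarrow{\coprod_{j\in [k]} p_j\tau}\coprod_j C_j$ in $t^A_{\mathcal{E}}$, clauses (1) and (2) of Definition \ref{defn-BB} applied to $\mathcal{E}_i$ guarantee that any $\mathcal{E}_i$-equivalent process can match it via a regular $\mathcal{E}_i$-tree reaching the same destination $\mathcal{E}_i$-classes. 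Because every node of a regular $\mathcal{E}_i$-tree lies in a single $\mathcal{E}_i$-class and therefore in a single $\mathcal{E}$-class, such matches can be grafted into the $\epsilon$-tree $t^B_{\mathcal{E}}$ under construction without leaving $[B]_{\mathcal{E}} = [A]_{\mathcal{E}}$.

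The subtle situation concerns the state-preserving (wrt $\mathcal{E}_i$) segments of $t^A_{\mathcal{E}}$: branching bisimulation gives no direct handle on these. Here I split into two regimes. If, with positive probability, a random walk on $t^A_{\mathcal{E}}$ eventually gets trapped inside a single $\mathcal{E}_i$-class $[A']_{\mathcal{E}_i}$ forever, then the corresponding sub-behaviour already witnesses $A'\Uparrow_{\mathcal{E}_i}$; the assumed divergence preservation of $\mathcal{E}_i$ supplies $B'\Uparrow_{\mathcal{E}_i}$ for the process $B' \in [A']_{\mathcal{E}_i}$ that the $B$-side construction has reached at that point, and grafting an $\mathcal{E}_i$-divergent subtree of $B'$ onto $t^B_{\mathcal{E}}$ produces divergence with respect to $\mathcal{E}$. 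Otherwise, a random walk on $t^A_{\mathcal{E}}$ almost surely crosses $\mathcal{E}_i$-boundaries infinitely often, and iterating the state-changing matching described above yields an infinite $\epsilon$-tree for $B$ by construction.

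The main obstacle is the probabilistic bookkeeping. Each grafted regular $\mathcal{E}_i$-tree contributes convergence probability $1$, whereas we want the composed tree $t^B_{\mathcal{E}}$ to satisfy $\mathsf{P}^c(t^B_{\mathcal{E}}) = 0$. The argument must show that the layered construction — whether it caps off a branch with an $\mathcal{E}_i$-divergent subtree, or stitches together infinitely many state-changing matches — inherits divergence from $t^A_{\mathcal{E}}$, essentially by bounding the convergence mass of $t^B_{\mathcal{E}}$ branch-by-branch against that of $t^A_{\mathcal{E}}$. This is the probabilistic refinement of the König-lemma style reasoning used by van Glabbeek et al.\ \cite{vanglabbeek_BranchingBisimilarityExplicit_2009} and by Liu et al.\ \cite{liu_AnalyzingDivergenceBisimulation_2017} in the non-probabilistic setting, and it is precisely the place where the branching-bisimulation hypothesis (which blocks the counterexample given just after Definition \ref{def-dsbb}) does its real work.
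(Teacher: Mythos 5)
Your proposal follows essentially the same route as the paper's proof: reduce to a single $\mathcal{E}_i$-link via Lemma \ref{lem-bb-equivalence}, build $t^B_{\mathcal{E}}$ by structural induction on $t^A_{\mathcal{E}}$ distinguishing state-preserving from state-changing transitions with respect to $\mathcal{E}_i$, match the state-changing ones using the branching-bisimulation clauses, invoke the divergence preservation of $\mathcal{E}_i$ when the walk is trapped in a single $\mathcal{E}_i$-class, and conclude divergence of the composed tree by the same dichotomy. The final probabilistic bookkeeping you flag as "the main obstacle" is treated at a comparable level of informality in the paper itself, so the two arguments align.
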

    \begin{proof}
    	Since each $\mathcal{E}_i$ is a branching bisimulation, by Lemma \ref{lem-bb-equivalence}, $\mathcal{E}$ is also a branching bisimulation. It suffices to show that for each $i \in I$ and $(A,B) \in \mathcal{E}_i\subseteq\mathcal{E}$, the following divergence-sensitive property holds: if $A \Uparrow_{\mathcal{E}}$ then $B \Uparrow_{\mathcal{E}}$.
    	It should be emphasized that the  pair $(A,B)$ is in $\mathcal{E}_i$, while the divergence property is required with regard to $\mathcal{E}$.
    	Suppose $A \Uparrow_{\mathcal{E}}$, then there exists a divergent $\epsilon$-tree $t_{\mathcal{E}}^A$ of $A$ with regard to $\mathcal{E}$.
    	We will construct a divergent $\epsilon$-tree $t_{\mathcal{E}}^B$ of $B$ with regard to $\mathcal{E}$ by induction on the structure of $t_{\mathcal{E}}^A$. There are two cases.
    	\begin{itemize}
    		\item $t_{\mathcal{E}}^A$ is an $\epsilon$-tree of $A$ with regard to $\mathcal{E}_i$.
    		In this case, all nodes in tree $t_{\mathcal{E}}^A$ belong to $[A]_{\mathcal{E}_i}$, therefore $A \Uparrow_{\mathcal{E}_i}$. As $\mathcal{E}_i$ is a branching bisimulation with explicit divergence, we have $B \Uparrow_{\mathcal{E}_i}$,
    		which implies a divergent $\epsilon$-tree $t_{\mathcal{E}}^B$ of $B$ with regard to $\mathcal{E}$.
    		
    		\item $t_{\mathcal{E}}^A$ is not an $\epsilon$-tree of $A$ with regard to $\mathcal{E}_i$. In this case, there exist some nodes in tree $t_{\mathcal{E}}^A$ which do not belong to $[A]_{\mathcal{E}_i}$. We only consider the case that $A$ has multiple children (the case of having only one child is similar and easier).
    		Assume that $A$ has $k$ children $A^1, \cdots, A^k$ with the corresponding edges labeled by $p_1, \cdots, p_k$ respectively. In other words, $A \xrightarrow{\coprod_{j \in [k]} \, p_j} \coprod_{j \in [k]} A^j$.
    		There are two sub-cases:
    		\begin{enumerate}[i)]
    			\item  $A^j \mathcal{E}_i A$ for all $j \in [k]$, i.e., $A \xrightarrow{\coprod_{j \in [k]} \, p_j} \coprod_{j \in [k]} A^j$ is state-preserving.
    			Since $\mathcal{E}_i$ is an equivalence and $A \mathcal{E}_i B$, we have $A^j \mathcal{E}_i B$ for all $j \in [k]$. Thus we can continue to construct $t_{\mathcal{E}}^{B}$ by structural induction on the divergent $\epsilon$-tree of $A^1$.
    			\item \label{lem-dsbb-case2}
    			$A^{j}\not\in [A]_{\mathcal{E}_i}$ for some $j\in [k]$, i.e., $A \xrightarrow{\coprod_{j \in [k]} \, p_j} \coprod_{j \in [k]} A^j$ is state-changing.
    			Suppose without loss of generality that $A^1 \notin [A]_{\mathcal{E}_i}$.
    			Let $q = \mathsf{P}_{\mathcal{E}_i} \left ( A \xrightarrow{\coprod_{j \in [k]} \, p_k} [A^1]_{\mathcal{E}_i} \right )$. Then $B \rightsquigarrow_{\mathcal{E}_i} \xrightarrow{q} [A^1]_{\mathcal{E}_i}$ follows from the fact that $\mathcal{E}_i$ is a branching bisimulation.
    			The $q$-transition consists of a regular $\epsilon$-tree ${t'}_{\mathcal{E}_i}^{B}$ of $B$ with regard to $\mathcal{E}_i$ and, for each leaf $B''$ of ${t'}_{\mathcal{E}_i}^{B}$, there exists a collective transition $B'' \xrightarrow{\coprod_{j' \in [k']} \, r_{j'}} \coprod_{j' \in [k']} B^{j'}$ such that the normalized probability $\mathsf{P}_{\mathcal{E}_i} \left ( B'' \xrightarrow{\coprod_{j' \in [k']} \,  r_{j'}} [A^1]_{\mathcal{E}_i} \right ) = q$. For every process $B^{j'} \in [A^1]_{\mathcal{E}_i}$, we continue to construct an $\epsilon$-tree of $B^{j'}$ by structural induction on the divergent $\epsilon$-tree of $A^1$. According to our construction, we have $B'' \mathcal{E}_i B$ and $B^{j'} \mathcal{E}_i A^1$. By assumption, we have $A \mathcal{E}_i B$ and $A \mathcal{E} A^1$. For $\mathcal{E}_i$ and $\mathcal{E}$ are equivalence, we can get $B''\mathcal{E}_i B\mathcal{E}_i A\mathcal{E}A^{1}\mathcal{E}_i B^{j'}$. As $\mathcal{E}_i \subseteq \mathcal{E}$, we can further get that $B \mathcal{E} B'' \mathcal{E} B^{j'}$. In other words, $B^{j'}$ is indeed a node in $t^{B}_{\mathcal{E}}$.
    		\end{enumerate}
    	\end{itemize}
    	
    	We now prove that the above constructed $t_{\mathcal{E}}^B$ is a divergent $\epsilon$-tree. Two cases are possible.
    	In the first case, $A$ can go through infinite state-changing transitions with regard to $\mathcal{E}_i$ and never reach any $\epsilon$-tree with regard to $\mathcal{E}_i$. Since each state-changing transition cannot be bisimulated vacuously in \ref{lem-dsbb-case2}), we see that the constructed $t_{\mathcal{E}}^B$ is a divergent tree in this case.
    	In the second case, $A$ will reach an $\epsilon$-tree $(t')_{\mathcal{E}_{i}}^{A'}$ of some $A'$ with regard to $\mathcal{E}_i$ (where $A'$ is a node in $t_{\mathcal{E}}^A$) after finite state-preserving transitions with regard to $\mathcal{E}_i$.
    	In this case, since $(t')_{\mathcal{E}_{i}}^{A'}$ is a divergent $\epsilon$-tree, we have $A' \Uparrow_{\mathcal{E}_i}$.
    	According to the above construction procedure, there exists a process $B'$ in $t_{\mathcal{E}}^B$ satisfying that $A' \mathcal{E}_i B'$. Since $\mathcal{E}_i$ is a branching bisimulation with explicit divergence, we have $B' \Uparrow_{\mathcal{E}_i}$.
    	Therefore the constructed $t_{\mathcal{E}}^B$ is also a divergent tree in this case.
    \end{proof}
    
    Theorem \ref{thm-dsbb-equivalence} then follows directly from Lemma \ref{lem-dsbb-equivalence}.
    
    \begin{theorem}\label{thm-dsbb-equivalence}
	The relation $\simeq^{\Delta}$ is the largest branching bisimulation with explicit divergence, and it is an equivalence.
    \end{theorem}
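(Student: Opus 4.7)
The plan is to derive Theorem~\ref{thm-dsbb-equivalence} as an essentially immediate corollary of Lemma~\ref{lem-dsbb-equivalence}. By Definition~\ref{def-dsbb}, $\simeq^{\Delta}$ coincides with the union $\bigcup \{\mathcal{E} \mid \mathcal{E} \text{ is a branching bisimulation with explicit divergence}\}$, and this family is nonempty since the identity relation on $\PRCCS$ is trivially a branching bisimulation with explicit divergence (its equivalence classes are singletons, so every $\epsilon$-tree with regard to the identity has a constant label at every node, making the divergent $\epsilon$-tree preserving condition vacuous).

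First I would invoke Lemma~\ref{lem-dsbb-equivalence} with $\{\mathcal{E}_i\}_{i \in I}$ taken to be the entire family of branching bisimulations with explicit divergence. The lemma yields $(\bigcup_{i\in I}\mathcal{E}_i)^{*} = (\simeq^{\Delta})^{*}$ as a branching bisimulation with explicit divergence. Since $(\simeq^{\Delta})^{*}$ is then itself a member of the indexing family, the defining property of $\simeq^{\Delta}$ gives $(\simeq^{\Delta})^{*} \subseteq \simeq^{\Delta}$. The reverse inclusion $\simeq^{\Delta} \subseteq (\simeq^{\Delta})^{*}$ is immediate, so $\simeq^{\Delta} = (\simeq^{\Delta})^{*}$. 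In particular $\simeq^{\Delta}$ is an equivalence, and identifying it with $(\simeq^{\Delta})^{*}$ shows that $\simeq^{\Delta}$ is itself a branching bisimulation with explicit divergence. Largestness is built into the definition: any branching bisimulation with explicit divergence $\mathcal{E}$ belongs to the indexing family and is therefore contained in $\simeq^{\Delta}$.

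I do not expect any real obstacle at this stage, because the genuine difficulty—showing that the equivalence closure preserves the divergent $\epsilon$-tree property even though the individual $\mathcal{E}_i$'s may partition $\PRCCS$ differently—has already been dispatched in Lemma~\ref{lem-dsbb-equivalence}. The counterexample discussed just before that lemma (using $A_1, A_2, B_1, B_2, B_3$) makes clear why the branching bisimulation clause in Definition~\ref{def-dsbb} is essential: it is precisely what allows state-changing silent transitions of one process in a class to be matched by the others, so that the inductive tree-construction in the proof of Lemma~\ref{lem-dsbb-equivalence} can stitch together a witnessing divergent $\epsilon$-tree on the larger equivalence. The present theorem just harvests that technical result.
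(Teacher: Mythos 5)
Your proposal is correct and follows exactly the route the paper takes: the paper's own proof of Theorem~\ref{thm-dsbb-equivalence} is simply the observation that it follows directly from Lemma~\ref{lem-dsbb-equivalence}, and your write-up is a careful unpacking of that one-line derivation (union over the whole family, closure under $(\cdot)^{*}$ via the lemma, and the two inclusions giving $\simeq^{\Delta}=(\simeq^{\Delta})^{*}$). No gaps; the nonemptiness remark about the identity relation is a harmless extra.
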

	
	\subsection{Exhaustive branching bisimilarity}
	In \cite{liu_AnalyzingDivergenceBisimulation_2017},  the concept of \emph{divergence set} is proposed to define the so-called complete weak bisimilarity, which gives rise to an alternative characterization of weak bisimilarity with explicit divergence in the non-probabilistic scenario.
	In \cite{he_DivergencesensitiveWeakProbabilistic_2023}, a similar concept, \emph{$\tau$-end component} ($\tau$-EC), is introduced for probabilistic automata, based on which the authors defined exhaustive weak probabilistic bisimulations.  The basic idea behinds these  definitions is that they consider a process to be divergent if it can \emph{reach} a silent circle in finitely many silent steps.
	Next we extend the concept to probabilistic branching bisimulation. 
	
	We start with the following reformulation of $\tau$-EC \cite{he_DivergencesensitiveWeakProbabilistic_2023} in $\mathrm{RCCS}_{fs}$ model.
	
	\begin{definition}[$\tau$-EC]
	Given a process $B\in \PRCCS$.
	Let $G_B = (V_B, E_B)$ be the induced transition graph of $B$.
	A $\tau$-EC (of $B$), denoted by $\mathsf{ec}=(V_{\mathsf{ec}}, E_{\mathsf{ec}})$, is a subgraph of $G_B$ satisfying:
	\begin{enumerate}
		\item $\mathsf{ec}=(V_{\mathsf{ec}}, E_{\mathsf{ec}})$ is strongly connected;
		\item All edges in $E_{\mathsf{ec}}$ are restricted to be labeled with $\tau$ or $p\tau$ (where $p \in (0,1)$);
		\item If there is an edge $e' = (C,C') \in E_{\mathsf{ec}}$ with label $q\tau$, then there must exist some collective silent transition $C \xrightarrow{\coprod_{i \in I}p_i \tau} \coprod_{i \in I} C_i$ such that $q = p_k$ and $C' = C_k$ for some $k\in I$. Moreover, for all $i \in I$, the edge $e_i = (C, C_i) \in E_{\mathsf{ec}}$ and labeled by $p_i\tau$.
	\end{enumerate}
	
	Given a process $B'\in \PRCCS$, we write $B' \circlearrowleft_{\mathsf{ec}}$ to denote that $B'$ is in the $\tau$-EC labeled as  $\mathsf{ec}$.
	\end{definition}
	
	Intuitively speaking, $\tau$-EC is a strongly connected graph which contains only silent transitions and is closed under  probabilistic silent transitions.
	For two given  $\tau$-ECs, we will need to relate them under some binary relation $\mathcal{R}$. Definition \ref{defn-rtauEC} promotes the relation $\mathcal{R}$ between nodes (in $\tau$-EC) to a relation between $\tau$-ECs.
	
	\begin{definition}[Related $\tau$-EC]\label{defn-rtauEC}
	Given a binary relation $\mathcal{R}$, and two $\tau$-ECs $\mathsf{ec}_1 = (V_{\mathsf{ec}_1}, E_{\mathsf{ec}_1})$ and $\mathsf{ec}_2 = (V_{\mathsf{ec}_2}, E_{\mathsf{ec}_2})$.
	We say $\mathsf{ec}_1$ is related to $\mathsf{ec}_2$ with regard to $\mathcal{R}$, denoted by  $\mathsf{ec}_1 \;\mathcal{R}^{\ddagger}\; \mathsf{ec}_2$, iff for all $B \in V_{\mathsf{ec}_2}$ there exists $A \in V_{\mathsf{ec}_1}$ with $(A,B) \in \mathcal{R}$.
	\end{definition}
	
	\begin{remark}
	The notion of \emph{related $\tau$-EC} is actually a generalization of the corresponding requirement used in the definition of \emph{complete weak bisimulation} (Definition 2.8, \cite{liu_AnalyzingDivergenceBisimulation_2017}). The asymmetric requirement in Definition \ref{defn-rtauEC} is necessary for the correctness of Lemma \ref{lem-tauEC-equivalence}.
	\end{remark}
	
	Given two processes $A, B$ and let $G_A$ be the induced transition graph of $A$, we use $A \xRightarrow{} B$ to stand for that $B$  can be reached from $A$ in $G_A$ through a sequence of edges labeled with $\tau$ or $p\tau$ (where $p\in(0,1)$).
	We use $A \xRightarrow{} \circlearrowleft_{\mathsf{ec}}$ to denote that there exists $A'$ such that $A \xRightarrow{} A'$ and $A'\circlearrowleft_{\mathsf{ec}}$.
	
	Given a binary relation $\mathcal{R}$, its \emph{reverse relation} is denoted by $\mathcal{R}^{-1} = \{(B,A) \mid (A,B) \in \mathcal{R}\}$. The \emph{composition} of two relations $\mathcal{R}_1$ and $\mathcal{R}_2$ is denoted by $\mathcal{R}_1 \circ \mathcal{R}_2 = \{(A,C) \mid \exists B. (A,B) \in \mathcal{R}_1 \land (B,C) \in \mathcal{R}_2\}$.
	Before giving the definition of exhaustive branching bisimilarity for $\PRCCS$, we need the following definition that characterizes the divergence-sensitive property with regard to $\tau$-EC.
	

	\begin{definition}[$\tau$-EC invariant]\label{defn-rtauEC-inv}
		Given a binary relation $\mathcal{R}$ on $\PRCCS$, we say $\mathcal{R}$ is \emph{$\tau$-EC invariant} if for all $(A,B)\in\mathcal{R}$ the following hold: 
		\begin{enumerate}
			\item whenever $A \xRightarrow{} \circlearrowleft_{\mathsf{ec}_1}$, then $B \xRightarrow{} \circlearrowleft_{\mathsf{ec}_2}$ for some  $\tau$-EC $\mathsf{ec}_2$ such that  $\mathsf{ec}_1 \;\mathcal{R}^{\ddagger}\; \mathsf{ec}_2$;
			\item whenever $B \xRightarrow{} \circlearrowleft_{\mathsf{ec}_2}$, then $A \xRightarrow{} \circlearrowleft_{\mathsf{ec}_1}$ for some  $\tau$-EC $\mathsf{ec}_1$ such that  $\mathsf{ec}_2 \;{(\mathcal{R}^{-1})}^{\ddagger}\; \mathsf{ec}_1$.
		\end{enumerate}
	\end{definition}

\begin{lemma}\label{lem-tauEC-equivalence}
	If $\{\mathcal{E}_i\}_{i \in I}$ is a collection of $\tau$-EC invariant equivalences, then $\mathcal{E} = (\bigcup_{i \in I} \mathcal{E}_i)^{*}$ is also a $\tau$-EC invariant equivalence.
\end{lemma}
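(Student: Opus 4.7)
The plan is to reduce the general case to the single-step case via induction on the length of chains in the equivalence closure, and then show that the asymmetric lifting $\ddagger$ composes along such chains. Since each $\mathcal{E}_i$ is already an equivalence, $\mathcal{E} = (\bigcup_{i\in I}\mathcal{E}_i)^{*}$ is patently an equivalence, so only the $\tau$-EC invariance needs work. Moreover, because $\mathcal{E}$ is symmetric we have $\mathcal{E}^{-1}=\mathcal{E}$, and condition~(2) of Definition~\ref{defn-rtauEC-inv} follows from condition~(1) applied to the swapped pair. So I will concentrate on condition~(1).

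First, fix $(A,B)\in\mathcal{E}$. By construction of the equivalence closure and the symmetry of each $\mathcal{E}_i$, there is a finite chain $A=C_0,C_1,\ldots,C_n=B$ with indices $i_0,\ldots,i_{n-1}\in I$ such that $(C_j,C_{j+1})\in\mathcal{E}_{i_j}$ for each $j<n$. I then proceed by induction on $n$. The base case $n=0$ is trivial. For the inductive step, suppose $A\xRightarrow{}\circlearrowleft_{\mathsf{ec}_0}$. Applying the $\tau$-EC invariance of $\mathcal{E}_{i_0}$ to $(C_0,C_1)$ yields a $\tau$-EC $\mathsf{ec}_1$ with $C_1\xRightarrow{}\circlearrowleft_{\mathsf{ec}_1}$ and $\mathsf{ec}_0\;\mathcal{E}_{i_0}^{\ddagger}\;\mathsf{ec}_1$. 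Iterating along the chain produces $\mathsf{ec}_0,\mathsf{ec}_1,\ldots,\mathsf{ec}_n$ with $C_j\xRightarrow{}\circlearrowleft_{\mathsf{ec}_j}$ and $\mathsf{ec}_{j-1}\;\mathcal{E}_{i_{j-1}}^{\ddagger}\;\mathsf{ec}_j$ for each $j$.

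It then remains to establish the composition claim $\mathsf{ec}_0\;\mathcal{E}^{\ddagger}\;\mathsf{ec}_n$. Given any vertex $v_n\in V_{\mathsf{ec}_n}$, I will trace backwards along the chain: by $\mathsf{ec}_{n-1}\;\mathcal{E}_{i_{n-1}}^{\ddagger}\;\mathsf{ec}_n$ there is $v_{n-1}\in V_{\mathsf{ec}_{n-1}}$ with $(v_{n-1},v_n)\in\mathcal{E}_{i_{n-1}}$; continuing recursively I get $v_{n-2},\ldots,v_0$ with $v_j\in V_{\mathsf{ec}_j}$ and $(v_{j-1},v_j)\in\mathcal{E}_{i_{j-1}}$. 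Transitivity of the equivalence closure then gives $(v_0,v_n)\in\mathcal{E}$, which is exactly what Definition~\ref{defn-rtauEC} demands.

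The step I expect to require the most care is the composition in the previous paragraph, because of the asymmetry of $\ddagger$: the quantifier in Definition~\ref{defn-rtauEC} ranges over the \emph{second} $\tau$-EC, so witnesses must be pulled backwards through the chain rather than pushed forward. Fortunately this direction is consistent at every link, and the symmetry of each $\mathcal{E}_i$ (making the chain reversible at will) is what lets condition~(2) fall out from condition~(1) without any extra argument. Routine but worth double-checking is the verification that each intermediate $\mathsf{ec}_j$ is genuinely a $\tau$-EC in the sense of the paper, which is immediate because invariance of $\mathcal{E}_{i_{j-1}}$ already produces $\tau$-ECs.
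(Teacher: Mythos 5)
Your proof is correct and follows essentially the same route as the paper's: the paper phrases the argument as closure of $\tau$-EC invariance under union and composition (with the same backward tracing of witnesses through the asymmetric $\ddagger$ lifting in the composition step), while you make the underlying induction on chain length explicit and additionally note that condition~(2) of Definition~\ref{defn-rtauEC-inv} follows from condition~(1) by symmetry of $\mathcal{E}$. No gaps.
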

\begin{proof}
	It is easy to see that the $\tau$-EC invariant property is closed under relation union. Then it suffices to show that the $\tau$-EC invariant property is closed under relation composition. Now suppose both $\mathcal{E}_1$ and $\mathcal{E}_2$ are $\tau$-EC invariant, we will prove that their composition $\mathcal{E}_1 \circ \mathcal{E}_2$ is also $\tau$-EC invariant.
	Consider any pair $(A,C) \in \mathcal{E}_1 \circ \mathcal{E}_2$ with $A \xRightarrow{} \circlearrowleft_{\mathsf{ec}_1}$. According to definition, there exists process $B$ such that $(A,B) \in \mathcal{E}_1$ and $(B,C) \in \mathcal{E}_2$. 
	Since $ \mathcal{E}_1$ is $\tau$-EC invariant and $A \xRightarrow{} \circlearrowleft_{\mathsf{ec}_1}$, we have $B \xRightarrow{} \circlearrowleft_{\mathsf{ec}_2}$ for some  $\tau$-EC $\mathsf{ec}_2$ such that  $\mathsf{ec}_1 \;{\mathcal{E}_1}^{\ddagger}\; \mathsf{ec}_2$.
	Similarly, as $ \mathcal{E}_2$ is $\tau$-EC invariant and $B \xRightarrow{} \circlearrowleft_{\mathsf{ec}_2}$, we have $C \xRightarrow{} \circlearrowleft_{\mathsf{ec}_3}$ for some  $\tau$-EC $\mathsf{ec}_3$ such that  $\mathsf{ec}_2 \;{\mathcal{E}_2}^{\ddagger}\; \mathsf{ec}_3$.
	Since $\mathsf{ec}_2 \;{\mathcal{E}_2}^{\ddagger}\; \mathsf{ec}_3$, according to Definition \ref{defn-rtauEC}, for all $C' \in V_{\mathsf{ec}_3}$ there exists $B' \in V_{\mathsf{ec}_2}$ with $(B', C') \in \mathcal{E}_2$.
	Since $\mathsf{ec}_1 \;{\mathcal{E}_1}^{\ddagger}\; \mathsf{ec}_2$ and $B' \in V_{\mathsf{ec}_2}$, by Definition \ref{defn-rtauEC} again, there exists $A' \in V_{\mathsf{ec}_1}$ with $(A', B') \in \mathcal{E}_1$. Then $(A', C') \in \mathcal{E}_1 \circ \mathcal{E}_2$ follows from $(A', B') \in \mathcal{E}_1$ and $(B', C') \in \mathcal{E}_2$. Since for all $C' \in V_{\mathsf{ec}_3}$ there exists $A' \in V_{\mathsf{ec}_1}$ with $(A', C') \in \mathcal{E}_1 \circ \mathcal{E}_2$, we obtain that $\mathsf{ec}_1 \;(\mathcal{E}_1 \circ \mathcal{E}_2)^{\ddagger}\; \mathsf{ec}_3$.
	Now we have proved that there exists some  $\tau$-EC $\mathsf{ec}_3$ such that $C \xRightarrow{} \circlearrowleft_{\mathsf{ec}_3}$ and $\mathsf{ec}_1 \;(\mathcal{E}_1 \circ \mathcal{E}_2)^{\ddagger}\; \mathsf{ec}_3$. Therefore, $\mathcal{E}_1 \circ \mathcal{E}_2$ is  $\tau$-EC invariant.
\end{proof}
	
	\begin{definition}[Exhaustive branching bisimulation]\label{def-ebb}
	Let $\mathcal{E}$ be an equivalence on $\PRCCS$.
	$\mathcal{E}$ is called an \emph{exhaustive branching bisimulation} if $\mathcal{E}$ is a branching bisimulation and is $\tau$-EC invariant.
	
	We write $A\simeq_{e} B$ if $(A,B) \in \mathcal{E}$ for some exhaustive branching bisimulation $\mathcal{E}$.
	\end{definition}

\begin{lemma}
	If $\{\mathcal{E}_i\}_{i \in I}$ is a collection of exhaustive branching bisimulation, then so is $\mathcal{E} = (\bigcup_{i \in I} \mathcal{E}_i)^{*}$.
\end{lemma}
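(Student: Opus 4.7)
The plan is to observe that the statement is an immediate conjunction of two facts already established in the paper, so the proof will be short and will not require any fresh inductive construction. By Definition \ref{def-ebb}, an exhaustive branching bisimulation is precisely an equivalence that is simultaneously a branching bisimulation and $\tau$-EC invariant, so it suffices to verify that each of these two properties is preserved when we form the equivalence closure $\mathcal{E} = (\bigcup_{i \in I} \mathcal{E}_i)^{*}$ of a family enjoying both.

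First I would invoke Lemma \ref{lem-bb-equivalence} on $\{\mathcal{E}_i\}_{i \in I}$ viewed merely as a family of branching bisimulations; this yields that $\mathcal{E}$ is an equivalence and a branching bisimulation. Then I would invoke Lemma \ref{lem-tauEC-equivalence} on $\{\mathcal{E}_i\}_{i \in I}$ viewed as a family of $\tau$-EC invariant equivalences; this yields that $\mathcal{E}$ is $\tau$-EC invariant. Combining the two conclusions, $\mathcal{E}$ is an equivalence that is both a branching bisimulation and $\tau$-EC invariant, i.e.\ an exhaustive branching bisimulation by Definition \ref{def-ebb}.

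The main point worth flagging, although it is not really an obstacle, is the contrast with the situation of Lemma \ref{lem-dsbb-equivalence}. There, divergent $\epsilon$-tree preservation was \emph{not} closed under equivalence closure on its own (witness the counterexample $\mathcal{E}_1 \cup \mathcal{E}_2$ involving $A_1, B_1$), so the argument had to interleave the branching bisimulation structure with the divergence analysis in a single induction on the structure of the witnessing $\epsilon$-tree. By contrast, Lemma \ref{lem-tauEC-equivalence} has already shown that $\tau$-EC invariance is stable under both union and composition, and hence under equivalence closure, \emph{without} appealing to the branching bisimulation condition. Therefore the two defining conditions of exhaustive branching bisimulation decouple cleanly here, and the present lemma is a direct corollary of the earlier two.
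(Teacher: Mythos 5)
Your proposal is correct and matches the paper's own proof, which likewise derives the result immediately from Lemma \ref{lem-bb-equivalence} and Lemma \ref{lem-tauEC-equivalence}; you have simply spelled out the combination more explicitly. The remark contrasting this situation with Lemma \ref{lem-dsbb-equivalence} is accurate but not needed for the argument.
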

\begin{proof}
	Immediate by Lemma \ref{lem-bb-equivalence} and Lemma \ref{lem-tauEC-equivalence}.
\end{proof}
		
	\begin{theorem}\label{thm-ebb-equivalence}
	The relation $\simeq_{e}$ is the largest exhaustive branching bisimulation, and it is an equivalence.
	\end{theorem}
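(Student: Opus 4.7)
The plan is to follow the same template already used in the paper for Theorem~\ref{thm-bb-equivalence} and Theorem~\ref{thm-dsbb-equivalence}: exhibit a canonical ``largest'' candidate built from the union of all exhaustive branching bisimulations, show it qualifies as one by appealing to the preceding closure lemma, and then verify that this candidate coincides with $\simeq_e$.

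Concretely, I would first let $\{\mathcal{E}_i\}_{i \in I}$ enumerate all exhaustive branching bisimulations on $\PRCCS$ and set $\widehat{\mathcal{E}} \defn (\bigcup_{i \in I}\mathcal{E}_i)^{*}$. By the lemma immediately preceding the theorem, $\widehat{\mathcal{E}}$ is itself an exhaustive branching bisimulation, and by construction $\widehat{\mathcal{E}}$ is an equivalence relation that contains every $\mathcal{E}_i$, hence is the largest exhaustive branching bisimulation.

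Next I would argue $\simeq_e = \widehat{\mathcal{E}}$. For the inclusion $\widehat{\mathcal{E}} \subseteq \simeq_e$, note that $\widehat{\mathcal{E}}$ is itself an exhaustive branching bisimulation, so any pair in $\widehat{\mathcal{E}}$ satisfies the defining clause of $\simeq_e$. For the reverse inclusion, suppose $(A,B)\in\simeq_e$; by definition of $\simeq_e$ there exists some exhaustive branching bisimulation $\mathcal{E}_j$ with $(A,B)\in\mathcal{E}_j \subseteq \bigcup_{i\in I}\mathcal{E}_i \subseteq \widehat{\mathcal{E}}$. Hence $\simeq_e = \widehat{\mathcal{E}}$, which shows at once that $\simeq_e$ is the largest exhaustive branching bisimulation and that it is an equivalence relation.

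I do not expect any real obstacle here: the whole content of the argument is already carried out in the lemma, which establishes the non-trivial closure of the exhaustive branching bisimulation property under arbitrary unions followed by equivalence closure. The theorem is then just the usual packaging of that fact, in direct analogy with the proofs of Theorem~\ref{thm-bb-equivalence} and Theorem~\ref{thm-dsbb-equivalence}, and so a short two-sentence proof citing the lemma should suffice.
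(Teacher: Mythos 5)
Your proposal is correct and follows exactly the route the paper intends: the theorem is stated immediately after the lemma showing that the equivalence closure of a union of exhaustive branching bisimulations is again one, and the standard union-of-all-candidates packaging you describe is precisely how the paper derives it (in direct analogy with Theorems~\ref{thm-bb-equivalence} and~\ref{thm-dsbb-equivalence}).
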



	We note that $\simeq^{\Delta}$ and $\simeq_{e}$ treat divergence in different ways. In $\simeq^{\Delta}$, a process $A$ is divergent  if and only if it can diverge with probability 1 (i.e., there exists a divergent $\epsilon$-tree of $A$). In contrast, in $\simeq_{e}$, a process $B$ is divergent if it can diverge with some non-zero probability (i.e., $B$ can reach some $\tau$-EC).
	Here we prove that $\simeq^{\Delta}$ implies $\simeq_{e}$ (Theorem \ref{thm-dsbbVSebb}). The strictness of the implication will be shown in Example \ref{exam-StrictInclusion}.
	
	\begin{remark}
	One may consider defining a finer notion of bisimilarity by requiring that the probability of reaching two related $\tau$-ECs to be the same for two bisimilar processes. Actually a  similar method has been taken to define \emph{probabilistic applicative bisimulation} for the probabilistic $\lambda$-calculus \cite{dallago_CoinductiveEquivalencesHigherorder_2014}. However, this idea does not work directly in our setting. Consider the following $\mathrm{RCCS}_{fs}$ processes: $P=\tau. P_1+ \tau.P_2$, where $P_1=\frac{1}{2}\tau.a\oplus\frac{1}{2}\tau.\Omega$, $P_2=\frac{1}{3}\tau.b\oplus\frac{2}{3}\tau.\Omega$, and $\Omega=\mu X. (\tau. X)$  is an always divergent process. As the first step from $P$ is a nondeterministic choice between $P_1$ and $P_2$, we cannot simply say that $P$ diverge with probability $\frac{1}{2}$ or $\frac{2}{3}$. The reason why such definition does work in the probabilistic $\lambda$-calculus model is the absence of nondeterminism. That is to say, given a probabilistic $\lambda$-term, the induced probabilistic distribution by applying some specific reduction strategy (such as call-by-value or call-by-name strategy) is unique.
	This is also the reason why \emph{schedulers} \cite{turrini_PolynomialTimeDecision_2015} have been used to resolve non-determinism for nondeterministic probabilistic models.
	\end{remark}
	
	The following lemma states that for any two nodes $A,B$ from a $\tau$-EC, we can construct a regular  $\epsilon$-tree whose root is $A$ and every leaf is $B$.
	
	\begin{lemma}\label{lem-EcAndTree}
	Given a $\tau$-EC $\mathsf{ec} = (V_{\mathsf{ec}}, E_{\mathsf{ec}})$, let $\mathcal{E} = V_{\mathsf{ec}} \times V_{\mathsf{ec}}$.
	If $A, B \in V_{\mathsf{ec}}$, then there exists a regular $\epsilon$-tree $t_{\mathcal{E}}^A$ of $A$ with regard to $\mathcal{E}$ such that all leaves of $t_{\mathcal{E}}^A$ are labeled by $B$.
	\end{lemma}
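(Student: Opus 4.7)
The plan is to construct $t_{\mathcal{E}}^A$ explicitly, by prescribing a memoryless strategy that, from every non-$B$ node of the tree, takes an immediate silent transition whose edges all belong to $\mathsf{ec}$, whose targets all lie in $V_{\mathsf{ec}}$, and at least one of whose targets is strictly closer to $B$ in the graph $\mathsf{ec}$. This strategy yields a tree whose only leaves are the $B$-labeled nodes, and the remaining task is to verify that this tree has convergence probability $1$.

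To set things up, I would let $n = |V_{\mathsf{ec}}|$ and, for each $C \in V_{\mathsf{ec}}$, let $d(C)$ be the length of the shortest path in $\mathsf{ec}$ from $C$ to $B$; strong connectivity gives $d(C) \le n - 1$. For $C \ne B$, fix such a shortest path and let $(C, C') \in E_{\mathsf{ec}}$ be its first edge, so $d(C') = d(C) - 1$. If this edge is labeled $\tau$, I would pick the deterministic transition $C \xrightarrow{\tau} C'$. If it is labeled $p\tau$ for some $p \in (0,1)$, then by clause~(3) in the definition of $\tau$-EC the edge arises from a collective silent transition $C \xrightarrow{\coprod_{i \in I} p_i \tau} \coprod_{i \in I} C_i$ all of whose targets already lie in $V_{\mathsf{ec}}$, and I would pick that collective transition; one of the $C_i$ coincides with $C'$ and hence has strictly smaller $d$-value. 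Writing $\mathsf{itr}_C$ for the chosen transition, I would define $t_{\mathcal{E}}^A$ recursively: the root is labeled $A$; any node labeled $B$ is a leaf; any other node labeled $C$ expands according to $\mathsf{itr}_C$, with edge labels equal to the probabilities occurring in $\mathsf{itr}_C$. Since every successor remains in $V_{\mathsf{ec}} = [A]_{\mathcal{E}}$, this conforms to Definition~\ref{defn-epsilontree}.

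Regularity is where the actual work happens. Let $\delta > 0$ be the minimum among the finitely many probability labels occurring on edges of $\mathsf{ec}$. From any non-leaf node labeled $C$ with $d(C) = k \ge 1$, the transition $\mathsf{itr}_C$ moves to a child with $d$-value $k - 1$ with probability at least $\delta$; iterating at most $n - 1$ times reaches $B$ with probability at least $p_{\min} \defn \delta^{n-1} > 0$. Because all non-leaf nodes of $t_{\mathcal{E}}^A$ lie in $V_{\mathsf{ec}}$, this bound is uniform in the state from which we start counting, so the probability of not reaching any $B$-leaf within $m(n-1)$ steps is at most $(1 - p_{\min})^m$. Letting $m \to \infty$ yields $\mathsf{P}_k(t_{\mathcal{E}}^A) \to 1$, so $\mathsf{P}^c(t_{\mathcal{E}}^A) = 1$ and $t_{\mathcal{E}}^A$ is regular.

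The main obstacle is handling the probabilistic transitions: whenever the strategy has to use a collective transition to make progress, the tree branches into several children, some of which may actually be further from $B$ than their parent, and none of them can be avoided. What rescues the argument is precisely the closure clause in the definition of $\tau$-EC, which keeps all such children inside $V_{\mathsf{ec}}$; strong connectivity then supplies a continuation from every branch, and the hitting-probability estimate $\delta^{n-1}$ is uniform across branches and independent of the history.
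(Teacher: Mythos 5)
Your proposal is correct and follows essentially the same route as the paper: both constructions use strong connectivity to follow shortest paths toward $B$ (with the closure clause of the $\tau$-EC definition keeping every collective-transition target inside $V_{\mathsf{ec}}$), and both establish regularity via a geometric-decay bound on the probability of not yet having reached a $B$-leaf. The only cosmetic difference is that the paper replaces non-$B$ leaves in blocks by the finite trees induced by whole shortest paths and bounds the residual mass by $p^n$, whereas you apply the shortest-path rule one step at a time and bound it by $(1-\delta^{n-1})^m$.
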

	
	\begin{proof}
	As $V_{\mathsf{ec}}$ is a finite set, for any two nodes $A, B \in V_{\mathsf{ec}}$, let $V_{\mathsf{ec}} \backslash \{B\} = \{A_i~|~{i \in I}\}$ where $I$ is a finite index set.
	Surely $A\in \{A_i~|~{i \in I}\}$.
	As $\mathsf{ec}$ is strongly connected, we can choose a shortest path $\pi_i$ from $A_i$ to $B$ for all $i \in I$.
	We use $(t_0)_{\mathcal{E}}^{A_i}$ to stand for the minimal (finite) $\epsilon$-tree induced by $\pi_i$.
	We will inductively (starting from $(t_0)_{\mathcal{E}}^{A}$) build a regular tree whose leaves are all $B$.
	
	For any tree $t$, we use $\mathsf{P}^{\ne B}(t)$ to denote the probability of all finite paths in $t$ that do not end with $B$. 
	Again by strong connectivity, $\forall i\in I$, $p_i = \mathsf{P}^{\ne B}((t_0)_{\mathcal{E}}^{A_i}) < 1$. Let $p = \max_{i\in I}\{p_i\}$, $p<1$.
	We then inductively build a sequence of $\epsilon$-trees $\{(t_n)_{\mathcal{E}}^{A}\}_{n \in \mathsf{N}}$ of $A$ with regard to $\mathcal{E}$ as follows:
	\begin{center}
		\emph{For each $n\ge 0$, $(t_{n+1})_{\mathcal{E}}^{A}$ is obtained from $(t_{n})_{\mathcal{E}}^{A}$ by replacing every leaf $A_i\neq B$ by $(t_0)_{\mathcal{E}}^{A_i}$.}
	\end{center}
	\noindent   We next show that $\mathsf{P}^{\ne B}((t_{n})_{\mathcal{E}}^{A}) \le p^{n}$ holds for all $n \ge 0$ by induction on $n$.
	\begin{itemize}
		\item The base case $n=0$ holds trivially.
		\item For the induction step, suppose $\mathsf{P}^{\ne B}((t_{n})_{\mathcal{E}}^{A}) \le p^{n}$:\\ For any leaf $A_i\neq B$ in $(t_{n})_{\mathcal{E}}^{A}$, let $(\pi_n)_i$ be the path from $A$ to $A_i$ in $(t_{n})_{\mathcal{E}}^{A}$. Then we have $
		\mathsf{P}^{\ne B}((t_{n+1})_{\mathcal{E}}^{A})  = \sum_{i \in I} \left(\mathsf{P}((\pi_n)_i) \cdot \mathsf{P}^{\ne B}((t_{0})_{\mathcal{E}}^{A_i}) \right)\le \sum_{i \in I} \mathsf{P}((\pi_n)_i) \cdot p = p \cdot\left( \mathsf{P}^{\ne B}((t_{n})_{\mathcal{E}}^{A})\right) \le p \cdot p^n = p^{n+1}
		$, as desired.
	\end{itemize}
	
	To the set of all branches of $(t_{n})_{\mathcal{E}}^{A}$, either they end with leaf $B$, or the probability of the rest (i.e., $\mathsf{P}^{\ne B}((t_{n})_{\mathcal{E}}^{A})$ )  is upper-bounded by $ p^{n}$, and can be replaced further.
	Thus when $n$ approaches infinity, the convergence probability of the $\epsilon$-tree $(t_{\infty})_{\mathcal{E}}^{A}$ is $0$. By Definition \ref{defn-regdiv}, $(t_{\infty})_{\mathcal{E}}^{A}$ is a regular $\epsilon$-tree of $A$ with regard to $\mathcal{E}$ whose all leaves are $B$.
	\end{proof}
	
	The following lemma shows that the nodes in a $\tau$-EC are all branching bisimilar with explicit divergence.
	\begin{lemma}\label{lem-tauEC}
	Given a $\tau$-EC $\mathsf{ec} = (V_{\mathsf{ec}}, E_{\mathsf{ec}})$, and suppose $A, B \in V_{\mathsf{ec}}$.
	Then $A \simeq^{\Delta} B$.
	\end{lemma}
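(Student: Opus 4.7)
The plan is to exhibit an explicit witness equivalence. Let $\mathcal{E}$ be the equivalence on $\PRCCS$ whose classes are $V_{\mathsf{ec}}$ together with singletons $\{C\}$ for every $C\in\PRCCS\setminus V_{\mathsf{ec}}$. Since $A,B\in V_{\mathsf{ec}}$ we have $(A,B)\in\mathcal{E}$, so it suffices to verify that $\mathcal{E}$ is a branching bisimulation and that $\mathcal{E}$ is divergent $\epsilon$-tree preserving; $A\simeq^{\Delta}B$ then follows from Definition \ref{def-dsbb}.

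For the branching-bisimulation condition, the only non-trivial pairs are $(A',B')$ with $A',B'\in V_{\mathsf{ec}}$. Suppose $A' \rightsquigarrow_{\mathcal{E}}\xrightarrow{\ell} \mathcal{C}$ via a regular $\epsilon$-tree $t_{\mathcal{E}}^{A'}$ whose leaves each perform $L\xrightarrow{\ell} L'$ with $L'\in\mathcal{C}$. By Lemma \ref{lem-EcAndTree}, $B'$ admits a regular $\epsilon$-tree $s_{\mathcal{E}}^{B'}$ whose leaves are all labelled $A'$. I graft a fresh copy of $t_{\mathcal{E}}^{A'}$ onto every leaf of $s_{\mathcal{E}}^{B'}$ to obtain a tree $t$ rooted at $B'$; every node of $t$ still lies in $V_{\mathsf{ec}}$, so it is a bona fide $\epsilon$-tree of $B'$ with regard to $\mathcal{E}$, and the convergence probability factors as $\mathsf{P}^c(t)=\mathsf{P}^c(s_{\mathcal{E}}^{B'})\cdot\mathsf{P}^c(t_{\mathcal{E}}^{A'})=1$. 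Its leaves inherit the $\ell$-transitions of $t_{\mathcal{E}}^{A'}$, so $t$ witnesses $B'\rightsquigarrow_{\mathcal{E}}\xrightarrow{\ell}\mathcal{C}$. The $q$-transition case is handled verbatim, since that notion differs from the $\ell$-transition case only in the side condition imposed on leaves, which the grafting preserves intact.

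For divergent $\epsilon$-tree preserving, I only need to show that every $C\in V_{\mathsf{ec}}$ admits a divergent $\epsilon$-tree with regard to $\mathcal{E}$. By the strong connectivity of $\mathsf{ec}$ together with the closure clause in the definition of $\tau$-EC, at every $C\in V_{\mathsf{ec}}$ one can select either a $\tau$-edge $C\xrightarrow{\tau}C''$ with $C''\in V_{\mathsf{ec}}$, or a collective silent transition $C\xrightarrow{\coprod_{i\in I}p_i\tau}\coprod_{i\in I}C_i$ with every $C_i\in V_{\mathsf{ec}}$. Expanding $C$ by iterating this choice produces an $\epsilon$-tree with no finite branch and all nodes in $V_{\mathsf{ec}}$; hence $\mathsf{P}^c=0$, i.e. a divergent $\epsilon$-tree with regard to $\mathcal{E}$. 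In particular both $A$ and $B$ are divergent with regard to $\mathcal{E}$.

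The substantive step is the grafting argument: I must check that the $\epsilon$-tree produced by joining $s_{\mathcal{E}}^{B'}$ to copies of $t_{\mathcal{E}}^{A'}$ is still a legitimate regular $\epsilon$-tree of $B'$ with regard to $\mathcal{E}$. All intermediate nodes introduced by $s_{\mathcal{E}}^{B'}$ already sit in $V_{\mathsf{ec}}=[B']_{\mathcal{E}}$ by construction of Lemma \ref{lem-EcAndTree}, and Definition \ref{defn-epsilontree} is satisfied locally at each internal node (since $s_{\mathcal{E}}^{B'}$ and $t_{\mathcal{E}}^{A'}$ separately satisfy it). The multiplicativity $\mathsf{P}^c(t)=\mathsf{P}^c(s_{\mathcal{E}}^{B'})\cdot\mathsf{P}^c(t_{\mathcal{E}}^{A'})$ follows from summing path probabilities over the two layers. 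Everything else — the singleton-class pairs, the divergent case in both directions — is immediate, so the argument is complete.
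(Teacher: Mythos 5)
Your proof is correct and follows essentially the same route as the paper's: both exhibit a witness equivalence built from $V_{\mathsf{ec}}$, invoke Lemma~\ref{lem-EcAndTree} to graft the matching $\epsilon$-tree onto a regular tree whose leaves all reach the partner state, and verify divergence preservation by constructing an infinite all-silent tree inside the $\tau$-EC. The only (cosmetic) difference is that you take the bare partition $\{V_{\mathsf{ec}}\}$ plus singletons, whereas the paper closes $V_{\mathsf{ec}}\times V_{\mathsf{ec}}$ up with $\simeq^{\Delta}$; your choice is if anything slightly cleaner, since it keeps all transitions and classes relative to a single equivalence.
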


	\begin{proof}
	Let $\mathcal{E} = V_{\mathsf{ec}} \times V_{\mathsf{ec}}$ and $\mathcal{E}' = (\mathcal{E} ~\cup \simeq^{\Delta})^{*}$.
	We only need to show that $\mathcal{E}'$ is a branching bisimulation with explicit divergence. Since each node $A \in V_{\mathsf{ec}}$ satisfies that $A \Uparrow_{\mathcal{E}}$, it is not hard to see that $\mathcal{E}'$ is divergent $\epsilon$-tree preserving.
	Next we prove that $\mathcal{E}'$ is a  branching bisimulation.
	
	Consider any pair $(A,B) \in \mathcal{E}$.
	The $\ell$-transition $B \rightsquigarrow_{\mathcal{E}}\xrightarrow{\ell} \mathcal{B}$ of $B$ consists of a regular $\epsilon$-tree $t_{\mathcal{E}}^{B}$ of $B$ satisfying that $L \xrightarrow{\ell} L' \in \mathcal{B}$ for every leaf $L$ of $t_{\mathcal{E}}^{B}$. According to Lemma \ref{lem-EcAndTree}, there exists a regular $\epsilon$-tree $t_{\mathcal{E}}^{A}$ of $A$ with regard to $\mathcal{E}$ whose leaves are all $B$.  By replacing all leaves $B$ with $t_{\mathcal{E}}^{B}$ in $t_{\mathcal{E}}^{A}$, we obtain a new $\epsilon$-tree $(t')_{\mathcal{E}}^{A}$ of $A$ satisfying that $L \xrightarrow{\ell} L' \in \mathcal{B}$ for every leaf $L$ of $(t')_{\mathcal{E}}^{A}$.
	We then verify the regularity of $(t')_{\mathcal{E}}^{A}$.
	Given any $\delta \in (0,1)$, since $t_{\mathcal{E}}^{A}$ and $t_{\mathcal{E}}^{B}$ are two regular $\epsilon$-trees, there exists two numbers $M_{\delta}$ and $N_{\delta}$ such that $1 - \mathsf{P}_{M_{\delta}}(t_{\mathcal{E}}^{A}) < \delta/2$ and $1 - \mathsf{P}_{N_{\delta}}(t_{\mathcal{E}}^{A}) < \delta/2$.
	Since the $\epsilon$-tree $(t')_{\mathcal{E}}^{A}$ is obtained from $t_{\mathcal{E}}^{A}$ by replacing all leaves $B$ in $t_{\mathcal{E}}^{A}$ with $t_{\mathcal{E}}^{B}$, we have
	$$
	1 - \mathsf{P}_{M_{\delta}+N_{\delta}}((t')_{\mathcal{E}}^{A})
	< (1 - \mathsf{P}_{M_{\delta}}(t_{\mathcal{E}}^{A})) + \mathsf{P}_{M_{\delta}}(t_{\mathcal{E}}^{A}) \cdot (1 - \mathsf{P}_{N_{\delta}}(t_{\mathcal{E}}^{B}))
	< \delta/2 + 1 \cdot \delta/2 = \delta.
	$$
	Therefore $(t')_{\mathcal{E}}^{A}$ is a regular $\epsilon$-tree.
	Now we see that the $\ell$-transition $B \rightsquigarrow_{\mathcal{E}}\xrightarrow{\ell} \mathcal{B}$ is bisimulated by $A \rightsquigarrow_{\mathcal{E}}\xrightarrow{\ell} \mathcal{B}$. The case for $q$-transition is similar and thus omitted.
	\end{proof}
	
	For any node $A$ in a $\tau$-EC, a Dirac scheduler $\sigma_A$ of $A$ induces a divergent tree of $A$, Lemma \ref{lem-tauEC} further ensures every node in the tree is in the same equivalence class with respect to $\simeq^{\Delta}$, then we have the following.
	\begin{corollary}\label{coro-tauEC}
	Given a $\tau$-EC $\mathsf{ec} = (V_{\mathsf{ec}}, E_{\mathsf{ec}})$, and suppose $A \in V_{\mathsf{ec}}$.
	Then $A \Uparrow_{\simeq^{\Delta}}$.
	\end{corollary}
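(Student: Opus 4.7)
The plan is to exhibit an explicit divergent $\epsilon$-tree of $A$ with regard to $\simeq^{\Delta}$, thereby witnessing $A \Uparrow_{\simeq^{\Delta}}$ directly from Definition \ref{defn-regdiv}. The construction will be a recursive unfolding of transitions inside $\mathsf{ec}$, whose correctness rests on two separate ingredients: closure of $\mathsf{ec}$ under silent transitions to guarantee the tree stays inside $V_{\mathsf{ec}}$, and Lemma \ref{lem-tauEC} to guarantee that the labels all sit in the single equivalence class $[A]_{\simeq^{\Delta}}$.

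First I would build a candidate tree $t^{A}$ as follows. Starting from the root labeled $A$, at every node $C \in V_{\mathsf{ec}}$ I pick a single outgoing edge of $\mathsf{ec}$ (say via a Dirac choice on the transitions available in $E_{\mathsf{ec}}$) and extend the tree accordingly: if the chosen edge is a $\tau$-edge $C \xrightarrow{\tau} C'$, attach $C'$ as the sole child with label $1$; if it is a $p\tau$-edge arising from a collective silent transition $C \xrightarrow{\coprod_{i \in I}p_i\tau} \coprod_{i \in I} C_i$, attach all $C_i$ as children with the appropriate probabilities. The key point is that every outgoing edge taken in this process is of type $\tau$ or $p\tau$ (by the second clause of the definition of $\tau$-EC), and by the third clause the whole collective transition has all its targets inside $V_{\mathsf{ec}}$. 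Strong connectedness of $\mathsf{ec}$ guarantees that every node $C \in V_{\mathsf{ec}}$ has at least one outgoing edge in $E_{\mathsf{ec}}$, so the unfolding never terminates and every branch of $t^{A}$ is infinite.

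Since all branches are infinite, $\mathsf{P}_k(t^{A}) = 0$ for every $k$, hence $\mathsf{P}^c(t^{A}) = 0$, so $t^{A}$ is divergent in the sense of Definition \ref{defn-regdiv}. By construction every node label lies in $V_{\mathsf{ec}}$, and by Lemma \ref{lem-tauEC} any two elements of $V_{\mathsf{ec}}$ are related by $\simeq^{\Delta}$; in particular all node labels belong to $[A]_{\simeq^{\Delta}}$, so $t^{A}$ is a bona fide $\epsilon$-tree of $A$ with regard to $\simeq^{\Delta}$. Consequently $A \Uparrow_{\simeq^{\Delta}}$, as required.

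I expect no substantial obstacle here: the only subtlety is making sure the recursion really does stay inside $V_{\mathsf{ec}}$, and this is already bundled into the definition of $\tau$-EC (closure of probabilistic choices plus strong connectedness). All the semantic content — that the nodes inside a single $\tau$-EC collapse into one $\simeq^{\Delta}$-class — is already done in Lemma \ref{lem-tauEC}, so this corollary is a short packaging of the two facts.
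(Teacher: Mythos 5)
Your proposal is correct and follows essentially the same route as the paper, which likewise obtains the divergent tree by a Dirac-style unfolding inside the $\tau$-EC (staying within $V_{\mathsf{ec}}$ by closure of collective transitions and strong connectedness) and then invokes Lemma \ref{lem-tauEC} to place all node labels in $[A]_{\simeq^{\Delta}}$. No gaps beyond those already implicit in the paper's own one-line argument.
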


	\begin{theorem}[$\simeq^{\Delta} \;\subseteq\; \simeq_{e}$]\label{thm-dsbbVSebb}
	The equivalence $\simeq^{\Delta}$ is an exhaustive branching bisimulation.	
	\end{theorem}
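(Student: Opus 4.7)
Since $\simeq^{\Delta}$ is already a branching bisimulation by Theorem \ref{thm-dsbb-equivalence}, Definition \ref{def-ebb} reduces the theorem to showing that $\simeq^{\Delta}$ is $\tau$-EC invariant. The two clauses of Definition \ref{defn-rtauEC-inv} play symmetric roles, so it suffices to prove the following: whenever $(A,B)\in\simeq^{\Delta}$ and $A\Rightarrow \circlearrowleft_{\mathsf{ec}_1}$, there is a $\tau$-EC $\mathsf{ec}_2$ with $B\Rightarrow \circlearrowleft_{\mathsf{ec}_2}$ and $\mathsf{ec}_1\;(\simeq^{\Delta})^{\ddagger}\;\mathsf{ec}_2$. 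The key simplification is Lemma \ref{lem-tauEC}: all nodes of $\mathsf{ec}_1$ sit inside a single $\simeq^{\Delta}$-class $\mathcal{C}_1$, so as soon as I produce any $\mathsf{ec}_2$ with $V_{\mathsf{ec}_2}\subseteq \mathcal{C}_1$, the target relation holds automatically.

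The plan proceeds in two phases. In the first phase I steer $B$ into $\mathcal{C}_1$. Fix $A_0\in V_{\mathsf{ec}_1}$ with a silent witness $A=A_{(0)}\to A_{(1)}\to\cdots\to A_{(k)}=A_0$, and prove by induction on $k$ that $B$ can silently reach some $B^{*}\simeq^{\Delta} A_0$. The base case is trivial. For the inductive step, inspect $A_{(0)}\to A_{(1)}$. If it is state-preserving, i.e.\ $A_{(1)}\in[A]_{\simeq^{\Delta}}$, transitivity hands the task back to the IH on $(A_{(1)},B)$. Otherwise the transition is state-changing and is either a single $\tau$-transition (yielding an $\ell$-transition with $\ell=\tau$ and target class $[A_{(1)}]_{\simeq^{\Delta}}\neq[A]_{\simeq^{\Delta}}$) or a state-changing probabilistic silent transition (yielding a $q$-transition with $q=\mathsf{P}_{\simeq^{\Delta}}(A\xrightarrow{\coprod p_i\tau}[A_{(1)}]_{\simeq^{\Delta}})>0$). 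In both cases Definition \ref{defn-BB} provides a matching transition from $B$, and by picking any leaf $L$ of the witnessing regular $\epsilon$-tree together with a successor that lands in $[A_{(1)}]_{\simeq^{\Delta}}$, I obtain $B\Rightarrow B'$ with $B'\simeq^{\Delta} A_{(1)}$, after which the IH applies to $(A_{(1)},B')$.

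In the second phase I extract the $\tau$-EC. Corollary \ref{coro-tauEC} gives $A_0\Uparrow_{\simeq^{\Delta}}$, and divergent $\epsilon$-tree preservation of $\simeq^{\Delta}$ (Definition \ref{def-dsbb}) propagates this to $B^{*}\Uparrow_{\simeq^{\Delta}}$, so $B^{*}$ has a divergent $\epsilon$-tree $t$ with all labels in $\mathcal{C}_1$. Because $\mathsf{P}^c(t)=0$ and $\mathsf{P}_k(t)$ is monotone in $k$, every $\mathsf{P}_k(t)=0$, which forces $t$ to have no leaves at all; equivalently, every state occurring in $t$ has a silent transition (single $\tau$ or collective, recorded in full by the last clause of Definition \ref{defn-epsilontree}) whose every successor again occurs in $t$. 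Let $U$ denote the finite set of state labels of $t$ and run the classical maximal end-component extraction inside the silent subgraph restricted to $U$: the fixed point is non-empty (it contains $U$ itself), and any bottom strongly connected component $\mathsf{ec}_2$ of that subgraph satisfies the $\tau$-EC closure condition and is reachable from $B^{*}$ by following $U$-respecting silent transitions. Since $V_{\mathsf{ec}_2}\subseteq U\subseteq\mathcal{C}_1$, chaining $B\Rightarrow B^{*}\Rightarrow \circlearrowleft_{\mathsf{ec}_2}$ completes the argument.

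I expect the delicate point to be the extraction in the last paragraph: one must argue that the chosen bottom SCC genuinely meets clause (3) of the $\tau$-EC definition, i.e.\ that whenever one $p_k\tau$-edge of a collective silent transition is retained, every sibling edge is retained too. The plan is to exploit the fact that collectives are written out in full inside the $\epsilon$-tree, so $U$ inherits this closure and the end-component iteration preserves it. Isolating this invariant cleanly, and showing that it survives the iterative removal step, is the main technical care required.
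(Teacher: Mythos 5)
Your proposal is correct and follows essentially the same route as the paper's proof: both reduce the claim to $\tau$-EC invariance, transfer reachability of $\mathsf{ec}_1$ from $A$ to an equivalent state on $B$'s side by induction along the silent path using the branching-bisimulation clauses, then combine Corollary \ref{coro-tauEC} with divergent-$\epsilon$-tree preservation to obtain a divergent $\epsilon$-tree at $B$'s side whose labels all lie in the class of $V_{\mathsf{ec}_1}$, and finally extract a $\tau$-EC from that tree. The only divergence is in the last step, where the paper cites the end-component property of de~Alfaro (Theorem 3.2 of that work) to conclude that the tree leads into a $\tau$-EC, whereas you extract it by hand via a bottom-SCC argument on the leafless tree --- a sound, self-contained replacement whose sibling-closure worry you correctly resolve by noting that collective transitions are recorded in full in the $\epsilon$-tree and that a bottom SCC has no outgoing edges.
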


	\begin{proof}
	Suppose $(A,B) \in \;\simeq^{\Delta}$ and $A \xRightarrow{} \circlearrowleft_{\mathsf{ec}_1}$. We need to show that there exists some $\mathsf{ec}_2$ such that $B \xRightarrow{} \circlearrowleft_{\mathsf{ec}_2}$ and  $\mathsf{ec}_1 \;(\simeq^{\Delta})^{\ddagger}\; \mathsf{ec}_2$.
	By assumption there exists $A'$ such that $A\xRightarrow{} A'$ and $A'\circlearrowleft_{\mathsf{ec}_1}$. Since $(A,B) \in \;\simeq^{\Delta}$ and $\simeq^{\Delta}$ is a branching bisimulation, we can show that there exists $B'$ such that $B\xRightarrow{} B'$ and $(A',B') \in \;\simeq^{\Delta}$ by induction on the path from $A$ to $A'$.
	Since $A'\circlearrowleft_{\mathsf{ec}_1}$, by Corollary \ref{coro-tauEC}, there exists a divergent $\epsilon$-tree of $A'$ with respect to $\simeq^{\Delta}$.
	Since $(A',B') \in \;\simeq^{\Delta}$, there exists a divergent $\epsilon$-tree $t^{B'}_{\simeq^{\Delta}}$ of $B'$ with respect to $\simeq^{\Delta}$.
	Due to the second property of ECs presented in Theorem 3.2 in \cite{dealfaro_FormalVerificationProbabilistic_1998}, from $B'$ any path in the tree $t^{B'}_{\simeq^{\Delta}}$ will end up with probability one in a $\tau$-EC.
	Arbitrarily choose one of these $\tau$-ECs $\mathsf{ec}_2$,  then there exists $B''$ such that $B'\xRightarrow{} B''$ and $B''\circlearrowleft_{\mathsf{ec}_2}$.
	Now for any $B''' \in S_{\mathsf{ec}_2}$, since $B'''$ is in the $\epsilon$-tree $t^{B'}_{\simeq^{\Delta}}$, we have $B'''\simeq^{\Delta}B'\simeq^{\Delta}A'$.
	Therefore $\mathsf{ec}_1 \;(\simeq^{\Delta})^{\ddagger}\; \mathsf{ec}_2$.
	Putting together the above analysis, we have $B \xRightarrow{} B' \xRightarrow{} B''$ such that $B''\circlearrowleft_{\mathsf{ec}_2}$ and $\mathsf{ec}_1 \;(\simeq^{\Delta})^{\ddagger}\; \mathsf{ec}_2$.
	\end{proof}

	\section{Variations on divergence-sensitive bisimulations}\label{sec-spectrum}
	In this section, we study the relationship between several divergence-sensitive branching and weak bisimilarities. Previously, two such bisimilarities were proposed in \cite{fu_ModelIndependentApproach_2021} and \cite{he_DivergencesensitiveWeakProbabilistic_2023}, respectively.
	A comparative study of these two equivalences has not been carried out so far.
	We will show that probability plus divergence bring extra separation power to the model.
	
	The definition of \emph{exhaustive weak bisimulation} \cite{he_DivergencesensitiveWeakProbabilistic_2023} can be reformulated as follows in our setting.
	\begin{definition}[Exhaustive weak bisimulation]\label{def-ewb}
		Let $\mathcal{E}$ be an equivalence on $\PRCCS$.
		The relation $\mathcal{E}$ is called an \emph{exhaustive weak bisimulation} if $\mathcal{E}$ is a weak bisimulation and is $\tau$-EC invariant.	
		
		We write $A\approx_{e} B$ if $(A,B) \in \mathcal{E}$ for some exhaustive weak bisimulation $\mathcal{E}$.
	\end{definition}

	\begin{theorem}[\cite{he_DivergencesensitiveWeakProbabilistic_2023}]\label{thm-ewb-equivalence}
	The relation $\approx_{e}$ is the largest exhaustive weak bisimulation, and it is an equivalence.
	\end{theorem}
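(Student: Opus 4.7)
The plan is to mirror the argument used for Theorem \ref{thm-ebb-equivalence} in the branching case. First I would establish the following auxiliary lemma: if $\{\mathcal{E}_i\}_{i \in I}$ is a collection of exhaustive weak bisimulations, then $\mathcal{E} = (\bigcup_{i \in I} \mathcal{E}_i)^{*}$ is again an exhaustive weak bisimulation. This splits into two obligations. The first, that $\mathcal{E}$ is a weak bisimulation, is a standard fact implicit in Theorem \ref{thm-wb-equivalence}: the equivalence closure of a family of weak bisimulations is again a weak bisimulation. The second, that $\mathcal{E}$ is $\tau$-EC invariant, is immediate from Lemma \ref{lem-tauEC-equivalence}, since each $\mathcal{E}_i$ is $\tau$-EC invariant by the definition of an exhaustive weak bisimulation.

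Given the lemma, the theorem follows by a routine union argument. Let $\{\mathcal{E}_j\}_{j\in J}$ enumerate all exhaustive weak bisimulations on $\PRCCS$. By Definition \ref{def-ewb} we have $\approx_e \,=\, \bigcup_{j\in J}\mathcal{E}_j$. Applying the lemma to this family produces the exhaustive weak bisimulation $\mathcal{E}^{\star} = (\bigcup_{j\in J}\mathcal{E}_j)^{*}$, which then belongs to the family itself and therefore equals $\approx_e$. Hence $\approx_e$ is an exhaustive weak bisimulation, and by construction it contains every other such relation, so it is the largest one. Being an equivalence closure of a union of equivalences, it is reflexive, symmetric and transitive.

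The main obstacle is establishing the closure of the weak bisimulation property under equivalence closure. For branching bisimulation this was Lemma \ref{lem-bb-equivalence}, whose proof exploited the tree-based characterization. In the distribution-based setting one instead needs to compose weak combined transitions across different $\mathcal{E}_i$'s while preserving the lifted relation $\mathcal{E}^{\dagger}$. Concretely, for a pair $(A,C)\in \mathcal{E}_1\circ\mathcal{E}_2$ and a transition $A\xrightarrow{\alpha}\rho_A$, one first finds a matching $B\stackrel{\alpha}{\Longrightarrow}_c\rho_B$ via $\mathcal{E}_1$ with $\rho_A\,\mathcal{E}_1^{\dagger}\,\rho_B$, then matches every element of $\mathsf{Supp}(\rho_B)$ via $\mathcal{E}_2$ using Proposition \ref{prop-lifting} to rewrite $\rho_B$ as a convex combination along the weighting function, and finally glues the resulting weak combined transitions into a single scheduler. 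This is a standard technique in the probabilistic automata literature and underlies Theorem \ref{thm-wb-equivalence} of \cite{turrini_PolynomialTimeDecision_2015}; our proof simply invokes it rather than reproves it. Putting the two ingredients together yields the desired lemma and, through it, Theorem \ref{thm-ewb-equivalence}.
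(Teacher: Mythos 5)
Your proposal is correct and matches the paper's methodology exactly: the paper itself imports Theorem~\ref{thm-ewb-equivalence} from \cite{he_DivergencesensitiveWeakProbabilistic_2023} without reproving it, but its proof of the parallel branching result (Theorem~\ref{thm-ebb-equivalence}) is precisely your decomposition --- closure of the bisimulation property under equivalence closure of unions combined with Lemma~\ref{lem-tauEC-equivalence} for $\tau$-EC invariance, followed by the routine largest-relation argument. Your added remark that the weak case needs the standard fact that weak combined transitions (not just single transitions) can be matched across the composition is the right point to flag, and invoking it from \cite{turrini_PolynomialTimeDecision_2015} is legitimate.
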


	\begin{remark} \label{remark-approxDelta}
		Refer to Definition \ref{def-dsbb}, a natural definition of \emph{weak bisimulation with explicit divergence} could be:
		Let $\mathcal{E}$ be an equivalence on $\PRCCS$,
		$\mathcal{E}$ is called a \emph{weak bisimulation with explicit divergence} if $\mathcal{E}$ is a weak bisimulation and is divergent $\epsilon$-tree preserving.
		We write $A\approx^{\Delta}B$ if there is a weak bisimulation with explicit divergence $\mathcal{E}$ such that $(A,B) \in \mathcal{E}$.
		
		Although the above definition is straightforward, it is challenging to justify that $\approx^{\Delta}$ is the largest weak bisimulation with explicit divergence.
		So far we are not able to prove that it is an equivalence. The proof for Lemma \ref{lem-dsbb-equivalence} does not apply here, for probabilistic weak bisimulation does not have the stuttering property, which prevents us from constructing $\epsilon$-trees. 
		Neither can we take the strategy in \cite{liu_AnalyzingDivergenceBisimulation_2017} for non-probabilistic weak bisimulation with explicit divergence, as $\approx^{\Delta} \;=\; \approx_{e}$ no longer holds in probabilistic setting (a counterexample will be given in Example \ref{exam-StrictInclusion}).
		We leave the justification of $\approx^{\Delta}$ as an open problem.
	\end{remark}
	
	We give two representative examples to highlight the differences between these bisimilarities.
	For any process $A \in \PRCCS$ and equivalence $\mathcal{E}$ on $ \PRCCS$, let $G_A = (V_A, E_A)$ be the induced transition graph of $A$.
	From now on, we will often abbreviate $V_A \slash \mathcal{E}$  as $\mathcal{E}$ for clarity.
	
	\begin{example}\label{exam-StrictInclusion}
		Let $B_1 = \mu X. (\tau.X + \tau. (\frac{1}{3}\tau.X \oplus \frac{1}{3}\tau.a \oplus \frac{1}{3}\tau.b))$ and $A_1 = \frac{1}{3}\tau.B_1 \oplus \frac{1}{3}\tau.a \oplus \frac{1}{3}\tau.b$.
		The induced transition graph of $A_1$ is depicted in Figure \ref{fig-inclusion-a}.
		Now consider the equivalence $\mathcal{E} = \{\{A_{1},B_{1}\}, \{a\}, \{b\}, \{\mathbf{0}\}\}$. The following facts can be easily checked.
		\begin{enumerate}
			\item $\mathcal{E}$ is the largest weak bisimulation and the largest branching bisimulation. That is $\approx \;=\; \simeq \;= \mathcal{E}$. To see that $\mathcal{E}$ is a branching bisimulation, only note that the $q$-transition $A_{1} \rightsquigarrow_{\mathcal{E}}\xrightarrow{1/2} [a]_{\mathcal{E}}$ of $A_{1}$ (where $\frac{1}{2}=\frac{1}{3}/({1-\frac{1}{3}})$ is the conditional probability of leaving $[A_1]_{\mathcal{E}}$ to $[a]_{\mathcal{E}}$) can be bisimulated by the transition $B_{1} \xrightarrow{\tau} A_{1} \rightsquigarrow_{\mathcal{E}}\xrightarrow{1/2} [a]_{\mathcal{E}}$ of $B_{1}$.
			Here since $(A_1, B_1) \in \mathcal{E}$, by sticking the regular $\epsilon$-tree of $A_1$ to the edge $B_{1} \xrightarrow{\tau} A_{1}$, it then forms a regular $\epsilon$-tree of $B_1$ and then induces a $q$-transition of $B_{1}$.
			\item \label{exam-EbbNotinDsbb-item2}$\mathcal{E}$ is an exhaustive branching bisimulation. We see that both $A_{1}$ and $B_{1}$ can only reach the $\tau$-EC $\mathsf{ec}_{B_{1}} = (B_{1}, \{B_{1} \xrightarrow{\tau} B_{1}\})$ and thus satisfy the divergence requirement.
			\item $\mathcal{E}$ is not a weak bisimulation with explicit divergence, since for the pair $(A_{1},B_{1}) \in \mathcal{E}$, $A_{1} \not\Uparrow_{\mathcal{E}}$ whereas $B_{1} \Uparrow_{\mathcal{E}}$.
		\end{enumerate}
		Since $\simeq_{e} \;\subseteq\; \simeq \;= \mathcal{E}$ and $\mathcal{E}$ is an exhaustive branching bisimulation, we have $\simeq_{e} \;= \mathcal{E}$.
		Together with the fact $\simeq_{e} \;\subseteq\; \approx_{e} \;\subseteq\; \approx$, we derive that $\approx_{e} \;= \mathcal{E}$.
		Since $\approx^{\Delta} \;\subseteq\; \approx \;= \mathcal{E}$ and $\mathcal{E}$ is not a weak bisimulation with explicit divergence, we have $\approx^{\Delta} = \{\{A_{1}\}, \{B_{1}\}, \{a\}, \{b\},\{\mathbf{0}\}\}$. Combining the fact that $\simeq^{\Delta} \;\subseteq\; \approx^{\Delta}$, we obtain $\simeq^{\Delta} = \{\{A_{1}\}, \{B_{1}\}, \{a\}, \{b\}, \{\mathbf{0}\}\}$.
		Now we see that $A_{1}\simeq_{e}B_{1}$ yet $A_{1}\not\simeq^{\Delta}B_{1}$ and $A_{1}\approx_{e}B_{1}$ yet $A_{1}\not\approx^{\Delta}B_{1}$.
	\end{example}
	\begin{example}
		Let $C_2 = \mathbf{0}$, $B_2 = \mu X.(\tau. X + \tau. C_2)$ and $A_2 = \frac{1}{2}\tau.B_2 \oplus \frac{1}{2}\tau.C_2.$
		The induced transition graph of $A_2$ is depicted in Figure \ref{fig-inclusion-b}.
		It is not hard to see that the coarsest relation $\{\{A_{2},B_{2},C_{2}\}\}$ is a branching bisimulation, thus $\approx \;=\; \simeq \;= \{\{A_{2},B_{2},C_{2}\}\}$.
		Since both $A_{2}$ and $B_{2}$ can reach the only $\tau$-EC $\mathsf{ec}_{B_{2}} = (B_{2}, \{B_{2} \xrightarrow{\tau} B_{2}\})$ while $C_{2}$ cannot, any equivalence $\mathcal{E}$ satisfying $(A_{2},C_{2}) \in \mathcal{E}$ or $(B_{2},C_{2}) \in \mathcal{E}$ cannot be an exhaustive weak bisimulation.
		Now let $\mathcal{E}' = \{\{A_{2},B_{2}\}, \{C_{2}\}\}$, then we have $\approx_{e} ~\subseteq \mathcal{E}'$. We can further verify the following facts.
		\begin{enumerate}
			\item $\mathcal{E}'$ is an exhaustive weak bisimulation. It is not hard to see that $\mathcal{E}'$ satisfies the divergence condition with respect to $\tau$-EC. Then we only need to show that $\mathcal{E}'$ is a weak bisimulation. Now consider the following two transitions $\mathsf{tr}_{A_{2}}$ and $\mathsf{tr}_{B_{2}}$ for pair $(A_{2},B_{2}) \in \mathcal{E}'$.
			\begin{itemize}
				\item $\mathsf{tr}_{A_{2}} = A_{2} \xrightarrow{\tau} \{(B_{2}: \frac{1}{2}), (C_{2}: \frac{1}{2})\}$. Then the matched weak combined transition $B_{2} \stackrel{\tau}{\Longrightarrow}_{c} \{(B_{2}: \frac{1}{2}), (C_{2}: \frac{1}{2})\}$ for $B_{2}$ is induced by the following scheduler:\\
				$
				\sigma_{B_{2}} = \begin{cases}
					\{(B_{2}\xrightarrow{\tau}\delta_{B_{2}}: \frac{1}{2}), (B_{2}\xrightarrow{\tau}\delta_{C_{2}}:\frac{1}{2})\},  & \text{if $\omega = B_{2}$,} \\
					\delta_{\bot}, & \text{otherwise.}
				\end{cases}
				$
				\item $\mathsf{tr}_{B_{2}} = B_{2} \xrightarrow{\tau} \delta_{C_{2}}$. Then the matched weak combined transition $A_{2} \stackrel{\tau}{\Longrightarrow}_{c}  \delta_{C_{2}}$ for $A_{2}$ is induced by the following scheduler:\\
				$
				\sigma_{A_{2}} = \begin{cases}
					\delta_{\mathsf{tr}_{A_{2}} },  & \text{if $\omega = A_{2}$,} \\
					\delta_{\mathsf{tr}_{B_{2}} },  & \text{if $\omega = A_{2}\tau B_{2}$,} \\
					\delta_{\bot}, & \text{otherwise.}
				\end{cases}
				$
			\end{itemize}
			\item $\mathcal{E}'$ is not an exhaustive branching bisimulation. In fact, we will show that it is not a branching bisimulation. Since the $\ell$-transition $B_{2} \xrightarrow{\tau} [C_{2}]_{\mathcal{E}'}$ for $B_{2}$ cannot be bisimulated by $A_{2}$ (for $A_{2}$ cannot perform any $\ell$-transition), we see that the pair $(A_{2},B_{2}) \in \mathcal{E}'$ violates the branching bisimulation conditions.
		\end{enumerate}
		Since $\approx_{e} \;\subseteq \mathcal{E}'$ and $\mathcal{E}'$ is an exhaustive weak bisimulation, we have $\approx_{e} \;= \mathcal{E}'$.
		Since $\simeq_{e} \;\subseteq\; \approx_{e} $ and $\mathcal{E}'$ is not an exhaustive branching bisimulation, we have $\simeq_{e} \;= \{\{A_{2}\},\{B_{2}\},\{C_{2}\}\}$.
		Since $\simeq^{\Delta} \;\subseteq\; \simeq_{e}$, we have $\simeq^{\Delta}  \;= \{\{A_{2}\},\{B_{2}\},\{C_{2}\}\}$.
		Now we see that although $A_{2} \approx B_{2}$ and $A_{2} \simeq B_{2}$ hold, $A_{2}\approx_{e}B_{2}$ but $A_{2}\not\simeq_{e} B_{2}$.
	\end{example}
	
	\begin{figure}[htb]
		\captionsetup{justification=centering}
		\begin{subfigure}[t]{0.48\textwidth}
			\centering
			\begin{tikzpicture}[on grid,scale=0.8]
				
				\begin{scope}[shift={(0,-5.5)}, every node/.style={draw,circle,transform shape,inner sep=3pt}]
					\foreach \pos/\name/\text in {
						{(0-0.5,1.5)/B/{$B_1$}},
						{(1,3)/A/{$A_1$}},
						{(1,1.5)/a/{$a$}},
						{(2+0.5,1.5)/b/{$b$}},
						{(1.75,0.75)/z/{$\mathbf{0}$}}}
					\node (\name) at \pos {\text};
				\end{scope}
				
				\begin{scope}[every node/.style={transform shape, auto=left,inner sep=1pt}, >={Latex[width'=0pt .5, length=5pt]}]
					\foreach \source/ \dest / \text in {
						b/z/{$b$}, A/a/{$\frac{1}{3}\tau$}, A/b/{$\frac{1}{3}\tau$}}
					\path[->] (\source) edge node {\text} (\dest);
					
					\path[->] (B) edge [out=-135, in=-45, looseness=6] node {$\tau$} (B);
					\path[->] (B) edge [bend left=45] node {$\tau$} (A);
				\end{scope}	
				
				\begin{scope}[every node/.style={transform shape, auto=right,inner sep=1pt}, >={Latex[width'=0pt .5, length=5pt]}]
					\foreach \source/ \dest / \text in {
						a/z/{$a$}, A/B/{$\frac{1}{3}\tau$}}
					\path[->] (\source) edge node {\text} (\dest);
					
				\end{scope}	
				
			\end{tikzpicture}
			\vspace{-5mm}
			\caption{
				\setlength{\tabcolsep}{1pt}
				\begin{tabular}{rl}
					&\\
					$B_1$ & $= \mu X. (\tau.X + \tau. (\frac{1}{3}\tau.X \oplus \frac{1}{3}\tau.a \oplus \frac{1}{3}\tau.b)),$\\
					$A_1$ & $= \frac{1}{3}\tau.B_1 \oplus \frac{1}{3}\tau.a \oplus \frac{1}{3}\tau.b.$
			\end{tabular}}\label{fig-inclusion-a}
		\end{subfigure}
		\hspace{15mm}
		\begin{subfigure}[t]{0.38\textwidth}
			\centering
			\begin{tikzpicture}[on grid,scale=0.8]
				
				\begin{scope}[shift={(5+2,-5.5)}, every node/.style={draw,circle,transform shape,inner sep=3pt}]
					\foreach \pos/\name/\text in {
						{(0,1.5)/B/{$B_2$}},
						{(1,3)/A/{$A_2$}},
						{(2,1.5)/C/{$C_2$}}}
					\node (\name) at \pos {\text};
				\end{scope}
				
				\begin{scope}[every node/.style={transform shape, auto=left,inner sep=1pt}, >={Latex[width'=0pt .5, length=5pt]}]
					\foreach \source/ \dest / \text in {
						A/C/{$\frac{1}{2}\tau$}, B/C/{$\tau$}}
					\path[->] (\source) edge node {\text} (\dest);
					
				\end{scope}	
				
				\begin{scope}[every node/.style={transform shape, auto=right,inner sep=1pt}, >={Latex[width'=0pt .5, length=5pt]}]
					\foreach \source/ \dest / \text in {
						A/B/{$\frac{1}{2}\tau$}}
					\path[->] (\source) edge node {\text} (\dest);
					
					\path[->] (B) edge [out=-135, in=135, looseness=6] node {$\tau$} (B);
				\end{scope}	
			\end{tikzpicture}
			\vspace{-5mm}
			\caption{
				\setlength{\tabcolsep}{1pt}
				\begin{tabular}{rl}
					&\\
					$ C_2$ & $= \mathbf{0}\;, B_2 = \mu X.(\tau. X + \tau. C_2),$\\
					$ A_2$ & $= \frac{1}{2}\tau.B_2 \oplus \frac{1}{2}\tau.C_2.$
			\end{tabular}}\label{fig-inclusion-b}
		\end{subfigure}%
		\caption{Counterexamples of the inclusion relationship.}\label{fig-inclusion}	
	\end{figure}

	Lattice among variants of branching and weak bisimilarities for $\mathrm{RCCS}_{fs}$ model can be summarized by the following theorem. A more visual presentation of the theorem is given in Figure~\ref{Fig-spectrum}.

	\begin{theorem}\label{thm-spectrum}
		The relationship between $\simeq^{\Delta}, \simeq_{e}, \simeq,  \approx_{e}$ and $\approx$ is summarized as follows.
		\begin{enumerate}
			\item $\simeq^{\Delta} \;\subsetneq\; \simeq_{e} \;\subsetneq\; \simeq$ and $ \approx_{e}\;\subsetneq\; \approx$;\label{thm-inclusion-item1}
			\item $\simeq \;\subsetneq\; \approx$, $\simeq_{e} \;\subsetneq\; \approx_{e}$ and $\approx_{e} \;\not\subseteq\;  \simeq \;\not\subseteq\; \approx_{e}$. \label{thm-inclusion-item2}
		\end{enumerate}
	\end{theorem}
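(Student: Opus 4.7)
The plan is to handle the six inclusion claims first and then the strictness and incomparability claims separately. All inclusions are immediate from results already in the excerpt. For Item~\ref{thm-inclusion-item1}, $\simeq^{\Delta}\subseteq\simeq_e$ is exactly Theorem~\ref{thm-dsbbVSebb}, while $\simeq_e\subseteq\simeq$ and $\approx_e\subseteq\approx$ follow directly from Definitions~\ref{def-ebb} and~\ref{def-ewb}, which stipulate that an exhaustive bisimulation is in particular a bisimulation of the same underlying kind. For Item~\ref{thm-inclusion-item2}, $\simeq\subseteq\approx$ is Theorem~\ref{thm-BbInWb}, and $\simeq_e\subseteq\approx_e$ follows by chaining: any exhaustive branching bisimulation is a branching bisimulation and $\tau$-EC invariant, hence by Theorem~\ref{thm-BbInWb} a weak bisimulation and $\tau$-EC invariant, i.e.\ an exhaustive weak bisimulation.

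For the strictness and non-inclusion claims I would invoke the two worked examples preceding this theorem, plus one purely non-probabilistic witness. Example~\ref{exam-StrictInclusion} exhibits $A_1,B_1$ with $A_1\simeq_e B_1$ but $A_1\not\simeq^{\Delta}B_1$, yielding $\simeq^{\Delta}\subsetneq\simeq_e$. The subsequent example (the one involving $A_2,B_2,C_2$) then discharges four of the remaining claims at once: $A_2\approx_e B_2$ while $A_2\not\simeq_e B_2$ witnesses $\simeq_e\subsetneq\approx_e$, and the pair $(A_2,C_2)$ with $A_2\simeq C_2$ but $A_2\not\approx_e C_2$ simultaneously witnesses $\simeq_e\subsetneq\simeq$, $\approx_e\subsetneq\approx$, and $\simeq\not\subseteq\approx_e$.

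The two remaining claims, $\simeq\subsetneq\approx$ and $\approx_e\not\subseteq\simeq$, are not covered by the probabilistic examples above because in both of them $\simeq$ already coincides with $\approx$ on the relevant processes. For these I would lift a standard non-probabilistic CCS witness into $\PRCCS$: set $s_1=a.(b+\tau.c)$ and $s_2=a.(b+\tau.c)+a.c$. A direct check shows $s_1\approx s_2$: the transition $s_2\xrightarrow{a}c$ is matched weakly by $s_1\xrightarrow{a}(b+\tau.c)\xrightarrow{\tau}c$, and the remaining transitions match via the shared summand. But $s_1\not\simeq s_2$, because any branching match of $s_2\xrightarrow{a}c$ from $s_1$ must terminate at an $a$-successor of $s_1$ that is branching bisimilar to $c$, and the only candidate $b+\tau.c$ still enables a $b$-transition that $c$ cannot match. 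Since no descendant of $s_1$ or $s_2$ reaches any $\tau$-EC, $\tau$-EC invariance is vacuous for every weak bisimulation over this subgraph, so $s_1\approx_e s_2$ holds as well, giving $\approx_e\not\subseteq\simeq$ as a bonus.

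The main subtlety lies in the choice of the last witness: branching bisimulation is famously tolerant of stuttering (seemingly natural candidates such as $a.b$ versus $a.\tau.b$, or $\tau.P$ versus $P$, turn out to be branching bisimilar), so one must pick a pair whose internal $\tau$-step cannot be absorbed while simultaneously matching a distinguished extra branch. The pair $(s_1,s_2)$ above is the minimal classical witness of exactly this phenomenon, and it remains valid inside $\PRCCS$ because the grammar~(\ref{eq-grammer}) contains all non-probabilistic CCS terms as a syntactic fragment.
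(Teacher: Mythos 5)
Your proposal is correct and follows essentially the same route as the paper: the six inclusions are obtained from Theorem \ref{thm-dsbbVSebb}, Theorem \ref{thm-BbInWb} and the definitions of exhaustive bisimulation, and the separations are discharged by the pairs from Figure \ref{fig-inclusion} exactly as in the paper's proof (you use $(A_2,C_2)$ where the paper uses $(B_2,C_2)$, which is immaterial). The only substantive difference is your final witness for $\simeq \;\subsetneq\; \approx$ and $\approx_{e} \;\not\subseteq\; \simeq$: you lift the classical pair $a.(b+\tau.c)$ versus $a.(b+\tau.c)+a.c$ into $\PRCCS$, whereas the paper uses the simpler $A_3=\tau.a+a+b$ versus $B_3=\tau.a+b$; both are valid divergence-free witnesses for which $\tau$-EC invariance holds vacuously, so your argument goes through.
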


	\begin{proof}
		(\ref{thm-inclusion-item1}) $\simeq^{\Delta} \;\subseteq\; \simeq_{e}$ follows from Theorem \ref{thm-dsbbVSebb} while $\simeq_{e} \;\subseteq\; \simeq$ and $\approx_{e} \;\subseteq\; \approx$ are by definition.
		The strictness is witnessed by the pair $(A_1,B_1)$ given in Figure~\ref{fig-inclusion-a} and $(B_2, C_2)$ given in Figure~\ref{fig-inclusion-b}.
		
		(\ref{thm-inclusion-item2}) By Theorem \ref{thm-BbInWb}, we have $\simeq \;\subseteq\; \approx$. We conclude that $\simeq_{e} \;\subseteq\; \approx_{e}$ by noticing that the definition of $\tau$-EC invariant is independent of the requirement of bisimulation. 
		The processes $A_3 = \tau.a + a + b$ and $B_3 = \tau.a + b$ witness the strictness of the subset relations.
		For one thing, we can show that $\mathcal{E} = \{\{A_3, B_3\}, \{a\}, \{b\}, \{\mathbf{0}\}\}$ is an exhaustive weak bisimulation, which implies that $A_3 \approx_e B_3$. For another thing, since $A_3 \not\simeq \mathbf{0}$, the $\ell$-transition $A_3  \rightsquigarrow_{\simeq}\xrightarrow{a} [\mathbf{0}]_{\simeq}$ cannot be bisimulated by any $\ell$-transition of $B_3$.  Thus $A_3 \not\simeq B_3$. The inclusions $\simeq_{e} \;\subseteq\; \approx_{e}$ and $\simeq \;\subseteq\; \approx$ are strict since $A_3 \approx_e B_3$ yet $A_3  \not\simeq _{e} B_3$, and $A_3 \approx B_3$ yet $A_3  \not\simeq B_3$.
		The pair $(A_2, B_2)$ in Figure \ref{fig-inclusion-b} is also a non-trivial example for $\simeq_{e} \;\subsetneq\; \approx_{e}$, and $(A_3, B_3)$ is also an evidence for the strictness of $\approx_{e} \;\not\subseteq\;  \simeq$.  Finally the pair $(B_2, C_2)$ in Figure \ref{fig-inclusion-b} shows that $\simeq \;\not\subseteq\; \approx_e$.
	\end{proof}

	We end this part by summarizing the results in Figure \ref{Fig-spectrum}, where the arrow from one bisimilarity to the other means that the former bisimilarity is strictly finer than the latter one.
	Solid arrows are new results of this paper while the dotted arrow is a result from \cite{he_DivergencesensitiveWeakProbabilistic_2023}.
	
	\begin{figure}[htb]
		\vspace{-3mm}	
		\[
		\xymatrix @R=0.618in @C=1in{
			\simeq^{\Delta}\ar @{->}[r]^{\subsetneq}_{\text{Theorem \ref{thm-dsbbVSebb}}}
			& {\simeq_e} \ar @{->}[r]^{\subsetneq} \ar @{->}[d]^{\subsetneq}
			&\simeq\ar @{->}[d]^{\subsetneq}_{\text{Theorem \ref{thm-BbInWb}}}\\
			& \approx_e \ar @{.>}[r]^{\subsetneq}
			&\approx\\
		}
		\]
		\vspace{-3mm}
		\caption{Divergence-sensitive bisimulation lattice (Theorem \ref{thm-spectrum}).}\label{Fig-spectrum}  \vspace{-2mm}
	\end{figure}	

	\section{Efficient equivalence checking algorithms}\label{sec-algorithm}
	In this section, we provide polynomial time verification algorithms for all notions of bisimilarity presented in this work. Particularly for $\approx_{e}$, we improve known results for divergence-sensitive weak bisimilarity in \cite{he_DivergencesensitiveWeakProbabilistic_2023} by giving a more direct algorithm  based on maximal end components. An overview of the algorithmic results is given in Table \ref{table-allAlg}.

	\subsection{Algorithm for deciding branching bisimilarity with explicit divergence}\label{sec-alg4dsbb}
	We recall an algorithmic result in \cite{zhang_UniformRandomProcess_2019}, which says that the largest (divergence-insensitive) branching bisimulation is  efficiently computable.
	\begin{theorem}[\cite{zhang_UniformRandomProcess_2019}]
		\label{thm-bbAlg}
		Given two processes $A,B \in \PRCCS$. Let $S$ be the set of processes reachable from $A$ and $B$, and $N = |S|$ be the size of $S$.
		For any equivalence $\mathcal{E}$ on $S$, the largest branching bisimulation $\mathcal{E}'$ contained in $\mathcal{E}$ can be computed by a procedure $\mathsf{Quotient}(\mathcal{E})$ in polynomial time of $N$.
	\end{theorem}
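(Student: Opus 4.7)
The plan is to adapt the classical partition-refinement framework to the probabilistic branching semantics, using the $\epsilon$-graph machinery of Zhang et al.\ \cite{zhang_UniformRandomProcess_2019} as the key computational primitive. Start from the partition $\mathcal{P}_0$ induced by $\mathcal{E}$ restricted to $S$, and iteratively build $\mathcal{P}_{k+1}$ from $\mathcal{P}_k$ by splitting any class $[A]_{\mathcal{P}_k}$ in which two members $A, B$ fail to bisimulate each other under the bisimulation conditions of Definition \ref{defn-BB}, evaluated with respect to the current partition $\mathcal{P}_k$. Because each proper split strictly decreases the class count by at least one, the refinement stabilises in at most $N$ rounds, and the final partition $\mathcal{E}'$ satisfies both branching-bisimulation conditions with respect to itself; a standard lattice argument then shows that $\mathcal{E}'$ is the coarsest (hence largest) branching bisimulation contained in $\mathcal{E}$.

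The core subroutine a refinement step needs is an oracle that, given the current partition $\mathcal{P}$, a process $A$, an action $\ell$ and a target class $\mathcal{C}$ (resp.\ a rational probability $q$ and target $\mathcal{C}$), decides whether $A \rightsquigarrow_{\mathcal{P}} \xrightarrow{\ell} \mathcal{C}$ (resp.\ $A \rightsquigarrow_{\mathcal{P}} \xrightarrow{q} \mathcal{C}$) holds. The naive definition quantifies over regular $\epsilon$-trees, which are a priori infinite, and this is where the $\epsilon$-graph representation is essential. Restrict attention to the induced subgraph on $[A]_{\mathcal{P}}$ with only state-preserving silent edges, and let the frontier $F_\ell$ (resp.\ $F_q$) be the set of states in $[A]_{\mathcal{P}}$ that have a direct $\ell$-transition into $\mathcal{C}$ (resp.\ a collective silent transition achieving normalized probability $q$ to $\mathcal{C}$). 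Then the existence of a regular $\epsilon$-tree witnessing the transition is equivalent to the property that from $A$, every infinite silent path inside $[A]_{\mathcal{P}}$ is forced, by the probabilistic closure condition on collective transitions, to reach $F_\ell$ (resp.\ $F_q$) with probability one. This is a reachability-with-probability-one question on a finite Markov decision process and can be solved in polynomial time by standard end-component analysis.

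Combining the pieces: the frontier sets $F_\ell, F_q$ for every action, every probability value occurring in $A$'s transition graph, and every target class $\mathcal{C}\in\mathcal{P}$ can be computed in polynomial time; the reachability-one query for each such frontier runs in polynomial time; and for each pair $A, B$ in a class one only needs to compare polynomially many such queries. Thus $\mathsf{Quotient}(\mathcal{E})$ performs $O(N)$ refinement rounds, each running in polynomial time, yielding the overall polynomial bound.

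The main obstacle, and the conceptual heart of the Zhang--Fu argument, is justifying the reduction from the quantification over regular $\epsilon$-trees to a finite graph-theoretic condition. One must show both soundness (any end-component-free set of paths reaching the frontier can be stitched into a concrete regular $\epsilon$-tree of convergence probability $1$, using an inductive construction akin to Lemma \ref{lem-EcAndTree}) and completeness (any regular $\epsilon$-tree can only visit states in $[A]_{\mathcal{P}}$ whose collective silent transitions respect the partition, so it induces a strategy on the restricted graph). Once this equivalence is established, the polynomial-time bound follows from well-known algorithms for almost-sure reachability in finite MDPs.
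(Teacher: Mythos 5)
There is nothing in the paper to compare against here: Theorem \ref{thm-bbAlg} is imported verbatim from \cite{zhang_UniformRandomProcess_2019} and the paper gives no proof of it, so your attempt can only be judged against the cited work's known architecture. At that level your sketch is a reasonable reconstruction: partition refinement starting from $\mathcal{E}$, with the per-round work reduced to deciding $\ell$- and $q$-transitions over the current partition, and that decision in turn reduced to almost-sure reachability of a frontier set inside the $\epsilon$-graph of $[A]_{\mathcal{P}}$. This matches the paper's own description of the Zhang et al.\ technique (the ``$\epsilon$-graph'' primitive) and the flavour of the in-paper procedure $\mathsf{DetDivTree}$, which solves the complementary divergence question by the same kind of fixpoint computation on the $\epsilon$-graph.

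Two caveats. First, you correctly identify the reduction from ``there exists a regular $\epsilon$-tree all of whose leaves lie in $F$'' to a probability-one reachability query as the conceptual heart, but your phrasing (``every infinite silent path \ldots is forced \ldots to reach $F_\ell$ with probability one'') conflates a universal quantification over paths with what is actually an existential quantification over schedulers; the correct statement is that some resolution of the nondeterministic silent choices within $[A]_{\mathcal{P}}$ reaches $F$ almost surely, and the stitching of such a scheduler into a concrete regular $\epsilon$-tree is exactly the Lemma~\ref{lem-EcAndTree}-style construction you gesture at. Second, the ``standard lattice argument'' that the stabilised partition is the \emph{largest} branching bisimulation contained in $\mathcal{E}$ is not automatic for branching-style semantics, because $\ell$- and $q$-transitions are themselves defined relative to the equivalence being refined: one must prove the invariant that no pair belonging to the largest branching bisimulation inside $\mathcal{E}$ is ever split when the conditions are evaluated against a coarser intermediate partition $\mathcal{P}_k$. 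This monotonicity is precisely the delicate part of the Groote--Vaandrager-style correctness proof and of its probabilistic extension in \cite{zhang_UniformRandomProcess_2019}; your sketch names it only implicitly. Neither point is a fatal error, but both are the actual mathematical content of the cited theorem rather than routine steps.
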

	
	Given a process $A$ and an equivalence $\mathcal{E}$, the number of maximal $\epsilon$-trees of $A$ with regard to $\mathcal{E}$ can be exponentially many.
	However, the existence of a divergent $\epsilon$-tree can be checked in polynomial time by the procedure $\mathsf{DetDivTree}$ given in Algorithm~\ref{DetDivTree}.
	In what follows, we will use $\backslash$ for set difference, and $\slash$ for relation quotient.
	The \emph{$\epsilon$-graph of $A$ with regard to $\mathcal{E}$}, denoted by $G_A^{\mathcal{E}} = (V_A^{\mathcal{E}}, E_A^{\mathcal{E}})$, is a subgraph of $G_A$ (where $G_A$ is the induced transition graph of $A$) satisfying that $V_A^{\mathcal{E}}$ contains all processes reachable from $A$ by state-preserving immediate silent transitions and $E_A^{\mathcal{E}}$ contains all the corresponding transition edges.
	A node in $G_A^{\mathcal{E}}$ is called a \emph{sink node} if its out degree is $0$.
	
	Intuitively speaking, the procedure $\mathsf{DetDivTree}(A, \mathcal{E})$ starts with the set of sink nodes in $\epsilon$-graph $G_A^{\mathcal{E}} = (V_A^{\mathcal{E}}, E_A^{\mathcal{E}})$, then iteratively constructs the set $\{A' \in V_A^{\mathcal{E}}\mid A'\not\Uparrow_{\mathcal{E}}\}$.
	For a better understanding of  $\mathsf{DetDivTree}$, we use an example to explain how it works.
	
		\SetKwRepeat{Do}{do}{while}
	\begin{algorithm2e}[H]
		\footnotesize
		\DontPrintSemicolon
		\caption{$\mathsf{DetDivTree}$
			{\footnotesize \CommentSty{/* checking the existence of a divergent $\epsilon$-tree of $A$ with regard to $\mathcal{E}$  */}}
		}\label{DetDivTree}
		\SetKwInOut{KwIn}{Input}
		\SetKwInOut{KwOut}{Output}
		
		\KwIn{$A, \mathcal{E}$}
		\KwOut{$isDiv\in\{\mathbf{T},\mathbf{F}\}$}
		
		$(V, E) \gets \mathsf{CompEpsGraph}(A, \mathcal{E})$
		{\footnotesize \CommentSty{/* Computes the $\epsilon$-graph $ G_A^{\mathcal{E}} = (V_A^{\mathcal{E}}, E_A^{\mathcal{E}})$ */}}
		
		$\mathcal{L}_{ndiv} \gets \mathsf{Sink}((V, E))$,
		$\mathcal{L}_{und} \gets V \backslash \mathcal{L}_{ndiv}$
		{\footnotesize \CommentSty{/* $\mathsf{Sink}$ returns the sink nodes in $(V, E)$ */}}
		
		\Do{$toCon = \mathbf{T}$}{
			$toCon \gets \mathbf{F}$
			
			\For{$B \in \mathcal{L}_{und}$}{
				
				$nonDiv \gets \mathbf{T}$
				
				\For{$(B,B') \in E$ with label $\tau$}{
					\If{$B' \in \mathcal{L}_{und}$}{
						$nonDiv \gets \mathbf{F}$
					}
				}
				
				\For{$(B,B') \in E$ with label $p\tau$}{
					\If{all $B \xrightarrow{q\tau} B''$ satisfying that $B'' \in \mathcal{L}_{und}$}{
						$nonDiv \gets \mathbf{F}$
					}
				}
				
				\If{$nonDiv = \mathbf{T}$}{
					$toCon \gets \mathbf{T}$,
					$\mathcal{L}_{ndiv} \gets \mathcal{L}_{ndiv} \cup \{B\}$,
					$\mathcal{L}_{und} \gets \mathcal{L}_{und} \backslash \{B\}$
				}
			}
		}
		
		\eIf{$A \in \mathcal{L}_{ndiv}$}{$isDiv \gets \mathbf{F}$}{
			$isDiv \gets \mathbf{T}$
		}
		\KwRet{$isDiv$}
	\end{algorithm2e}
	
	\begin{example}
		Let $A = \mu X.(\frac{1}{2}\tau.(\tau.X+\tau.\mathbf{0}) \oplus \frac{1}{2}\tau.(\frac{1}{2}\tau.\mathbf{0} \oplus \frac{1}{2}\tau.(\mu Y.\tau.Y)))$, $B = \tau.A + \tau.\mathbf{0}$ and $\simeq$ be branching bisimilarity.
		The $\epsilon$-graph $G_B^{\simeq} = (S_B^{\simeq}, T_B^{\simeq})$ of $B$ with regard to $\simeq$ is shown in Figure \ref{fig-CompEpsGraph-a}, where $s_0 = B$, $s_1 = A$, $s_2 = \frac{1}{2}\tau.\mathbf{0} \oplus \frac{1}{2}\tau.(\mu Y.\tau.Y)$, $s_3 = \mathbf{0}$ and $s_4 = \mu Y.\tau.Y$.
		
	Procedure $\mathsf{DetDivTree}(B, \simeq)$ works as follows.
		\begin{enumerate}
			\item Procedure $\mathsf{CompEpsGraph}(B, \simeq)$ computes the $\epsilon$-graph $G_B^{\simeq} = (S_B^{\simeq}, T_B^{\simeq})$ and $\mathsf{Sink}((S_B^{\simeq}, T_B^{\simeq}))$ returns the set of sink nodes in $G_B^{\simeq}$.
			Thus $\mathcal{L}_{ndiv}^0 = \{s_3\}$ and $\mathcal{L}_{und}^0 = \{s_0, s_1, s_2, s_4\}$.
			\item In the first iteration of the $\textbf{do--while}$ loop, we add all processes $B' \in \mathcal{L}_{und}^0$ satisfying that $tgt(\mathsf{itr}) \cap \mathcal{L}_{ndiv}^0 \ne \emptyset$ for all immediate silent transitions $\mathsf{itr}$ of $B'$ into the set $\mathcal{L}_{ndiv}^1$. Then we have
			\begin{itemize}
				\item $\mathcal{L}_{ndiv}^1 = \mathcal{L}_{ndiv}^0 \cup \{s_2\}, \mathcal{L}_{und}^1 = \{s_0, s_1, s_4\}$ and $toCon = \mathbf{T}$.
			\end{itemize}
			\item Similarly, in the second and third iterations of the $\textbf{do--while}$ loop, we have
			\begin{itemize}
				\item $\mathcal{L}_{ndiv}^2 = \mathcal{L}_{ndiv}^1 \cup \{s_1\}, \mathcal{L}_{und}^2 = \{s_0, s_4\}$ and $toCon = \mathbf{T}$.
				\item $\mathcal{L}_{ndiv}^3 = \mathcal{L}_{ndiv}^2 \cup \{s_0\}, \mathcal{L}_{und}^3 = \{s_4\}$ and $toCon = \mathbf{T}$.
			\end{itemize}
			\item In the fourth iteration of the $\textbf{do--while}$ loop, there does not exist any process $B' \in \mathcal{L}_{und}^3$ satisfying that $tgt(\mathsf{itr}) \cap \mathcal{L}_{ndiv}^3 \ne \emptyset$ for all collective transitions $\mathsf{itr}$ of $B'$. Then the loop terminates, and we have
			\begin{itemize}
				\item $\mathcal{L}_{ndiv}^4 = \mathcal{L}_{ndiv}^3, \mathcal{L}_{und}^4 = \{s_4\}$ and $toCon = \mathbf{F}$.
			\end{itemize}
			\item For the final $\mathcal{L}_{ndiv}^4 = \{s_3, s_2, s_1, s_0\}$, we depict the result in Figure \ref{fig-CompEpsGraph-b}, where the four nodes in red are  non-divergent, only the one in blue is divergent as it has a divergent $\epsilon$-tree $t^{s_4}_{\simeq}$ with respect to $\simeq$. As $B = s_0 \in \mathcal{L}_{ndiv}^4$, we have $B \not\Uparrow_{\simeq}$.
		\end{enumerate}
	\end{example}


				\begin{figure}[htb]	
					\begin{subfigure}[t]{0.45\textwidth}
						\centering
						\begin{tikzpicture}[on grid,scale=0.9]
							
							\begin{scope} [every node/.style={draw,circle,transform shape,inner sep=3pt}]
								\foreach \pos/\name/\text in {
									{(2.5,2.5)/s0/{$s_0$}},
									{(4,4)/s1/{$s_1$}},
									{(5.5,2.5)/s2/{$s_2$}},
									{(4,1)/s3/{$s_3$}},
									{(7,1)/s4/{$s_4$}}}
								\node (\name) at \pos {\text};
							\end{scope}
							
							\begin{scope}[every node/.style={transform shape,auto=right,inner sep=0pt}, >={Latex[width'=0pt .5, length=5pt]}]
								\foreach \source/ \dest / \text in {
									s1/s0/{$\frac{1}{2}\tau$},
									s2/s3/{$\frac{1}{2}\tau$}}
								\path[->] (\source) edge node {\text} (\dest);			
							\end{scope}		
							\begin{scope}[every node/.style={transform shape,auto=left,inner sep=0pt}, >={Latex[width'=0pt .5, length=5pt]}]
								\foreach \source/ \dest / \text in {
									s1/s2/{$\frac{1}{2}\tau$}, s0/s3/{$\tau$},
									s2/s4/{$\frac{1}{2}\tau$}}
								\path[->] (\source) edge node {\text} (\dest);
								
								\path[->] (s0) edge [bend left=45] node {$\tau$} (s1);
								
								\path[->] (s4) edge [out=45, in=-45, looseness=8] node {$\tau$} (s4);
							\end{scope}			
						\end{tikzpicture}
						\vspace{-5mm}
						\caption{Before $\mathsf{DetDivTree}(B, \simeq)$.}\label{fig-CompEpsGraph-a}
					\end{subfigure}
					\begin{subfigure}[t]{0.45\textwidth}
						\centering
						\begin{tikzpicture}[on grid,scale=0.9]
							\begin{scope} [shift={(0,-6)}, every node/.style={draw,circle,fill = red!20,transform shape,inner sep=3pt}]
								\foreach \pos/\name/\text in {
									{(2.5,2.5)/s0/{$s_0$}},
									{(4,4)/s1/{$s_1$}},
									{(5.5,2.5)/s2/{$s_2$}},
									{(4,1)/s3/{$s_3$}}}
								\node (\name) at \pos {\text};
							\end{scope}
							
							\begin{scope} [shift={(0,-6)}, every node/.style={draw, fill = blue!20, circle,transform shape,inner sep=3pt}]
								\foreach \pos/\name/\text in {
									{(7,1)/s4/{$s_4$}}}
								\node (\name) at \pos {\text};
							\end{scope}
							
							\begin{scope}[every node/.style={font=\scriptsize, text = red, align=center}]
								\node at ($(s3.0) + (1/2,0)$) {
									$\mathcal{L}_{ndiv}^0$};
								\node at ($(s2.0) + (1/2,0)$) {
									$\mathcal{L}_{ndiv}^1$};
								\node at ($(s1.0) + (1/2,0)$) {
									$\mathcal{L}_{ndiv}^2$};
								\node at ($(s0.0) + (1/2,0)$) {
									$\mathcal{L}_{ndiv}^3$};
							\end{scope}	
							
							\begin{scope}[every node/.style={font=\scriptsize, text = blue, align=center}]
								\node at ($(s4.180) + (-1/2,0)$) {
									$\mathcal{L}_{und}^4$};
							\end{scope}	
							
							\begin{scope}[every node/.style={transform shape,auto=right,inner sep=0pt}, >={Latex[width'=0pt .5, length=5pt]}]
								\foreach \source/ \dest / \text in {
									s1/s0/{$\frac{1}{2}\tau$},
									s2/s3/{$\frac{1}{2}\tau$}}
								\path[->] (\source) edge node {\text} (\dest);			
							\end{scope}		
							\begin{scope}[every node/.style={transform shape,auto=left,inner sep=0pt}, >={Latex[width'=0pt .5, length=5pt]}]
								\foreach \source/ \dest / \text in {
									s1/s2/{$\frac{1}{2}\tau$}, s0/s3/{$\tau$},
									s2/s4/{$\frac{1}{2}\tau$}}
								\path[->] (\source) edge node {\text} (\dest);
								
								\path[->] (s0) edge [bend left=45] node {$\tau$} (s1);
								
								\path[->] (s4) edge [out=45, in=-45, looseness=8] node {$\tau$} (s4);
							\end{scope}		
						\end{tikzpicture}
						\vspace{-5mm}
						\caption{After $\mathsf{DetDivTree}(B, \simeq)$.}\label{fig-CompEpsGraph-b}
					\end{subfigure}
					\caption{The procedure of $\mathsf{DetDivTree}(B, \simeq)$.}
					\label{fig-CompEpsGraph}
				\end{figure}
				
				The correctness of $\mathsf{DetDivTree}$ is proven in the following proposition.
				\begin{proposition}
					Given an equivalence $\mathcal{E}$ on $\PRCCS$ and a process $A \in \PRCCS$.
					Then $A \not\Uparrow_{\mathcal{E}}$ if and only if the procedure $\mathsf{DetDivTree}(A,\mathcal{E})$ returns $\mathbf{F}$.
				\end{proposition}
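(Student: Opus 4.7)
The plan is to prove the stronger claim that after the do--while loop terminates, the computed set $\mathcal{L}_{ndiv}^{*}$ of ``confirmed non-divergent'' nodes coincides exactly with $\{B \in V_A^{\mathcal{E}} \mid B \not\Uparrow_{\mathcal{E}}\}$; specialising to the root then yields $A \in \mathcal{L}_{ndiv}^{*}$ iff $A \not\Uparrow_{\mathcal{E}}$, which is the proposition.

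The link between the algorithm and the semantics is a local characterisation of non-divergence: $B \not\Uparrow_{\mathcal{E}}$ iff every state-preserving transition $B \xrightarrow{\tau} B'$ satisfies $B' \not\Uparrow_{\mathcal{E}}$, and every state-preserving collective silent transition $B \xrightarrow{\coprod_{i \in I} p_i \tau} \coprod_{i \in I} B_i$ has at least one child $B_j$ with $B_j \not\Uparrow_{\mathcal{E}}$. This drops out of decomposing the convergence probability at the root of an $\epsilon$-tree: the contribution from a $\tau$-child is the subtree's $\mathsf{P}^c$, the contribution from a collective expansion is $\sum_i p_i \cdot \mathsf{P}^c_i$ with every $p_i > 0$, and the trivial one-node tree has $\mathsf{P}^c = 1$. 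This characterisation is precisely the condition the inner loop of $\mathsf{DetDivTree}$ checks when it leaves $nonDiv$ as $\mathbf{T}$.

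For the soundness direction ($\mathcal{L}_{ndiv}^{*} \subseteq \{B \mid B \not\Uparrow_{\mathcal{E}}\}$) I would induct on the iteration $n$ at which $B$ is first inserted into $\mathcal{L}_{ndiv}$. The base case $n = 0$ handles sinks of $G_A^{\mathcal{E}}$: such a node has no state-preserving silent transition, so its only $\epsilon$-tree is the trivial one-node tree, which is regular. The inductive step simply applies the local characterisation above: the algorithm's acceptance test on $B$ is literally the statement that all state-preserving $\tau$-children, and at least one witness $B_j$ for each state-preserving collective transition, already lie in $\mathcal{L}_{ndiv}^{n-1}$, so the induction hypothesis delivers $B \not\Uparrow_{\mathcal{E}}$.

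For completeness I would construct an explicit divergent $\epsilon$-tree for each $B \in \mathcal{L}_{und}^{*} = V_A^{\mathcal{E}} \setminus \mathcal{L}_{ndiv}^{*}$. Termination of the loop with $toCon = \mathbf{F}$ means every such $B$ failed the acceptance test and therefore admits either a state-preserving $\tau$-transition into $\mathcal{L}_{und}^{*}$ or a state-preserving collective silent transition whose targets all lie in $\mathcal{L}_{und}^{*}$. Fixing one such choice at every node and unfolding coinductively produces a tree rooted at $B$ whose every node lies in $V_A^{\mathcal{E}} \subseteq [A]_{\mathcal{E}}$ and in which no node is ever a leaf. The one mild obstacle is to check that this leafless tree really qualifies as divergent in the sense of Definition~\ref{defn-regdiv}: since $\mathsf{P}_k$ sums probabilities only of finite branches ending at a leaf and no such branches exist, $\mathsf{P}_k = 0$ for every $k$, hence $\mathsf{P}^c = 0$. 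Combining both directions and setting $B = A$ closes the proof.
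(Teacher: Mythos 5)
Your proposal is correct and follows essentially the same route as the paper: soundness by induction on the iteration at which a node enters $\mathcal{L}_{ndiv}$, and completeness from the observation that every node left in $\mathcal{L}_{und}$ after termination admits a state-preserving silent transition all of whose targets remain in $\mathcal{L}_{und}$, which unfolds into a leafless (hence divergent) $\epsilon$-tree. If anything your write-up is slightly tighter: the explicit root decomposition $\mathsf{P}^c(t)=\sum_i p_i\,\mathsf{P}^c(t_i)$ makes the inductive step precise where the paper merely asserts that a tree passing through a non-divergent node cannot be divergent, and your direct coinductive construction replaces the paper's contradiction argument, which performs surgery on a maximal $\epsilon$-tree to reach the same conclusion.
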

				
				\begin{proof}
					Let $\mathcal{L}_{ndiv}^i$ be the set $\mathcal{L}_{ndiv}$ at the end of $i$-th iteration of the $\textbf{do--while}$ loop.
					The $\textbf{do--while}$ loop always terminates, as the $(i+1)$-th iteration proceeds iff in $i$-th iteration the set $\mathcal{L}_{ndiv}^i$ gets strictly larger, while it is always true that $\mathcal{L}_{ndiv}^i \subseteq V_A^{\mathcal{E}}$. Let $n$ be the number of iterations of the $\textbf{do--while}$ loop.
					For correctness, it will be sufficient to show that $A \not\Uparrow_{\mathcal{E}}$ iff $A \in \mathcal{L}_{ndiv}^n$.
					
					$(\impliedby)$ We prove that $\forall B\in \mathcal{L}_{ndiv}^i: B\not\Uparrow_{\mathcal{E}}$ holds for all $0 \le i \le n$ by induction on $i$.
					\begin{itemize}
						\item  \emph{(Base case).}
						$\forall B\in \mathcal{L}_{ndiv}^0: B\not\Uparrow_{\mathcal{E}}$ holds trivially for $\mathcal{L}_{ndiv}^0 = \mathsf{Sink}((V_A^{\mathcal{E}}, E_A^{\mathcal{E}}))$ is just the set of nodes that cannot perform any silent action.
						
						\item \emph{(Induction step).} Assume that $\forall B\in \mathcal{L}_{ndiv}^i: B\not\Uparrow_{\mathcal{E}}$. We need to show that $\forall B\in \mathcal{L}_{ndiv}^{i+1}: B\not\Uparrow_{\mathcal{E}}$.		
						
						To $\mathcal{L}_{ndiv}^{i+1}$,  the case  $B\in \mathcal{L}_{ndiv}^{i}$ holds by induction. For any
						$B\in \mathcal{L}_{ndiv}^{i+1}\backslash \mathcal{L}_{ndiv}^{i}$,
						according to our algorithm, there could be two cases:
						if $(B, B'') \in E_A^{\mathcal{E}}$ with label $\tau$ for some $B''$, then $B'' \in \mathcal{L}_{ndiv}^i$;
						if $(B, B') \in E_A^{\mathcal{E}}$ with label $p\tau$ for some $B'$, then there exists $B \xrightarrow{q\tau} B''$ with $B'' \in \mathcal{L}_{ndiv}^i$.
						Now any $\epsilon$-tree $t_B^{\mathcal{E}}$ of $B$ with regard to $\mathcal{E}$ will go through a process $B'' \in \mathcal{L}_{ndiv}^i$.
						By inductive hypothesis, $B''\not\Uparrow_{\mathcal{E}}$ holds, from which it follows that there does not exist any divergent $\epsilon$ tree of $B$. Thus $B\not\Uparrow_{\mathcal{E}}$ for all $B \in \mathcal{L}_{ndiv}^{i+1}$.
					\end{itemize}
					
					$(\implies)$  We prove this direction by contradiction. Suppose there exists some $A$ such that
					$A \not\Uparrow_{\mathcal{E}}$ and $A \notin \mathcal{L}_{ndiv}^n$.
					Since $A \not\Uparrow_{\mathcal{E}}$, any maximal $\epsilon$-tree $t$ of $A$ must have some intermediate nodes $A' \notin \mathcal{L}_{ndiv}^n$ satisfying that some children $A''$ of $A'$ in the tree belong to the set $\mathcal{L}_{ndiv}^n$, for otherwise the tree would be divergent.
					Since $A' \notin \mathcal{L}_{ndiv}^n$, there exists some $\mathsf{itr}_{A'}$ of $A'$ such that $tgt(\mathsf{itr}_{A'}) \subseteq V_A^{\mathcal{E}} \backslash \mathcal{L}_{ndiv}^{n}$.
					Replacing all such transitions from $A'$ to $A''$ by the immediate silent transition $\mathsf{itr}_{A'}$ in tree $t$, we can obtain a divergent $\epsilon$-tree $t'$ of $A$ (since all processes in $V_A^{\mathcal{E}} \backslash \mathcal{L}_{ndiv}^{n}$ can perform state-preserving internal actions), which contradicts the assumption that $A \not\Uparrow_{\mathcal{E}}$.
				\end{proof}
				
				Algorithm \ref{algo-DivBranBisim}  gives the main algorithm for deciding whether two processes are branching bisimilar with explicit divergence.
				Here we follow the classical \emph{partition-refinement} framework \cite{paige_ThreePartitionRefinement_1987,kanellakis_CCSExpressionsFinite_1990}.
				The procedure  $\mathsf{DivBranBisim}(A,B)$ initializes set $R$ as the disjoint union of processes reachable from $A$ and $B$.
				Then it iteratively constructs the set  $\mathcal{E} = R\slash\simeq^{\Delta}$ (i.e., the set of equivalence classes of $R$ under $\simeq^{\Delta}$), starting with the  coarsest partition $\mathcal{E}_{ini} = \{R\}$ and refining it
				until the refined partition satisfies the definition of branching bisimulation with explicit divergence.
				
				At the beginning of each iteration, the procedure $\mathsf{Quotient}(\mathcal{E}_{ini})$ in Theorem \ref{thm-bbAlg} is invoked to extract the largest branching bisimulation $\mathcal{E}$ contained in $\mathcal{E}_{ini}$.
				Procedure $\mathsf{FindDivSplit}(\mathcal{E})$ (given as Algorithm \ref{algo-FindDivSplit}) then checks whether there is a pair of processes $(P,Q) \in \mathcal{E}$ that violates the divergent $\epsilon$-tree preserving condition, i.e., $P \Uparrow_{\mathcal{E}}$ and $Q \not\Uparrow_{\mathcal{E}}$, or $P \not\Uparrow_{\mathcal{E}}$ and $Q \Uparrow_{\mathcal{E}}$. If there is, the discriminating information, i.e., $P$ (also called \emph{divergence splitter}), is returned.
				Procedure $\mathsf{DivRefine}$ (given as Algorithm \ref{algo-DivRefine}) then splits the equivalence class $[P]_{\mathcal{E}}$ into two new equivalence classes $\mathcal{C}_{div}$ and $\mathcal{C}_{ndiv}$ according to the splitter $P$ identified by $\mathsf{FindDivSplit}$. More specifically,  $\mathcal{C}_{div}$ contains all processes $P' \in [P]_{\mathcal{E}}$ satisfying $P' \Uparrow_{\mathcal{E}}$, while $\mathcal{C}_{ndiv}$ contains all processes $P'' \in [P]_{\mathcal{E}}$ satisfying $P'' \not\Uparrow_{\mathcal{E}}$.
				When the iteration terminates, the resulting partition $\mathcal{E}$ is $R\backslash\simeq^{\Delta}$.
				Then checking whether $A \simeq^{\Delta}  B$ is equivalent to checking whether $(A, B) \in \mathcal{E}$.
				
				\begin{algorithm2e}[htb]
					\footnotesize
					\DontPrintSemicolon
					\caption{$\mathsf{DivBranBisim}$
						{\footnotesize \CommentSty{/* checking whether $A \simeq^{\Delta} B$ */}}
					}\label{algo-DivBranBisim}
					\SetKwInOut{KwIn}{Input}
					\SetKwInOut{KwOut}{Output}
					
					\KwIn{$A,B$}
					\KwOut{$b\in\{\mathbf{T},\mathbf{F}\}$}
					
					$R \gets \mathsf{Reach}(A) \uplus  \mathsf{Reach}(B)$
					{\footnotesize \CommentSty{/* $\mathsf{Reach}(P)$ returns the set of processes reachable from $P$ */}}
					
					$\mathcal{E}_{ini} \gets \{R\}, toCon \gets \mathbf{T}$
					
					\Do{$toCon = \mathbf{T}$}{
						$\mathcal{E} \gets \mathsf{Quotient}(\mathcal{E}_{ini})$
						
						{\footnotesize \CommentSty{/* $\mathsf{Quotient}(\mathcal{E}_{ini})$ computes the largest branching bisimulation contained in $\mathcal{E}_{ini}$ */}}
						
						$(divSen, P)\gets \mathsf{FindDivSplit}(\mathcal{E})$
						
						{\footnotesize \CommentSty{/* $\mathsf{FindDivSplit}(\mathcal{E})$ checks whether there is a divergence splitter $P$ of $\mathcal{E}$ */}}
						
						\eIf{$divSen = \mathbf{T}$}{
							$toCon \gets \mathbf{F}$
						}{
							$\mathcal{E}_{ini} \gets \mathsf{DivRefine}(\mathcal{E}, P)$
							
							{\footnotesize \CommentSty{/* $\mathsf{DivRefine}(\mathcal{E}, P)$ refines $\mathcal{E}$ according to the splitter $P$ identified by $\mathsf{FindDivSplit}(\mathcal{E})$ */}}
						}
					}
					{\footnotesize \CommentSty{/* when the \textbf{do-while} loop terminates, $\mathcal{E} = R\slash\simeq^{\Delta}$ */}}
					
					\eIf{$(A,B) \in \mathcal{E}$}{
						\KwRet{$\mathbf{T}$}
					}{\KwRet{$\mathbf{F}$}}
				\end{algorithm2e}
				
				\begin{figure*}[htb]
					\noindent\begin{minipage}[t]{0.5\textwidth}
						\vspace{0pt}
						\begin{algorithm2e}[H]
							\footnotesize
							\DontPrintSemicolon
							\caption{$\mathsf{FindDivSplit}$
							}
							\label{algo-FindDivSplit}
							\SetKwInOut{KwIn}{Input}
							\SetKwInOut{KwOut}{Output}
							
							\KwIn{$\mathcal{E}$}
							\KwOut{$(divSen,P)\in\{(\mathbf{T},\bot), (\mathbf{F}, P)\}$}
							
							$divSen \gets \mathbf{T}$
							
							\For{$(P,Q) \in \mathcal{E}$}{
								$isDivP \gets \mathsf{DetDivTree}(P, \mathcal{E})$
								
								$isDivQ \gets \mathsf{DetDivTree}(Q, \mathcal{E})$
								
								\If{$isDivP \ne isDivQ$}{
									$divSen \gets \mathbf{F}$
									
									\KwRet{$(divSen, P)$}
								}
							}
							
							\KwRet{$(divSen,\bot)$}
							\vspace*{-0.6pt}
							\BlankLine\BlankLine\BlankLine
						\end{algorithm2e}
					\end{minipage}%
					\begin{minipage}[t]{0.5\textwidth}
						\vspace{0pt}
						\begin{algorithm2e}[H]
							\footnotesize
							\DontPrintSemicolon
							\caption{$\mathsf{DivRefine}$
							}
							\label{algo-DivRefine}
							\SetKwInOut{KwIn}{Input}
							\SetKwInOut{KwOut}{Output}
							
							\KwIn{$\mathcal{E}, P$}
							\KwOut{$\mathcal{E}_{ref}$}
							
							$\mathcal{C}_{div} \gets \emptyset, \mathcal{C}_{ndiv} \gets \emptyset$
							
							\For{$Q \in [P]_{\mathcal{E}}$}{
								$isDiv \gets \mathsf{DetDivTree}(Q, \mathcal{E})$
								
								\eIf{$isDiv = \mathbf{T}$}{$\mathcal{C}_{div} \gets \mathcal{C}_{div} \cup \{Q\}$}{
									$\mathcal{C}_{ndiv} \gets \mathcal{C}_{ndiv} \cup \{Q\}$
								}
							}
							
							$\mathcal{E}_{ref} \gets \mathcal{E} ~\backslash~ \{[P]_{\mathcal{E}}\} \cup \{\mathcal{C}_{div}, \mathcal{C}_{ndiv}\}$
							
							\KwRet{$\mathcal{E}_{ref}$}
						\end{algorithm2e}
					\end{minipage}
				\end{figure*}

				The following lemma shows that if two processes have different divergence properties with respect to an equivalence coarser than $\simeq^{\Delta}$, then they will keep such distinction for $\simeq^{\Delta}$.
				\begin{lemma}\label{lem-divsen}
					Given an equivalence $\mathcal{E}$ on $\PRCCS$ satisfying that $\simeq^{\Delta} \;\subseteq \mathcal{E}$ and two processes $A,B \in \PRCCS$.
					If $A \Uparrow_{\mathcal{E}}$ and $B \not\Uparrow_{\mathcal{E}}$, then $(A,B) \notin \;\simeq^{\Delta}$.
				\end{lemma}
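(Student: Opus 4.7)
The plan is to prove the contrapositive: assume $(A,B)\in\;\simeq^{\Delta}$ and $A\Uparrow_{\mathcal{E}}$, then derive $B\Uparrow_{\mathcal{E}}$, contradicting the hypothesis $B\not\Uparrow_{\mathcal{E}}$. Since the divergent $\epsilon$-tree $t^{A}_{\mathcal{E}}$ witnessing $A\Uparrow_{\mathcal{E}}$ lives entirely inside the $\epsilon$-graph $G^{\mathcal{E}}_{A}$ and has convergence probability $0$, the end-component property of finite probabilistic graphs (Theorem 3.2 of \cite{dealfaro_FormalVerificationProbabilistic_1998}, already invoked in the proof of Theorem \ref{thm-dsbbVSebb}) forces every branch of the tree to reach a $\tau$-EC with probability $1$. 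Thus there is a $\tau$-EC $\mathsf{ec}_1$ with $V_{\mathsf{ec}_1}\subseteq V^{\mathcal{E}}_{A}\subseteq [A]_{\mathcal{E}}$ such that $A\xRightarrow{}\;\circlearrowleft_{\mathsf{ec}_1}$ via a state-preserving (under $\mathcal{E}$) path in $G^{\mathcal{E}}_{A}$.

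Next, Theorem \ref{thm-dsbbVSebb} says $\simeq^{\Delta}$ is $\tau$-EC invariant, so $(A,B)\in\;\simeq^{\Delta}$ yields a $\tau$-EC $\mathsf{ec}_2$ with $B\xRightarrow{}\;\circlearrowleft_{\mathsf{ec}_2}$ and $\mathsf{ec}_1\,(\simeq^{\Delta})^{\ddagger}\,\mathsf{ec}_2$. By Definition \ref{defn-rtauEC}, every $B''\in V_{\mathsf{ec}_2}$ is $\simeq^{\Delta}$-related to some $A''\in V_{\mathsf{ec}_1}$, and the hypothesis $\simeq^{\Delta}\;\subseteq\;\mathcal{E}$ then yields $V_{\mathsf{ec}_2}\subseteq[A]_{\mathcal{E}}=[B]_{\mathcal{E}}$. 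Combining Corollary \ref{coro-tauEC} (which gives $B''\Uparrow_{\simeq^{\Delta}}$) with $\simeq^{\Delta}\;\subseteq\;\mathcal{E}$, any divergent $\epsilon$-tree of $B''$ with regard to $\simeq^{\Delta}$ is also a divergent $\epsilon$-tree with regard to $\mathcal{E}$, so $B''\Uparrow_{\mathcal{E}}$ for every $B''\in V_{\mathsf{ec}_2}$.

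It remains to propagate divergence from $B''$ back to $B$. Refining the inductive matching used in the proof of Theorem \ref{thm-dsbbVSebb} along the chosen state-preserving path $A=A_0\to A_1\to\cdots\to A_n\in V_{\mathsf{ec}_1}$ in $G^{\mathcal{E}}_{A}$, the plan is to construct a regular $\epsilon$-tree of $B$ with regard to $\mathcal{E}$ whose leaves lie in $V_{\mathsf{ec}_2}$: at each step the branching bisimulation clause for $(A_i,B_i)\in\;\simeq^{\Delta}$ supplies either a trivial match (when $A_{i+1}\simeq^{\Delta} A_i$), a match by an $\ell$-transition, or a match by a $q$-transition, and in each of these cases the intermediate processes of the matching lie in $[B_i]_{\simeq^{\Delta}}\cup[A_{i+1}]_{\simeq^{\Delta}}\subseteq[B]_{\mathcal{E}}$. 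Splicing the divergent $\epsilon$-trees of the leaves from the previous paragraph onto this regular tree yields a divergent $\epsilon$-tree of $B$ with regard to $\mathcal{E}$, i.e.\ $B\Uparrow_{\mathcal{E}}$, the desired contradiction. The main obstacle is the $q$-transition case, where one must argue that the destinations of the collective silent transition picked at a leaf $L\simeq^{\Delta} A_i$ all remain inside $[L]_{\mathcal{E}}=[B]_{\mathcal{E}}$ so that the tree is a legitimate $\epsilon$-tree with regard to $\mathcal{E}$; this is where the assumption $\simeq^{\Delta}\;\subseteq\;\mathcal{E}$ is used in an essential way, and where Definition \ref{defn-qtrans} has to be carefully unfolded to convert the conditional-probability match into a full set-theoretic inclusion.
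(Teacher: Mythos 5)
Your overall strategy coincides with the paper's: argue by contradiction/contraposition, assume $(A,B)\in\;\simeq^{\Delta}$, and transfer the divergent $\epsilon$-tree of $A$ with regard to $\mathcal{E}$ over to $B$. The paper's entire proof is a one-line appeal to the construction already carried out in the proof of Lemma~\ref{lem-dsbb-equivalence}, instantiated with $\mathcal{E}_i:=\;\simeq^{\Delta}$ and the ambient equivalence taken to be $\mathcal{E}$; that proof contains both the structural induction on $t^{A}_{\mathcal{E}}$ and the argument that the resulting $t^{B}_{\mathcal{E}}$ is divergent. Your detour through $\tau$-ECs (de Alfaro's end-component theorem, Theorem~\ref{thm-dsbbVSebb}, Corollary~\ref{coro-tauEC}) is a genuinely different, and sound, way to certify that the ``targets'' of the transfer are divergent with regard to $\mathcal{E}$, and your steps (a)--(c) are correct (in particular $V_{\mathsf{ec}_2}\subseteq[B]_{\mathcal{E}}$ and $B''\Uparrow_{\mathcal{E}}$ follow as you say). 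But this machinery only replaces the easier, final part of the paper's argument; it does not remove the need for the tree-matching construction, which is where all the work lies.

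That construction is exactly where your proposal stops being a proof. Two concrete problems. First, you cannot match ``along the chosen state-preserving path $A_0\to\cdots\to A_n$'': whenever $A_i\to A_{i+1}$ is one edge of a collective transition $A_i\xrightarrow{\coprod_j p_j\tau}\coprod_j A^j$, any legitimate $\epsilon$-tree on the $B$-side must include \emph{all} destinations of the matching collective transition, so you are forced to match the whole subtree of $t^{A}_{\mathcal{E}}$, not a single branch --- this is precisely why Lemma~\ref{lem-dsbb-equivalence} inducts on the tree. Second, the obstacle you flag in the $q$-transition case is real but is not resolved by ``carefully unfolding Definition~\ref{defn-qtrans}'': clause (2) of Definition~\ref{defn-BB} only guarantees, for each single target class $[A^1]_{\simeq^{\Delta}}$, a match whose leaves' collective transitions have the right \emph{conditional} probability into that one class, and says nothing about where the remaining probability mass of those transitions lands; different target classes may moreover be matched by different trees. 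The inclusion you need is obtained in the paper, for the destinations $B^{j'}$ that do land in $[A^1]_{\simeq^{\Delta}}$ or stay in $[L]_{\simeq^{\Delta}}$, via the chain $B^{j'}\simeq^{\Delta}A^1\;\mathcal{E}\;A\simeq^{\Delta}B$ together with $\simeq^{\Delta}\;\subseteq\;\mathcal{E}$ (case ii) of the proof of Lemma~\ref{lem-dsbb-equivalence}); you should either reproduce that argument or cite that construction explicitly. As written, the crux of your proof is announced as a plan with an acknowledged unresolved step, so the proposal is incomplete.
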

				\begin{proof}
					We prove this lemma by contradiction.
					Assume that $(A,B) \in \;\simeq^{\Delta}$.
					Since $A \Uparrow_{\mathcal{E}}$, there exists a divergent $\epsilon$-tree $t_{\mathcal{E}}^A$ of $A$ with regard to $\mathcal{E}$.
					Since $\simeq^{\Delta} \;\subseteq \mathcal{E}$ and $\simeq^{\Delta}$ is a branching bisimulation with explicit divergence, by a similar argument as in the proof of Lemma \ref{lem-dsbb-equivalence}, we can construct a divergent $\epsilon$-tree $t_{\mathcal{E}}^B$ of $B$ with regard to $\mathcal{E}$ by induction on the structure of $t_{\mathcal{E}}^A$.
					Thus we have $B \Uparrow_{\mathcal{E}}$, which leads to a contradiction.
				\end{proof}
				
				The real challenge in designing an efficient algorithm for the branching bisimilarity with explicit divergence is to do with correctness. Here Lemma \ref{lem-divsen} plays a key role in the correctness proof (Theorem \ref{thm-DivBranBisim-Correctness}) of the partition-refinement algorithm (Algorithm \ref{algo-DivBranBisim}). Lemma \ref{lem-divsen} is a new result highly related to the notion of \emph{divergent $\epsilon$-tree preserving}, which we have not seen mentioned in the literature.  More importantly, the proof of Lemma \ref{lem-divsen} heavily relies on the new technique developed in the proof of Lemma \ref{lem-dsbb-equivalence}.
				
				\begin{theorem}[Correctness]\label{thm-DivBranBisim-Correctness}
					Given two processes $A,B \in\PRCCS$, $\mathsf{DivBranBisim}(A,B)$ returns $\mathbf{T}$ if and only if $A \simeq^{\Delta} B$.
				\end{theorem}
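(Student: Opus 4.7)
The plan is to prove correctness via a loop invariant: throughout the execution of $\mathsf{DivBranBisim}(A,B)$, the partition $\mathcal{E}$ maintained by the algorithm (or equivalently its predecessor $\mathcal{E}_{ini}$) always satisfies $\simeq^{\Delta}\!\!\upharpoonright_R\ \subseteq\ \mathcal{E}$. Once this is established, the theorem will follow from a short argument at the loop exit: the exit condition guarantees that $\mathcal{E}$ is itself a branching bisimulation with explicit divergence, hence $\mathcal{E}\subseteq\ \simeq^{\Delta}$, and combined with the invariant this forces $\mathcal{E}\ =\ \simeq^{\Delta}\!\!\upharpoonright_R$. Thus the final test $(A,B)\in\mathcal{E}$ answers the question exactly.

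First I would dispose of termination. Each pass through the \textbf{do--while} loop either exits (when $\mathsf{FindDivSplit}$ reports success) or invokes $\mathsf{DivRefine}$ on some class $[P]_{\mathcal{E}}$ containing both a divergent and a non-divergent process, which strictly increases the number of blocks in the partition. Since partitions of the finite set $R$ have at most $|R|$ blocks, the loop terminates after at most $|R|$ iterations. Each called subroutine ($\mathsf{Quotient}$, $\mathsf{FindDivSplit}$, $\mathsf{DivRefine}$) runs in polynomial time, but for correctness only termination is needed here.

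Next I would prove the invariant by induction on the iteration count. The base case is immediate since $\mathcal{E}_{ini}=\{R\}$ is the coarsest partition. For the inductive step, assume $\simeq^{\Delta}\!\!\upharpoonright_R\ \subseteq\ \mathcal{E}_{ini}$ at the start of an iteration. The call $\mathsf{Quotient}(\mathcal{E}_{ini})$ yields the largest branching bisimulation contained in $\mathcal{E}_{ini}$ by Theorem~\ref{thm-bbAlg}; since $\simeq^{\Delta}$ is itself a branching bisimulation (Theorem~\ref{thm-dsbb-equivalence}) and is contained in $\mathcal{E}_{ini}$, we obtain $\simeq^{\Delta}\!\!\upharpoonright_R\ \subseteq\ \mathcal{E}$. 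If $\mathsf{FindDivSplit}(\mathcal{E})$ returns $(\mathbf{T},\bot)$ the loop exits and nothing further needs to be checked. Otherwise a splitter $P\in\mathcal{E}$ is produced with both divergent and non-divergent witnesses in $[P]_{\mathcal{E}}$, and $\mathsf{DivRefine}(\mathcal{E},P)$ splits $[P]_{\mathcal{E}}$ into $\mathcal{C}_{div}$ and $\mathcal{C}_{ndiv}$. To see that $\simeq^{\Delta}\!\!\upharpoonright_R$ is preserved by this split, apply Lemma~\ref{lem-divsen}: since $\simeq^{\Delta}\ \subseteq\ \mathcal{E}$, any two $\simeq^{\Delta}$-equivalent processes share the same divergence status with respect to $\mathcal{E}$, so they lie together in $\mathcal{C}_{div}$ or together in $\mathcal{C}_{ndiv}$. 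Hence the invariant is restored for the next iteration.

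Finally I would close the loop. When $\mathsf{FindDivSplit}(\mathcal{E})$ returns $(\mathbf{T},\bot)$, no pair in $\mathcal{E}$ violates the divergent $\epsilon$-tree preserving condition, i.e.\ $\mathcal{E}$ is divergent $\epsilon$-tree preserving; combined with $\mathcal{E}$ being a branching bisimulation (since it was produced by $\mathsf{Quotient}$ earlier in the same iteration), Definition~\ref{def-dsbb} gives that $\mathcal{E}$ is a branching bisimulation with explicit divergence. By maximality of $\simeq^{\Delta}$ we get $\mathcal{E}\ \subseteq\ \simeq^{\Delta}\!\!\upharpoonright_R$, and together with the invariant $\mathcal{E}\ =\ \simeq^{\Delta}\!\!\upharpoonright_R$. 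Hence $(A,B)\in\mathcal{E}$ iff $A\simeq^{\Delta}B$, which is exactly what the algorithm returns. The only genuinely delicate step in this plan is the preservation of the invariant under $\mathsf{DivRefine}$; this step is where Lemma~\ref{lem-divsen} is indispensable, and also where one must remember that Lemma~\ref{lem-divsen} itself is proved using the intricate tree-construction technique from Lemma~\ref{lem-dsbb-equivalence}. The remaining steps are essentially bookkeeping within the standard partition-refinement paradigm.
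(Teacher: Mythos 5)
Your proposal is correct and follows essentially the same route as the paper's proof: the same loop invariant $\simeq^{\Delta}\subseteq\mathcal{E}_i$ established by induction on iterations, the same appeal to Lemma~\ref{lem-divsen} to show that $\mathsf{DivRefine}$ never separates a $\simeq^{\Delta}$-related pair, and the same closing argument that the terminal partition is itself a branching bisimulation with explicit divergence and hence coincides with $\simeq^{\Delta}$ on $R$. The only cosmetic difference is that you justify preservation under $\mathsf{Quotient}$ via the maximality guarantee of Theorem~\ref{thm-bbAlg} rather than by arguing that every deleted pair violates the branching bisimulation conditions; these are interchangeable.
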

				
				\begin{proof}
					To  the procedure $\mathsf{DivBranBisim}(A,B)$,  let $\mathcal{E}_i$ (resp. $\mathcal{I}(\mathcal{E}_i)$, $(\mathcal{E}_{ini})_i$) be the current value of $\mathcal{E}$ (resp. $\mathcal{I}(\mathcal{E})$, $\mathcal{E}_{ini}$) at the end of the $i$-th iteration of the $\textbf{do--while}$ loop.
					It is not hard to show that all $\mathcal{E}_i$ are equivalence relations by induction.
					We then prove that $\simeq^{\Delta} \;\subseteq\; \mathcal{E}_i \;\subseteq\; (\mathcal{E}_{ini})_{i-1}$ holds for all $i \ge 1$ by induction on $i$.
					\begin{itemize}
						\item  \emph{(Base case).} We need to show that $\simeq^{\Delta} \;\subseteq\mathcal{E}_1  \;\subseteq\; (\mathcal{E}_{ini})_{0}$.
						
						$\mathcal{E}_1 \;\subseteq (\mathcal{E}_{ini})_0$ holds trivially for $(\mathcal{E}_{ini})_0 = \{R\}$. As $\mathcal{E}_1 = \mathsf{Quotient}((\mathcal{E}_{ini})_0) = (\mathcal{E}_{ini})_0 \slash \simeq$ is the set of equivalence classes of $(\mathcal{E}_{ini})_{0}$ under $\simeq$ (the branching bisimilarity), $\simeq^{\Delta} \;\subseteq\; \simeq\;=\mathcal{E}_1$.
						
						\item \emph{(Induction step).} Assume that $\simeq^{\Delta} \;\subseteq\; \mathcal{E}_i \;\subseteq\; (\mathcal{E}_{ini})_{i-1}$, we need to show that $\simeq^{\Delta} \;\subseteq\; \mathcal{E}_{i+1} \;\subseteq\; (\mathcal{E}_{ini})_{i}$.
						
						We consider the $i$-th iteration of the $\textbf{do--while}$ loop first.
						Function $\mathsf{FindDivSplit}(\mathcal{E}_i)$ returns $(divSen, A)$, where $divSen$ is the flag that indicates whether $\mathcal{E}_i$ is a branching bisimulation with explicit divergence  and $A$ is the found splitter.
						If $divSen = \mathbf{T}$, then $(\mathcal{E}_{ini})_{i} = (\mathcal{E}_{ini})_{i-1}$ holds.
						By inductive hypothesis we have $\simeq^{\Delta} \;\subseteq\; (\mathcal{E}_{ini})_{i-1}$, which implies that $\simeq^{\Delta} \;\subseteq (\mathcal{E}_{ini})_{i}$.
						If $divSen = \mathbf{F}$, then $(\mathcal{E}_{ini})_{i} \subsetneq \mathcal{E}_{i}$. Then consider any pair $(A,B)$ deleted by $\mathsf{DivRefine}$, i.e., $(A,B) \in \mathcal{E}_{i}\backslash (\mathcal{E}_{ini})_{i}$.
						According to the definition of $\mathsf{DivRefine}$, such pair $(A,B)$ must violate the divergence condition, and we may assume that $A \Uparrow_{\mathcal{E}_i}$ and $B \not\Uparrow_{\mathcal{E}_i}$.
						Since $\simeq^{\Delta} \;\subseteq \mathcal{E}_i$, $(A,B) \notin \;\simeq^{\Delta}$ follows from Lemma \ref{lem-divsen}.
						Now we see that none of the pairs deleted by $\mathsf{DivRefine}$ belongs to $\simeq^{\Delta}$, which leads to $\simeq^{\Delta} \;\subseteq (\mathcal{E}_{ini})_{i}$.
						We then consider the $(i+1)$-th iteration of the $\textbf{do--while}$ loop.
						Since the result of $\mathsf{Quotient}((\mathcal{E}_{ini})_i)$ is a refinement of $(\mathcal{E}_{ini})_i$, we have $\mathcal{E}_{i+1} = \mathsf{Quotient}((\mathcal{E}_{ini})_i) \subseteq (\mathcal{E}_{ini})_{i}$.
						Since any pair $(A, B)$ deleted by $\mathsf{Quotient}$ must violate the branching bisimulation conditions, we have $(A, B) \notin\; \simeq^{\Delta}$.
						Therefore $\simeq^{\Delta} \;\subseteq\mathcal{E}_{i+1}$.
					\end{itemize}
					
					The $\textbf{do--while}$ loop in procedure $\mathsf{DivBranBisim}(A,B)$ proceeds to $(i+1)$-th iteration iff the flag $divSen = \mathbf{F}$ after $i$-th iteration, or equivalently iff $\mathcal{E}_{i+1} \subsetneq \mathcal{E}_{i}$.
					Now we have that $\mathcal{E}_{0} \supsetneq \mathcal{E}_{1} \supsetneq \cdots \supsetneq \mathcal{E}_{i} \supsetneq\cdots$.
					In the light of the facts that $\simeq^{\Delta} \;\subseteq\; \mathcal{E}_i $ holds for all $i \ge 0$ and that all $\mathcal{E}_i$ are finite sets, the chain $\{\mathcal{E}_{i}\}_{i\in \mathsf{N}}$ must end up with some $\mathcal{E}_{n}$ satisfying $\simeq^{\Delta} \;\subseteq\; \mathcal{E}_n$, which assures the termination of $\mathsf{DivBranBisim}(A,B)$.
					Now since the $\textbf{do--while}$ loop terminates in $n$-th iteration, it must be the case that any pair $(A,B) \in \mathcal{E}_{n}$ satisfies both branching bisimulation  and divergence-sensitive conditions.
					By definition, $\mathcal{E}_{n}$ is a branching bisimulation with explicit divergence  and $\mathcal{E}_{n} \subseteq\; \simeq^{\Delta}$.
					Combining the fact $\simeq^{\Delta} \;\subseteq\; \mathcal{E}_n$ and $\mathcal{E}_{n} \subseteq\; \simeq^{\Delta}$, we conclude that $\simeq^{\Delta} \;=\; \mathcal{E}_n$.
					Now it should be clear that $A \simeq^{\Delta} B$ iff $(A,B) \in \mathcal{E}_n$ iff the procedure $\mathsf{DivBranBisim}(A,B)$ returns $\mathbf{T}$.
				\end{proof}
				
				\begin{proposition}
					[Complexity]\label{thm-DivBranBisim-Complexity}
					Let $N$ be the number of processes reachable from $A$ and $B$.
					The algorithm $\mathsf{DivBranBisim}(A,B)$ runs in polynomial time with respect to $N$.
				\end{proposition}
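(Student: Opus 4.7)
The plan is to decompose the analysis into three layers: (i) the cost of the innermost subroutine $\mathsf{DetDivTree}$, (ii) the cost of a single iteration of the outer $\textbf{do--while}$ loop in $\mathsf{DivBranBisim}$, and (iii) a bound on the number of such iterations. Let $N$ denote the size of the reachable set $R = \mathsf{Reach}(A) \uplus \mathsf{Reach}(B)$. Since $\mathrm{RCCS}_{fs}$ is finite-state with syntactically bounded branching, the induced transition graph $G_A$ (and hence $G_B$) has at most $N$ nodes and a number of edges polynomial in $N$.

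For the innermost layer, I would first argue that $\mathsf{DetDivTree}(A,\mathcal{E})$ is polynomial: computing the $\epsilon$-graph $G_A^{\mathcal{E}}$ is a single scan of $G_A$ filtering state-preserving silent edges; inside the $\textbf{do--while}$ loop of $\mathsf{DetDivTree}$, each non-terminating iteration moves at least one node from $\mathcal{L}_{und}$ into $\mathcal{L}_{ndiv}$, so the loop runs at most $N$ times, and each iteration performs only a linear edge scan. Consequently $\mathsf{FindDivSplit}(\mathcal{E})$, which invokes $\mathsf{DetDivTree}$ on at most $O(N^2)$ pairs, and $\mathsf{DivRefine}(\mathcal{E},P)$, which invokes it on at most $N$ processes, are both polynomial in $N$. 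Combined with Theorem \ref{thm-bbAlg} (which gives $\mathsf{Quotient}$ in polynomial time), each iteration of the outer $\textbf{do--while}$ loop in $\mathsf{DivBranBisim}$ costs polynomial time.

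It remains to bound the number of outer iterations. The key invariant is that every iteration that does not terminate strictly refines the partition: when $divSen = \mathbf{F}$, the splitter $P$ returned by $\mathsf{FindDivSplit}$ is guaranteed to come with a companion $Q \in [P]_{\mathcal{E}}$ yielding the opposite value of $\mathsf{DetDivTree}$, so $\mathsf{DivRefine}$ produces two nonempty blocks $\mathcal{C}_{div}$ and $\mathcal{C}_{ndiv}$; the subsequent call to $\mathsf{Quotient}$ can only refine further. Since the initial partition has one block and no partition of $R$ has more than $N$ blocks, the loop executes at most $N$ times. Multiplying the per-iteration cost by $N$ yields the claimed polynomial bound. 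The main point of care is verifying this strict-refinement invariant and confirming that the interplay between $\mathsf{Quotient}$ (which respects but does not enforce the divergence-sensitive condition) and $\mathsf{DivRefine}$ (which splits by divergence but may break branching bisimulation) never causes a spurious non-terminating iteration; once that is in place, the overall polynomial complexity in $N$ follows.
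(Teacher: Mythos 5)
Your proposal is correct and follows essentially the same decomposition as the paper's proof: bound the cost of $\mathsf{DetDivTree}$, hence of $\mathsf{FindDivSplit}$ and $\mathsf{DivRefine}$, combine with Theorem \ref{thm-bbAlg} for $\mathsf{Quotient}$, and multiply by a bound on the number of outer iterations. The only (harmless) difference is that you bound the iteration count by $N$ via the number of partition blocks, whereas the paper bounds it by $N^2$ via the strictly decreasing chain of pair sets $\mathcal{E}_0 \supsetneq \mathcal{E}_1 \supsetneq \cdots$; your bound is tighter, and both suffice for polynomiality.
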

				\begin{proof}
					As is shown in the proof of Theorem \ref{thm-DivBranBisim-Correctness}, $\mathcal{E}_{i+1} \subsetneq \mathcal{E}_{i}$ holds for all $i < n$, where $n$ is the number of iterations of the $\textbf{do--while}$ loop in procedure $\mathsf{DivBranBisim}(A,B)$.
					Now it is easy to see that $n \le |\mathcal{E}_0| \le N^2$.
					Let $Q(N)$ be the complexity of $\mathsf{Quotient}$, which is shown to be polynomial in $N$ in \cite{zhang_UniformRandomProcess_2019}.
					For procedure $ \mathsf{FindDivSplit}(\mathcal{E})$, the $\textbf{for}$ loop can run for no more than $|\mathcal{E}| = \mathcal{O}(N^2)$ times.
					For procedure $\mathsf{DetDivTree}(A,\mathcal{E})$, the outer $\textbf{do--while}$ loop can repeat for no more than $|V_A^{\mathcal{E}}| \le N$ times; the loop body detects all the state-preserving transitions in the $\epsilon$-graph $(V_A^{\mathcal{E}}, E_A^{\mathcal{E}})$, which leads to $\mathcal{O}(N^2)$ complexity; thus the time complexity for $\mathsf{DetDivTree}$ is $\mathcal{O}(N^3)$.
					Therefore the time complexity for $\mathsf{FindDivSplit}(\mathcal{E})$ is $\mathcal{O}(N^5)$.
					Similarly, we can show that the time complexity for $\mathsf{DivRefine}$ is $\mathcal{O}(N^4)$.
					Thus the overall complexity of the algorithm $\mathsf{DivBranBisim}(A,B)$ is $\mathcal{O}(N^2(Q(N) + N^5 + N^4)) = \mathcal{O}(N^2\cdot Q(N)+N^7)$, i.e., polynomial in $N$.
				\end{proof}

	\subsection{Algorithm for deciding exhaustive branching bisimilarity}\label{sec-alg4ebb}
	In this part, we focus on the decision algorithm for $\simeq_{e}$.
	We start with the following definition.
	
	\begin{definition}[Maximal $\tau$-EC]\label{def-MtEC}
		Suppose $B\in \PRCCS$, and let $G_B = (V_B, E_B)$ be the induced transition graph of $B$, where $V_B$ is the set of all processes  reachable from $B$. A $\tau$-EC $\mathsf{ec}=(V,E)$ of $B$ is called \emph{maximal} if there is no other $\tau$-EC $\mathsf{ec}'=(V',E')$ such that $(V,E) \subsetneq (V',E')$.
		We usually use $\mathsf{mec}=(V,E)$ to denote a maximal $\tau$-EC.
	\end{definition}

	\begin{definition}[Maximal $\tau$-EC invariant]
		Let $\mathcal{E}$ be an equivalence on $\PRCCS$. $\mathcal{E}$ is \emph{maximal $\tau$-EC invariant} if for all $(A,B)\in\mathcal{E}$ the following holds: whenever $A \xRightarrow{} \circlearrowleft_{\mathsf{mec}_1}$ for a  maximal $\tau$-EC $\mathsf{mec}_1$, then $B \xRightarrow{} \circlearrowleft_{\mathsf{mec}_2}$ for some maximal $\tau$-EC $\mathsf{mec}_2$ such that  $\mathsf{mec}_1 \;\mathcal{E}^{\ddagger}\; \mathsf{mec}_2$.
	\end{definition}
	
	The connection between $\tau$-EC invariant and maximal $\tau$-EC invariant can be stated in the following lemma. Its proof relies on the simple observation:  each maximal $\tau$-EC is itself a $\tau$-EC and each $\tau$-EC is contained in some maximal $\tau$-EC.
	
				\begin{lemma}\label{lem-EcAndTauEc}
		Let $\mathcal{E}$ be an equivalence on $\PRCCS$. $\mathcal{E}$ is $\tau$-EC invariant iff it is maximal $\tau$-EC invariant.
	\end{lemma}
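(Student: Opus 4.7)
The plan is to prove both directions by leveraging the two structural facts indicated in the hint: every maximal $\tau$-EC is itself a $\tau$-EC, and every $\tau$-EC is contained in some maximal $\tau$-EC. In each direction, one passes between a $\tau$-EC and a maximal $\tau$-EC extending or subsuming it, and checks that the $\mathcal{E}^{\ddagger}$ relation can be transferred.

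For the forward direction ($\tau$-EC invariant implies maximal $\tau$-EC invariant), take $(A,B) \in \mathcal{E}$ and a maximal $\tau$-EC $\mathsf{mec}_1$ with $A \xRightarrow{} \circlearrowleft_{\mathsf{mec}_1}$. Since $\mathsf{mec}_1$ is in particular a $\tau$-EC, condition~(1) of Definition~\ref{defn-rtauEC-inv} supplies some $\tau$-EC $\mathsf{ec}_2$ with $B \xRightarrow{} \circlearrowleft_{\mathsf{ec}_2}$ and $\mathsf{mec}_1 \;\mathcal{E}^{\ddagger}\; \mathsf{ec}_2$. I embed $\mathsf{ec}_2$ into a maximal $\tau$-EC $\mathsf{mec}_2$; because $V_{\mathsf{ec}_2} \subseteq V_{\mathsf{mec}_2}$, the reachability $B \xRightarrow{} \circlearrowleft_{\mathsf{mec}_2}$ is immediate, so the only remaining task is to verify $\mathsf{mec}_1 \;\mathcal{E}^{\ddagger}\; \mathsf{mec}_2$.

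For the converse direction (maximal $\tau$-EC invariant implies $\tau$-EC invariant), take $(A,B) \in \mathcal{E}$ and a $\tau$-EC $\mathsf{ec}_1$ with $A \xRightarrow{} \circlearrowleft_{\mathsf{ec}_1}$. Extending $\mathsf{ec}_1$ to a maximal $\tau$-EC $\mathsf{mec}_1$ gives $A \xRightarrow{} \circlearrowleft_{\mathsf{mec}_1}$, and by assumption some maximal $\tau$-EC $\mathsf{mec}_2$ satisfies $B \xRightarrow{} \circlearrowleft_{\mathsf{mec}_2}$ and $\mathsf{mec}_1 \;\mathcal{E}^{\ddagger}\; \mathsf{mec}_2$. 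Taking $\mathsf{ec}_2 := \mathsf{mec}_2$, which is again a $\tau$-EC, yields the existence witness for condition~(1). Condition~(2) is obtained symmetrically by repeating the argument for $(B,A) \in \mathcal{E}$, exploiting that $\mathcal{E}^{-1} = \mathcal{E}$.

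The main obstacle lies in the asymmetry of $\mathcal{E}^{\ddagger}$ when switching between a $\tau$-EC and its maximal extension: in the forward direction one must promote $\mathsf{mec}_1 \;\mathcal{E}^{\ddagger}\; \mathsf{ec}_2$ to $\mathsf{mec}_1 \;\mathcal{E}^{\ddagger}\; \mathsf{mec}_2$, which requires covering the extra nodes of $\mathsf{mec}_2 \setminus \mathsf{ec}_2$; and in the converse direction one must produce $\mathsf{ec}_1 \;\mathcal{E}^{\ddagger}\; \mathsf{mec}_2$ rather than merely $\mathsf{mec}_1 \;\mathcal{E}^{\ddagger}\; \mathsf{mec}_2$. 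I would close these gaps by invoking both clauses of Definition~\ref{defn-rtauEC-inv} together with the equivalence property of $\mathcal{E}$, using the strong connectivity of the maximal $\tau$-ECs so that every additional node is connected via silent transitions to one for which the required $\mathcal{E}$-witness is already available.
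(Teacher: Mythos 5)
Your skeleton rests on the same two observations the paper records after the statement (every maximal $\tau$-EC is a $\tau$-EC; every $\tau$-EC is contained in a maximal one), and you correctly isolate where the real work lies. However, the mechanism you propose for closing the two gaps does not work, so the proof is genuinely incomplete. Note that Definition~\ref{defn-rtauEC} only ever constrains the nodes of the \emph{responding} $\tau$-EC: $\mathsf{ec}_1\;\mathcal{E}^{\ddagger}\;\mathsf{ec}_2$ demands a witness in $\mathsf{ec}_1$ for every node of $\mathsf{ec}_2$, and both clauses of Definition~\ref{defn-rtauEC-inv} have exactly this shape. Consequently, no combination of applications of those clauses can ever force the nodes of $\mathsf{mec}_2\setminus\mathsf{ec}_2$ to be covered by nodes of $\mathsf{mec}_1$: the extra nodes never appear on the constrained side. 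Your fallback --- that each extra node ``is connected via silent transitions to one for which the required $\mathcal{E}$-witness is already available'' --- proves nothing here, because $\mathcal{E}$ is an arbitrary equivalence in this lemma and $\mathcal{E}$-relatedness does not propagate along silent transitions. To see the obstruction concretely, take $A=\mu X.\tau.X$, $B=\mu X.(\tau.X+\tau.\tau.X)$ and $B'=\tau.B$, with $\mathcal{E}$ identifying $A$ and $B$ and isolating $B'$: the challenge $\mathsf{mec}_1=(\{A\},\{A\xrightarrow{\tau}A\})$ can be answered by the sub-$\tau$-EC $(\{B\},\{B\xrightarrow{\tau}B\})$, but the unique maximal $\tau$-EC reachable from $B$ contains $B'$, which has no $\mathcal{E}$-witness in $\mathsf{mec}_1$. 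Strong connectivity of $\{B,B'\}$ does not manufacture one. The converse direction has the symmetric unclosed gap you yourself name: the witnesses delivered by maximal $\tau$-EC invariance may all lie in $\mathsf{mec}_1\setminus\mathsf{ec}_1$, and nothing lets you relocate them into $\mathsf{ec}_1$.

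In fairness, the paper supplies no proof beyond the two structural observations you start from, so you have not overlooked an argument the authors wrote down; but those observations alone yield only the reachability transfers ($A\xRightarrow{}\circlearrowleft_{\mathsf{mec}_1}$, $B\xRightarrow{}\circlearrowleft_{\mathsf{mec}_2}$), not the $\mathcal{E}^{\ddagger}$ transfers, which are the entire content of the lemma. Any honest completion has to import additional structure on $\mathcal{E}$ --- in every place the lemma is invoked, $\mathcal{E}$ is also a weak or branching bisimulation, and it is that hypothesis (which, in the spirit of Lemma~\ref{lem-tauEC}, forces the states of a reachable silent component into suitably related classes) that must be used to cover the nodes of $\mathsf{mec}_2\setminus\mathsf{ec}_2$. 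As written, your argument establishes strictly less than the statement claims.
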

	
	With the help of Lemma \ref{lem-EcAndTauEc}, the correctness of the following proposition should be clear.
	
	\begin{proposition}\label{prop-mec}
		An equivalence $\mathcal{E}$ on $\PRCCS$ is an exhaustive weak bisimulation iff $\mathcal{E}$ is a weak bisimulation and maximal $\tau$-EC invariant.
	\end{proposition}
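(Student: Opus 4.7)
The plan is to obtain Proposition \ref{prop-mec} as an immediate corollary of Definition \ref{def-ewb} combined with Lemma \ref{lem-EcAndTauEc}. By Definition \ref{def-ewb}, an equivalence $\mathcal{E}$ on $\PRCCS$ is an exhaustive weak bisimulation precisely when $\mathcal{E}$ is both a weak bisimulation and $\tau$-EC invariant. Lemma \ref{lem-EcAndTauEc} asserts that, for equivalences on $\PRCCS$, being $\tau$-EC invariant is the same as being maximal $\tau$-EC invariant. Chaining these two biconditionals yields the proposition with no further work.

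Concretely, I would spell out both directions. For the ``only if'' direction, assume $\mathcal{E}$ is an exhaustive weak bisimulation. Then Definition \ref{def-ewb} supplies that $\mathcal{E}$ is a weak bisimulation and $\tau$-EC invariant, and Lemma \ref{lem-EcAndTauEc} promotes the $\tau$-EC invariance to maximal $\tau$-EC invariance. For the ``if'' direction, assume $\mathcal{E}$ is a weak bisimulation and maximal $\tau$-EC invariant. The same lemma converts the maximal invariance back into full $\tau$-EC invariance, so Definition \ref{def-ewb} is satisfied and $\mathcal{E}$ is an exhaustive weak bisimulation.

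Because the argument amounts to unfolding a definition and invoking a previously established lemma, I do not expect any genuine obstacle at this step; all of the substantive content sits inside Lemma \ref{lem-EcAndTauEc}, which itself rests on the two structural observations that every maximal $\tau$-EC is a $\tau$-EC and every $\tau$-EC is contained in some maximal $\tau$-EC. The only small sanity checks worth surfacing in the write-up are (i) that the two symmetric clauses of Definition \ref{defn-rtauEC-inv} reduce to a single clause under the assumption that $\mathcal{E}$ is an equivalence, matching the one-clause form used to define maximal $\tau$-EC invariance, and (ii) that when passing between a $\tau$-EC and an enclosing maximal one the lifted relation $\mathcal{E}^{\ddagger}$ of Definition \ref{defn-rtauEC} is preserved in the required asymmetric sense. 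Both are routine given Lemma \ref{lem-EcAndTauEc} and require no additional machinery.
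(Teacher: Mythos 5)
Your proposal is correct and follows essentially the same route as the paper, which likewise obtains Proposition \ref{prop-mec} by unfolding Definition \ref{def-ewb} and invoking Lemma \ref{lem-EcAndTauEc} to exchange $\tau$-EC invariance for maximal $\tau$-EC invariance. Your two sanity checks (the collapse of the two symmetric clauses of Definition \ref{defn-rtauEC-inv} for an equivalence, and the preservation of $\mathcal{E}^{\ddagger}$ when passing to an enclosing maximal $\tau$-EC) are exactly the points the paper leaves implicit, so nothing is missing.
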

	
	Proposition \ref{prop-mec} allows us to focus on maximal $\tau$-ECs (rather than all $\tau$-ECs).
	Although the number of $\tau$-ECs reachable from a process $A$ could be exponentially many, the number of \emph{maximal} $\tau$-ECs is upper bounded by $|\mathsf{Reach}_{\tau}(A)|$, where $\mathsf{Reach}_{\tau}(A)$ is the set of processes reachable from $A$ through internal actions.
	
	\SetKwRepeat{Do}{do}{while}
	\begin{algorithm2e}[H]
		\footnotesize
		\DontPrintSemicolon
		\caption{$\mathsf{CompMec}$
			{\footnotesize \CommentSty{/* compute the set of maximal $\tau$-ECs $\mathsf{MEC}_A$ of $A$ */}}
		}\label{algo-CompMec}
		\SetKwInOut{KwIn}{Input}
		\SetKwInOut{KwOut}{Output}
		
		\KwIn{$A$}
		\KwOut{$\mathsf{MEC}_A$}
		
		$(V, E) \gets \mathsf{CompTauGraph}(A)$
		{\footnotesize \CommentSty{/* compute the induced $\tau$-graph $G_A^{\tau} = (V_A^{\tau}, E_A^{\tau})$ */}}
		
		$\mathsf{MEC}_A \gets \emptyset, \mathcal{L}_{und} \gets \{(V, E)\}$
		
		\Do{$toCon = \mathbf{T}$}{
			$toCon \gets \mathbf{F}$
			
			\For{$(V',E') \in \mathcal{L}_{und}$}{			
				$scc \gets \mathsf{CompScc}((V',E'))$
				
				{\footnotesize \CommentSty{/* compute the set of strongly connected components $scc$ for graph $(V',E')$ */}}
				
				\For{$(V'',E'') \in scc$}{
					$isChange \gets  \mathbf{F}, E_{new} \gets E''$
					
					\For{$B \in V''$}{
							\For{$(B,C) \in E''$ with label $p\tau$}{
									\If{there exists some $D$ such that $(B,D) \notin E''$ with label $q\tau$}{
										{\footnotesize \CommentSty{/* if $B \xrightarrow{q\tau} D$ violates $\tau$-EC condition, then updates $E_{new}$ */}}
										
										$isChange \gets \mathbf{T}$, $toCon \gets \mathbf{T}$,
										$E_{new} = E_{new}\backslash \{(B,C)\}$
									}
								}
							}
							\eIf{$isChange =  \mathbf{F}$}{
								$\mathsf{MEC}_A \gets \mathsf{MEC}_A \cup \{(V'',E'')\}$
								{\footnotesize \CommentSty{/* add $\tau$-EC $(V'',E'')$ to $\mathsf{MEC}_A$ */}}
							}{
								$\mathcal{L}_{und} \gets \mathcal{L}_{und} \cup \{(V'',E_{new})\}$
								{\footnotesize \CommentSty{/* update the set of undecided graphs */}}			
							}
						}
					}
					$\mathcal{L}_{und} \gets \mathcal{L}_{und}\backslash\{(V',E')\}$
				}
				\KwRet{$\mathsf{MEC}_A$}
			\end{algorithm2e}
			\vspace*{5pt}
			
			The \emph{induced $\tau$-graph of $A$}, denoted by $G_A^{\tau} = (V_A^{\tau}, E_A^{\tau})$, is a subgraph of $G_A$ (where $G_A$ is the induced transition graph of $A$) satisfying that $V_A^{\tau}$ contains all processes reachable from $A$ through internal actions and $E_A^{\tau}$ contains all the corresponding transition edges.
			Now the set of maximal $\tau$-ECs of $A$, denoted by $\mathsf{MEC}_A$, can be computed by Algorithm \ref{algo-CompMec}, which is an adaption of Algorithm 3.1 of \cite{dealfaro_FormalVerificationProbabilistic_1998} in our setting and runs in polynomial time. Intuitively speaking, in each iteration of
			$\mathsf{CompMec}$, it first computes the strongly connected components of the graph and then removes those probabilistic transitions that do not satisfy the requirement of $\tau$-EC.

			The main algorithm for deciding $\simeq_{e}$ is given in Algorithm \ref{algo-ExhBranBisim}.
			$\mathsf{ExhBranBisim}(A,B)$ is similar to the one in Algorithm \ref{algo-DivBranBisim} for $\simeq^{\Delta}$, we only explain the difference here. $\mathsf{FindMecSplit}(\mathcal{E})$ (given as Algorithm \ref{algo-FindMecSplit}) checks whether there is a pair of processes $(P,Q) \in \mathcal{E}$ that violates the (maximal) $\tau$-EC invariant condition, i.e., $P \xRightarrow{} \circlearrowleft_{mec}$ and there does not exist any $mec'$  such that $Q \xRightarrow{} \circlearrowleft_{mec'}$ and $mec \;\mathcal{E}^{\ddagger}\; mec'$ (or vice versa). If there is, then the discriminating evidence $(P,mec)$ (also called \emph{mec splitter}) is returned.
			Procedure $\mathsf{MecRefine}$ (given as Algorithm \ref{algo-MecRefine}) then splits the equivalence class $[P]_{\mathcal{E}}$ into two new equivalence classes $\mathcal{C}_{T}$ and $\mathcal{C}_{F}$ according to the splitter $(P,mec)$ returned by $\mathsf{FindMecSplit}$. More specifically, $\mathcal{C}_{T}$ contains all processes $P' \in [P]_{\mathcal{E}}$ that can arrive at a related maximal $\tau$-EC of $mec$, while $\mathcal{C}_{F}$ contains all processes $P'' \in [P]_{\mathcal{E}}$ that cannot.
			
			\begin{algorithm2e}[htb]
				\DontPrintSemicolon
				\footnotesize
				\caption{$\mathsf{ExhBranBisim}$
					{\footnotesize \CommentSty{/* decide whether $A \simeq_e B$ */}}
				}\label{algo-ExhBranBisim}
				\SetKwInOut{KwIn}{Input}
				\SetKwInOut{KwOut}{Output}
				
				\KwIn{$A,B$}
				\KwOut{$b\in\{\mathbf{T},\mathbf{F}\}$}
				
				$R \gets \mathsf{Reach}(A) \uplus  \mathsf{Reach}(B)$
				{\footnotesize \CommentSty{/* $\mathsf{Reach}(P)$ returns the set of processes reachable from $P$ */}}
				
				$\mathcal{E}_{ini} \gets \{R\}, toCon \gets \mathbf{T}$
				
				\Do{$toCon = \mathbf{T}$}{
					$\mathcal{E} \gets \mathsf{Quotient}(\mathcal{E}_{ini})$
					
					{\footnotesize \CommentSty{/* $\mathsf{Quotient}(\mathcal{E}_{ini})$ computes the largest branching bisimulation contained in $\mathcal{E}_{ini}$ */}}
					
					$(divSen, (P, mec))\gets \mathsf{FindMecSplit}(\mathcal{E})$
					
					{\footnotesize \CommentSty{/* $\mathsf{FindMecSplit}(\mathcal{E})$ checks whether there is a mec splitter $(P, mec)$ of $\mathcal{E}$ */}}
					
					\eIf{$divSen = \mathbf{T}$}{
						$toCon \gets \mathbf{F}$
					}{
						$\mathcal{E}_{ini} \gets \mathsf{MecRefine}(\mathcal{E}, (P, mec))$
						
						{\footnotesize \CommentSty{/* $\mathsf{MecRefine}(\mathcal{E}, (P, mec))$ refines $\mathcal{E}$ according to the splitter $(P, mec)$ identified by $\mathsf{FindMecSplit}(\mathcal{E})$ */}}
					}
				}
				{\footnotesize \CommentSty{/* when the \textbf{do-while} loop terminates, $\mathcal{E} = R\slash\simeq_{e}$ */}}
				
				\eIf{$(A,B) \in \mathcal{E}$}{
					\KwRet{$\mathbf{T}$}
				}{\KwRet{$\mathbf{F}$}}
			\end{algorithm2e}
			
			\begin{figure*}[htb]
				\noindent\begin{minipage}[t]{0.62\textwidth}
					\vspace{0pt}
					\begin{algorithm2e}[H]
						\footnotesize
						\DontPrintSemicolon
						\SetNoFillComment
						\caption{$\mathsf{FindMecSplit}$
						}
						\label{algo-FindMecSplit}
						\SetKwInOut{KwIn}{Input}
						\SetKwInOut{KwOut}{Output}
						
						\KwIn{$\mathcal{E}$}
						\KwOut{$(divSen,(P,ec))\in\{(\mathbf{T},(\bot, \bot)), (\mathbf{F},(P, mec))\}$}
						
						$divSen \gets \mathbf{T}$
						
						\For{$(P,Q) \in \mathcal{E}$}{ \label{OuterForStart}
							$\mathsf{MEC}_P \gets \mathsf{CompMec}(P)$
							
							\For{$mec \in \mathsf{MEC}_P$}{ \label{MediumForStart}
								\If{$P \xRightarrow{} \circlearrowleft_{mec}$}{ \label{OuterIfStart}
									$\mathsf{MEC}_Q \gets \mathsf{CompMec}(Q)$
									
									\For{$mec' \in \mathsf{MEC}_Q$}{ \label{InnerForStart}
										\eIf{$Q \xRightarrow{} \circlearrowleft_{mec'}$ \textbf{and} $mec \;\mathcal{E}^{\ddagger}\; mec'$}{
											\tcc{nothing changes}
										}{
											$divSen \gets \mathbf{F}$
											
											\KwRet{$(divSen,(P, mec))$}
										}
									}\label{InnerForEnd}
								}\label{OuterIfEnd}
							}\label{MediumForEnd}
							$\mathsf{MEC}_Q \gets \mathsf{CompMec}(Q)$
							
							\For{$mec \in \mathsf{MEC}_Q$}{
								\{the symmetric statements as from line \ref{OuterIfStart} to line \ref{OuterIfEnd}\}
							}
						}\label{OuterForEnd}
						
						\KwRet{$(divSen,(\bot, \bot))$}
					\end{algorithm2e}
				\end{minipage}%
				\begin{minipage}[t]{0.38\textwidth}
					\vspace{0pt}
					\begin{algorithm2e}[H]
						\footnotesize
						\DontPrintSemicolon
						\caption{$\mathsf{MecRefine}$
						}
						\label{algo-MecRefine}
						\SetKwInOut{KwIn}{Input}
						\SetKwInOut{KwOut}{Output}
						
						\KwIn{$\mathcal{E}, (P, mec)$}
						\KwOut{$\mathcal{E}_{ref}$}
						
						$\mathcal{C}_{T} \gets \emptyset, \mathcal{C}_{F} \gets \emptyset$
						
						\For{$Q \in [P]_{\mathcal{E}}$}{
							$\mathsf{MEC}_Q \gets \mathsf{CompMec}(Q)$
							
							\For{$mec \in \mathsf{MEC}_Q$}{
								\eIf{$Q \xRightarrow{} \circlearrowleft_{mec'}$ \textbf{and} $mec \;\mathcal{E}^{\ddagger}\; mec'$}{
									$\mathcal{C}_{T} \gets \mathcal{C}_{T} \cup \{Q\}$
								}{
									$\mathcal{C}_{F} \gets \mathcal{C}_{F} \cup \{Q\}$
								}
							}
						}
						
						$\mathcal{E}_{ref} \gets \mathcal{E} ~\backslash~\{[P]_{\mathcal{E}}\} \cup \{\mathcal{C}_{T}, \mathcal{C}_{F}\}$
						
						\KwRet{$\mathcal{E}_{ref}$}
						\vspace*{3.5cm}
						\vspace*{6pt}
					\end{algorithm2e}
				\end{minipage}
			\end{figure*}
			
			\begin{theorem}[Correctness]\label{thm-ExhBranBisim-Correctness}
				Given two processes $A,B \in\PRCCS$, $\mathsf{ExhBranBisim}(A,B)$ returns $\mathbf{T}$ if and only if $A \simeq_e B$.
			\end{theorem}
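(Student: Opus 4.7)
The plan is to mirror the partition-refinement correctness argument developed for Theorem \ref{thm-DivBranBisim-Correctness}, replacing the role of $\simeq^{\Delta}$ with $\simeq_{e}$ and the role of divergent $\epsilon$-tree preservation with maximal $\tau$-EC invariance. Concretely, let $\mathcal{E}_i$ (resp. $(\mathcal{E}_{ini})_i$) denote the value of $\mathcal{E}$ (resp. $\mathcal{E}_{ini}$) at the end of the $i$-th iteration of the \textbf{do--while} loop. The central invariant to establish by induction on $i$ is that $\simeq_{e}\;\subseteq\;\mathcal{E}_i\;\subseteq\;(\mathcal{E}_{ini})_{i-1}$, from which both termination and correctness follow exactly as in the proof of Theorem \ref{thm-DivBranBisim-Correctness}.

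First, I would handle the base case, which is immediate: $\mathcal{E}_1$ is obtained via $\mathsf{Quotient}(\{R\})$, hence equals the restriction of $\simeq$ to $R$, and $\simeq_e\;\subseteq\;\simeq$ by definition. For the inductive step, two subclaims are needed: (i) pairs discarded by $\mathsf{Quotient}$ are not in $\simeq_e$, which follows from the fact that $\simeq_e$ is a branching bisimulation; (ii) pairs discarded by $\mathsf{MecRefine}$ are not in $\simeq_e$. Subclaim (ii) is the analogue of Lemma \ref{lem-divsen} and is the main technical obstacle. I would state and prove the following auxiliary monotonicity lemma: \emph{if $\mathcal{E}$ is an equivalence with $\simeq_e\;\subseteq\;\mathcal{E}$ and $(A,B)\in\;\simeq_e$, then for every maximal $\tau$-EC $\mathsf{mec}_1$ with $A\xRightarrow{}\circlearrowleft_{\mathsf{mec}_1}$ there exists a maximal $\tau$-EC $\mathsf{mec}_2$ with $B\xRightarrow{}\circlearrowleft_{\mathsf{mec}_2}$ and $\mathsf{mec}_1\;\mathcal{E}^{\ddagger}\;\mathsf{mec}_2$ (and symmetrically)}. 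The proof proceeds by first applying Proposition \ref{prop-mec} to $\simeq_e$ to obtain some maximal $\tau$-EC $\mathsf{mec}_2$ reachable from $B$ with $\mathsf{mec}_1\;(\simeq_e)^{\ddagger}\;\mathsf{mec}_2$, and then invoking the pointwise monotonicity of the lifting $(\cdot)^{\ddagger}$ of Definition \ref{defn-rtauEC}: since $\simeq_e\;\subseteq\;\mathcal{E}$, one has $(\simeq_e)^{\ddagger}\;\subseteq\;\mathcal{E}^{\ddagger}$, so $\mathsf{mec}_1\;\mathcal{E}^{\ddagger}\;\mathsf{mec}_2$. Contrapositively, if $\mathsf{MecRefine}$ separates $(A,B)$ because no such $\mathsf{mec}_2$ witnesses the $\mathcal{E}^{\ddagger}$ relation, then $(A,B)\notin\;\simeq_e$, giving subclaim (ii).

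Having established the invariant, termination follows because the chain $\mathcal{E}_0\supsetneq\mathcal{E}_1\supsetneq\cdots$ is strictly decreasing (the loop continues only when $\mathsf{FindMecSplit}$ returns $\mathbf{F}$, which forces a strict refinement) and is bounded below by $\simeq_e$, hence stabilises at some $\mathcal{E}_n$ in finitely many steps. At that point, $\mathsf{Quotient}$ guarantees $\mathcal{E}_n$ is a branching bisimulation, and $\mathsf{FindMecSplit}$ returning $\mathbf{T}$ together with Lemma \ref{lem-EcAndTauEc} and Proposition \ref{prop-mec} (adapted to the branching case via the definition of exhaustive branching bisimulation) ensures $\mathcal{E}_n$ is $\tau$-EC invariant. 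Hence $\mathcal{E}_n$ is an exhaustive branching bisimulation, which combined with the invariant $\simeq_e\;\subseteq\;\mathcal{E}_n$ yields $\mathcal{E}_n\;=\;\simeq_e$. It follows that $\mathsf{ExhBranBisim}(A,B)$ returns $\mathbf{T}$ iff $(A,B)\in\mathcal{E}_n$ iff $A\simeq_e B$.

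The main obstacle, as flagged above, is the monotonicity lemma underpinning subclaim (ii). Unlike the divergent $\epsilon$-tree case (Lemma \ref{lem-divsen}), the argument here does not require any delicate inductive tree construction; it reduces to the purely relational observation that $(\cdot)^{\ddagger}$ is monotone in its argument, combined with the characterisation of $\simeq_e$ via maximal $\tau$-ECs (Proposition \ref{prop-mec}). The remaining work is bookkeeping of the partition-refinement iterations, which is entirely analogous to the proof of Theorem \ref{thm-DivBranBisim-Correctness}.
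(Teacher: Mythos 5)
Your proposal is correct and follows essentially the same route as the paper: the same invariant $\simeq_e\;\subseteq\;\mathcal{E}_i\;\subseteq\;(\mathcal{E}_{ini})_{i-1}$ established by induction, with the key step being that pairs split by $\mathsf{MecRefine}$ cannot lie in $\simeq_e$, justified via the monotonicity of $(\cdot)^{\ddagger}$ together with the maximal-$\tau$-EC characterisation (Lemma \ref{lem-EcAndTauEc}). The paper's own proof is only a sketch deferring to Theorem \ref{thm-DivBranBisim-Correctness} and leaves the monotonicity of the lifting implicit; your version makes that observation explicit as an auxiliary lemma, but the substance is identical.
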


			\begin{proof}
				The proof is similar to the one for Theorem \ref{thm-DivBranBisim-Correctness} and is also carried out by induction. Here we only give a sketch to show the correctness of the procedure $\mathsf{MecRefine}$.  For any pair $(A,B)$ deleted by $\mathsf{MecRefine}$ in $i$-th iteration, i.e., $(A,B) \in \mathcal{E}_{i}\backslash (\mathcal{E}_{ini})_{i}$,
				according to the construction of $\mathsf{MecRefine}$, $(A,B)$ violates the divergence condition. Suppose without loss of generality that $A \xRightarrow{} \circlearrowleft_{mec}$ and there does not exist any $mec'$  such that $B \xRightarrow{} \circlearrowleft_{mec'}$ and $mec \;(\mathcal{E}_i)^{\ddagger}\; mec'$.	Meanwhile, by induction hypothesis we have $\simeq_e \;\subseteq\; \mathcal{E}_{i}$, which implies that there does not exist any $mec''$  such that $B \xRightarrow{} \circlearrowleft_{mec''}$ and $mec \;(\simeq_e)^{\ddagger}\; mec''$.	Thus $(A,B) \notin \;\simeq_e$.	This shows that no pairs deleted by $\mathsf{MecRefine}$ belong to $\simeq_e$. It can also be verified easily that all such pair $(A,B)$ are removed by the algorithm.
			\end{proof}
			
			\begin{figure}[htb]
				\captionsetup{justification=centering}
				\begin{subfigure}[t]{0.42\textwidth}
					\centering
					\begin{tikzpicture}[on grid,scale=0.8]
						\begin{scope}[shift={(-1.5,-5)}, every node/.style={draw,circle,transform shape,inner sep=3pt}]
							\foreach \pos/\name/\text in {
								{(0-0.5,1.5)/B/{$B_1$}},
								{(1,3)/A/{$A_1$}},
								{(1,1.5)/a/{$a$}},
								{(2+0.5,1.5)/b/{$b$}}}
							\node (\name) at \pos {\text};
						\end{scope}
						
						\begin{scope}[every node/.style={transform shape, auto=left,inner sep=1pt}, >={Latex[width'=0pt .5, length=5pt]}]
							\foreach \source/ \dest / \text in {
								A/a/{$\frac{1}{3}\tau$}, A/b/{$\frac{1}{3}\tau$}}
							\path[->] (\source) edge node {\text} (\dest);
							
							\path[->] (B) edge [out=-135, in=-45, looseness=6] node {$\tau$} (B);
							\path[->] (B) edge [bend left=45] node {$\tau$} (A);
						\end{scope}	
						
						\begin{scope}[every node/.style={transform shape, auto=right,inner sep=1pt}, >={Latex[width'=0pt .5, length=5pt]}]
							\foreach \source/ \dest / \text in {
								A/B/{$\frac{1}{3}\tau$}}
							\path[->] (\source) edge node {\text} (\dest);
							
						\end{scope}	
						
						\begin{scope}[on background layer]
							\fill[blue!15] plot[smooth cycle] coordinates{($(A.90)+(0,1/3)$)
								($(A.45)+(1/4,1/4)$) ($(A.0)+(1/3,0)$) ($(A.-45)+(1/4,-1/4)$) ($(A.-90)+(0,-1/3)$) ($(B.0)+(2/3,0)$) ($(B.0)+(1/3,-1)$) ($(B.-90)+(0,-1)$) ($(B.180)+(-1/3,-1)$) ($(B.180)+(-2/3,0)$) ($(B.180)+(-1/3,1)$) ($(B.90)+(0,5/4)$)};
							\fill[red!20] plot[smooth cycle,tension=1] coordinates{($(B.90)+(0,1/8)$)
								($(B.90)+(-3/4,-6/8)$) ($(B.90)+(0,-13/8)$)
								($(B.90)+(3/4,-6/8)$)};	
						\end{scope}	
					\end{tikzpicture}
					\vspace{-5mm}
					\caption{
						\scriptsize
						\setlength{\tabcolsep}{0pt}
						\begin{tabular}{rl}
							&\\
							&\\
							$B_1$ & $= \mu X. (\tau.X + \tau. (\frac{1}{3}\tau.X \oplus \frac{1}{3}\tau.a \oplus \frac{1}{3}\tau.b)),$\\
							$A_1$ & $= \frac{1}{3}\tau.B_1 \oplus \frac{1}{3}\tau.a \oplus \frac{1}{3}\tau.b.$\\
							&
					\end{tabular}}\label{fig-ExhBranBisim-a}
				\end{subfigure}
				\hspace{1mm}
				\begin{subfigure}[t]{0.55\textwidth}
					\centering
					\begin{tikzpicture}[on grid,scale=0.8]
						\begin{scope}[shift={(8,-5)}, every node/.style={draw,circle,transform shape,inner sep=3pt}]
							\foreach \pos/\name/\text in {
								{(0-0.5,1.5)/B/{$B_3$}},
								{(1,3)/A/{$A_3$}},
								{(-0.5,3)/C/{$C_3$}},
								{(1,1.5)/a/{$a$}},
								{(2+0.5,1.5)/b/{$b$}}}
							\node (\name) at \pos {\text};
						\end{scope}
						
						\begin{scope}[every node/.style={transform shape, auto=left,inner sep=1pt}, >={Latex[width'=0pt .5, length=5pt]}]
							\foreach \source/ \dest / \text in {
								A/a/{$\frac{1}{4}\tau$}, A/b/{$\frac{1}{4}\tau$}, C/A/{$\tau$}, B/C/{$\tau$}}
							\path[->] (\source) edge node {\text} (\dest);
							
							\path[->] (B) edge [out=-135, in=-45, looseness=6] node {$\tau$} (B);
						\end{scope}	
						
						\begin{scope}[every node/.style={transform shape, auto=right,inner sep=1pt}, >={Latex[width'=0pt .5, length=5pt]}]
							\foreach \source/ \dest / \text in {
								A/B/{$\frac{1}{2}\tau$}}
							\path[->] (\source) edge node {\text} (\dest);
							
							\path[->] (C) edge [out=135, in=45, looseness=6] node {$\tau$} (C);
						\end{scope}	
						
						\begin{scope}[on background layer]
							\fill[blue!15] plot[smooth cycle] coordinates{($(C.90)+(0,1)$) ($(A.90)+(0,1/2)$)
								($(A.45)+(1/4,1/4)$) ($(A.0)+(1/3,0)$) ($(A.-45)+(1/4,-1/4)$) ($(A.-90)+(0,-1/3)$) ($(B.0)+(2/3,0)$) ($(B.0)+(1/3,-1)$) ($(B.-90)+(0,-1)$) ($(B.180)+(-1/3,-1)$) ($(B.180)+(-2/3,0)$)  ($(C.180)+(-2/3,0)$) ($(C.180)+(-1/2,3/4)$) ($(C.90)+(-1/2,1-1/8)$)};
							\fill[red!20] plot[smooth cycle,tension=1] coordinates{($(B.90)+(0,1/8)$)
								($(B.90)+(-3/4,-6/8)$) ($(B.90)+(0,-13/8)$)
								($(B.90)+(3/4,-6/8)$)};	
							\fill[red!20] plot[smooth cycle,tension=1] coordinates{($(C.-90)+(0,-1/8)$)
								($(C.-90)+(-3/4,6/8)$) ($(C.-90)+(0,13/8)$)
								($(C.-90)+(3/4,6/8)$)};	
						\end{scope}	
					\end{tikzpicture}
					\vspace{-5mm}
					\caption{
						\scriptsize
						\setlength{\tabcolsep}{0pt}
						\begin{tabular}{rl}
							&\\
							&\\
							& $B_3 = \mu X.(\tau.X + \tau. (\mu Y. (\tau.Y + \tau. (\frac{1}{2}\tau.X\oplus \frac{1}{4}\tau.a \oplus \frac{1}{4}\tau.b)))),$\\
							& $C_3 = \mu Z.(\tau.Z + \tau.(\frac{1}{2}\tau.B_3\oplus \frac{1}{4}\tau.a \oplus \frac{1}{4}\tau.b)),$\\
							& $A_3 = \frac{1}{2}\tau.B_3\oplus \frac{1}{4}\tau.a \oplus \frac{1}{4}\tau.b.$
					\end{tabular}}\label{fig-ExhBranBisim-b}
				\end{subfigure}%
				\caption{Example to illustrate Algorithm $\mathsf{ExhBranBisim}$.}\label{fig-ExhBranBisim}	
			\end{figure}
			
			\begin{example} \label{exm-algebb}
				Figure \ref{fig-ExhBranBisim} depicts two probabilistic systems with initial states $A_1$ and $A_3$, respectively.
				Now consider the execution process of $\mathsf{ExhBranBisim}(A_1, A_3)$.
				\begin{enumerate}
					\item $(\mathcal{E}_{ini})_0 \gets R = \{A_1, B_1, A_3, B_3, C_3, a, b, \mathbf{0}\}$.
					\item In the first iteration of the $\textbf{do--while}$ loop:
					\begin{enumerate}
						\item $\mathcal{E}_1 \gets \mathsf{Quotient}((\mathcal{E}_{ini})_0) = (\mathcal{E}_{ini})_0 \slash \!\simeq \;\;= \{\{A_1, B_1, A_3, B_3, C_3\}, \{a\}, \{b\}, \{\mathbf{0}\}\}$.
						\item $(divSen, (D, mec))\gets \mathsf{FindMecSplit}(\mathcal{E}_1) = (\mathbf{T},(\bot, \bot))$. Here procedure $\mathsf{FindMecSplit}$ will invoke the subroutine $\mathsf{CompMec}$ to obtain the set of maximal $\tau$-ECs. We take $\mathsf{CompMec}(A_3)$ as an example. The procedure starts by computing the set of strongly connected components,  which is the set marked in blue in Figure \ref{fig-ExhBranBisim-b}. Then it removes those probabilistic transitions which do not satisfy the requirement of $\tau$-EC and repeat the process until the final set of maximal $\tau$-ECs (marked in red in Figure \ref{fig-ExhBranBisim-b}) is obtained. It is not hard to see that $A_1$ and $A_2$ can reach equivalent (maximal) $\tau$-ECs.
						\item $toCon \gets \mathbf{F}$.
					\end{enumerate}
					\item The final partition $\mathcal{E}_1 = \{\{A_1, B_1, A_3, B_3, C_3\}, \{a\}, \{b\}, \{\mathbf{0}\}\}$ computes the relation $R \slash \simeq_{e}$. Since $(A_1, A_3) \in \mathcal{E}_1$, we conclude that these two systems are exhaustive branching bisimilar.
				\end{enumerate}
			\end{example}

			\begin{proposition}[Complexity]\label{thm-ExhBranBisim-Complexity}
				Let $N$ be the number of processes reachable from $A$ and $B$. The algorithm $\mathsf{ExhBranBisim}(A,B)$ runs in polynomial time with respect to $N$.
			\end{proposition}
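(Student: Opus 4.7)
The plan is to follow the same template used for Proposition~\ref{thm-DivBranBisim-Complexity}: establish that the outer do--while loop of $\mathsf{ExhBranBisim}(A,B)$ performs only polynomially many iterations, and that the body of each iteration executes in time polynomial in $N$.

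For the number of iterations, the same argument used in the proof of Theorem~\ref{thm-ExhBranBisim-Correctness} shows that $\mathcal{E}_{i+1} \subsetneq \mathcal{E}_i$ whenever the loop does not terminate, since both $\mathsf{Quotient}$ and $\mathsf{MecRefine}$ can only refine the current partition. Hence the loop is executed at most $|\mathcal{E}_0| = \mathcal{O}(N^2)$ times. For the per-iteration cost, $\mathsf{Quotient}$ runs in time $Q(N)$ by Theorem~\ref{thm-bbAlg}. The subroutine $\mathsf{CompMec}$ is an instantiation of de Alfaro's Algorithm~3.1 on the induced $\tau$-graph $G_A^{\tau} = (V_A^{\tau}, E_A^{\tau})$; since each iteration of its outer loop either finalises a strongly connected component as a maximal $\tau$-EC or strictly decreases the edge count, it runs in polynomial time and returns at most $|V_A^{\tau}| \le N$ maximal $\tau$-ECs.

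Given the polynomial bound on $\mathsf{CompMec}$, the analysis of $\mathsf{FindMecSplit}(\mathcal{E})$ is straightforward: the outer loop runs over $\mathcal{O}(N^2)$ pairs, for each pair $\mathsf{CompMec}$ is invoked twice (polynomial), and for each of the at most $N$ maximal $\tau$-ECs the reachability check $\xRightarrow{}$ is polynomial (it reduces to reachability in $G_A^{\tau}$) and the check $\mathsf{mec} \;\mathcal{E}^{\ddagger}\; \mathsf{mec}'$ from Definition~\ref{defn-rtauEC} amounts to verifying, for every vertex of $V_{\mathsf{mec}'}$, the existence of an $\mathcal{E}$-related vertex in $V_{\mathsf{mec}}$, which costs $\mathcal{O}(N^2)$. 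Thus $\mathsf{FindMecSplit}$ is polynomial. By an entirely analogous count, $\mathsf{MecRefine}(\mathcal{E},(P,\mathsf{mec}))$ iterates over at most $N$ processes in $[P]_{\mathcal{E}}$, calls $\mathsf{CompMec}$, and performs the same polynomial-time checks, so it is polynomial as well.

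The potentially delicate step, and the one I would treat most carefully, is the polynomial bound for $\mathsf{CompMec}$, since in general the number of $\tau$-ECs can be exponential. The key observation (formalised in Lemma~\ref{lem-EcAndTauEc} and Proposition~\ref{prop-mec}) is that only \emph{maximal} $\tau$-ECs need to be enumerated, and these are linearly many in $|V_A^{\tau}|$. Once this is in place, combining $\mathcal{O}(N^2)$ iterations of the main loop with a per-iteration cost of $Q(N) + \mathrm{poly}(N)$ yields the desired polynomial overall complexity of $\mathsf{ExhBranBisim}(A,B)$.
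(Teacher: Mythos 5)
Your proof is correct and follows essentially the same route as the paper's: bound the number of do--while iterations by $|\mathcal{E}_0| = \mathcal{O}(N^2)$ via strict refinement, invoke the polynomial bounds for $\mathsf{Quotient}$ and $\mathsf{CompMec}$, and count the loop iterations in $\mathsf{FindMecSplit}$ and $\mathsf{MecRefine}$ using the fact that there are at most $N$ maximal $\tau$-ECs. The extra detail you give on the reachability and $\mathcal{E}^{\ddagger}$ checks is consistent with, and slightly more explicit than, the paper's account.
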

			
			\begin{proof}
				As is shown in the proof of Theorem \ref{thm-ExhBranBisim-Correctness}, $\mathcal{E}_{i+1} \subsetneq \mathcal{E}_{i}$ holds for all $i < n$, where $n$ is the number of iterations of the $\textbf{do--while}$ loop in procedure $\mathsf{ExhBranBisim}(A,B)$.
				Now it is easy to see that $n \le |\mathcal{E}_0| \le N^2$.
				Let $Q(N)$ be the complexity of the procedure $\mathsf{Quotient}$, which is shown to be polynomial in $N$ in \cite{zhang_UniformRandomProcess_2019}.
				For procedure $ \mathsf{FindMecSplit}(\mathcal{E})$, the $\textbf{for}$ loop at lines \ref{OuterForStart}-\ref{OuterForEnd} can run no more than $|\mathcal{E}| = \mathcal{O}(N^2)$ times;
				since $|\mathsf{MEC}_A|, |\mathsf{MEC}_B| \le N$, both the $\textbf{for}$ loop at lines \ref{MediumForStart}-\ref{MediumForEnd} and line \ref{InnerForStart}-\ref{InnerForEnd} can repeat for no more than $\mathcal{O}(N)$ times.
				Let $S(N)$ be the complexity of the procedure $\mathsf{CompMec}$, which is shown to be polynomial in $N$ in \cite{dealfaro_FormalVerificationProbabilistic_1998}.
				Therefore, the time complexity for $\mathsf{FindMecSplit}(\mathcal{E})$ is $\mathcal{O}(N^3\cdot S(N))$.
				Similarly, we can show that the time complexity for $\mathsf{MecRefine}$ is $\mathcal{O}(N\cdot S(N))$.
				Thus the overall complexity of the algorithm $\mathsf{ExhBranBisim}(A,B)$ is $\mathcal{O}(N^2(Q(N)+N^3 \cdot S(N) +N \cdot S(N))) = \mathcal{O}(N^2\cdot Q(N)+N^5 \cdot S(N))$, which is polynomial in $N$.
			\end{proof} 

	\subsection{Algorithm for deciding exhaustive weak bisimilarity}
	In this part, we extend the results for checking exhaustive branching bisimilarity  to the weak case. The readers will notice an advantage of our way in handling divergence: as the concept of $\tau$-EC is actually independent of bisimilarities, it brings extra convenience for algorithmic re-usability.  We first recall a classical result.
		\begin{theorem}[\cite{turrini_PolynomialTimeDecision_2015}]
			\label{thm-wbAlg}
			Given two processes $A,B \in \PRCCS$. Let $S$ be the set of processes reachable from $A$ and $B$, and $N = |S|$ be the size of $S$.
			For any equivalence $\mathcal{E}$ on $S$, the largest weak bisimulation $\mathcal{E}''$ contained in $\mathcal{E}$ can be computed by a procedure $\mathsf{WeakQuotient}(\mathcal{E})$ in polynomial time of $N$.
		\end{theorem}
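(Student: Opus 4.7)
The plan is to follow the classical partition-refinement framework of Kanellakis and Smolka, adapted to the probabilistic setting as in \cite{turrini_PolynomialTimeDecision_2015}. Starting from the input equivalence $\mathcal{E}$ as the initial partition, I would iteratively refine it by splitting equivalence classes that contain a pair $(A,B)$ witnessing a violation of the weak bisimulation condition, until the partition is stable. The stable partition is then returned as $\mathsf{WeakQuotient}(\mathcal{E})$. By construction it will be the coarsest refinement of $\mathcal{E}$ that is a weak bisimulation, hence equal to the largest weak bisimulation contained in $\mathcal{E}$ (using Theorem \ref{thm-wb-equivalence} to justify that such a largest one exists and is an equivalence).

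The core subproblem at each refinement step is the \emph{matching check}: given the current partition $\mathcal{E}'$, a pair $(A,B)\in\mathcal{E}'$, and a transition $A\xrightarrow{\alpha}\rho_A$, decide whether there exists a distribution $\rho_B$ with $B\stackrel{\alpha}{\Longrightarrow}_{c}\rho_B$ and $\rho_A\;(\mathcal{E}')^{\dagger}\;\rho_B$. Following \cite{turrini_PolynomialTimeDecision_2015}, this is the hard part and is the step where the polynomial complexity is earned: a weak combined transition is \emph{not} a single transition in the pLTS but arises by combining a scheduler with internal and labelled steps, and naive enumeration is exponential. The key technical device is to encode the existence of such a scheduler, together with the lifting constraint $\rho_A\;(\mathcal{E}')^{\dagger}\;\rho_B$, as a linear program whose variables are flow values on the transitions of the sub-pLTS reachable from $B$ by $\widehat{\alpha}$-labelled execution fragments. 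The lifting constraint, by Proposition \ref{prop-lifting} and the characterization $\rho_1 =_{\mathcal{E}'} \rho_2 \iff \forall\mathcal{C}\in X/\mathcal{E}': \rho_1(\mathcal{C}) = \rho_2(\mathcal{C})$, reduces to finitely many linear equalities on class masses, which fit naturally into the LP. Solving the LP yields either a witnessing scheduler or a proof that no match exists, in polynomial time in $N$.

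With this matching oracle in hand, termination and complexity follow routinely. At each outer iteration either no splitter is found (the algorithm halts) or some equivalence class is properly split, which can happen at most $N-|\mathcal{E}|$ times overall. Each iteration checks at most $O(N^2)$ pairs, each pair has at most $O(N)$ outgoing transitions to match, and each matching call costs a polynomial in $N$ from the LP, so the total running time is polynomial in $N$. The main obstacle I anticipate is exactly the LP encoding of weak combined transitions modulo class-lifting: one must be careful that the LP faithfully captures the ability of a scheduler to terminate early (via $\sigma(\omega)(\bot)$), to resolve probabilistic choices by convex combinations of Dirac choices, and to interleave internal steps with the single $\alpha$-step; once this encoding is set up correctly, the remainder of the argument is standard partition refinement.
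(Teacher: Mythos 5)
This theorem is stated as a citation of Turrini and Hermanns \cite{turrini_PolynomialTimeDecision_2015}; the paper gives no proof of its own, and your sketch reproduces exactly the strategy of that reference, which the paper itself describes as ``a novel characterization of the weak combined transitions as a linear programming problem'' embedded in a standard partition-refinement loop. Your proposal is correct and takes essentially the same approach as the cited source, including correctly isolating the LP-based matching oracle as the step where the polynomial bound is earned.
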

	
	As mentioned in Section \ref{sec-introduction}, He et al. \cite{he_DivergencesensitiveWeakProbabilistic_2023} take the  \emph{inductive  verification method} for algorithm design. More specifically, instead of directly verifying exhaustive weak bisimilarity (by using $\tau$-EC), they prove the coincidence of $\approx_e$ and the so-called \emph{inductive weak probabilistic bisimilarity} and give an algorithm for the latter equivalence. 
	The reason for such algorithm design, as mentioned in \cite{he_DivergencesensitiveWeakProbabilistic_2023}, is that there could be an exponential number of $\tau$-ECs in the transition graph. 
	However, as we use \emph{maximal} $\tau$-EC in Definition \ref{def-MtEC}, there could be only a polynomial number of maximal $\tau$-ECs, because two different  maximal $\tau$-EC must be disjoint from each other. Compared with the inductive verification approach,  maximal $\tau$-EC is a concept for graphs and thus independent of the bisimilarities. Therefore, we can reuse Algorithm~\ref{algo-CompMec} directly. 
	All we need to do is to  replace the $\mathsf{Quotient}$ procedure with the analogue $\mathsf{WeakQuotient}$ procedure for weak bisimulation (cf. Theorem \ref{thm-wbAlg}) in Algorithm \ref{algo-ExhBranBisim}. 
	Then we will obtain a polynomial algorithm $\mathsf{ExhWeakBisim}$ for  exhaustive weak bisimilarity.

	\begin{proposition}
		[Complexity]\label{prop-ExhWeakBisim-Complexity}
		Let $N$ be the number of processes reachable from $A$ and $B$.
		The algorithm $\mathsf{ExhWeakBisim}(A,B)$ runs in polynomial time with respect to $N$.
	\end{proposition}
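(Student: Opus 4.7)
The plan is to mirror the complexity analysis carried out for $\mathsf{ExhBranBisim}$ in Proposition \ref{thm-ExhBranBisim-Complexity}, exploiting the observation that $\mathsf{ExhWeakBisim}$ differs from $\mathsf{ExhBranBisim}$ only by replacing the call to $\mathsf{Quotient}$ with a call to $\mathsf{WeakQuotient}$, whose polynomial complexity in $N$ is guaranteed by Theorem \ref{thm-wbAlg}. Crucially, the divergence-handling machinery $\mathsf{CompMec}$, $\mathsf{FindMecSplit}$ and $\mathsf{MecRefine}$ operates purely on the induced transition graph and is indifferent to whether one is verifying a branching or weak bisimulation, so these subroutines can be re-used as black boxes with their bounds unchanged.

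First I would re-establish the iteration bound for the outer $\textbf{do--while}$ loop. Writing $\mathcal{E}_i$ and $(\mathcal{E}_{ini})_i$ for the values of the corresponding variables at the end of the $i$-th iteration, exactly the argument in Theorem \ref{thm-DivBranBisim-Correctness} and Theorem \ref{thm-ExhBranBisim-Correctness} shows that either the loop terminates or $\mathcal{E}_{i+1} \subsetneq \mathcal{E}_i$, the strict refinement now being witnessed by $\mathsf{WeakQuotient}$ (which, by Theorem \ref{thm-wbAlg}, returns the largest weak bisimulation contained in its input partition and hence a refinement of it) and by $\mathsf{MecRefine}$. Since $|\mathcal{E}_0| \le N^2$, the loop terminates after at most $N^2$ iterations.

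Next I would account for the per-iteration cost. Let $W(N)$ denote the polynomial complexity of $\mathsf{WeakQuotient}$ from Theorem \ref{thm-wbAlg}, and let $S(N)$ denote the polynomial complexity of $\mathsf{CompMec}$ from \cite{dealfaro_FormalVerificationProbabilistic_1998}. The nested \textbf{for}-loop count used in the proof of Proposition \ref{thm-ExhBranBisim-Complexity} applies verbatim: inside $\mathsf{FindMecSplit}$ the outer loop at lines \ref{OuterForStart}--\ref{OuterForEnd} runs $\mathcal{O}(N^2)$ times, while the nested loops at lines \ref{MediumForStart}--\ref{MediumForEnd} and \ref{InnerForStart}--\ref{InnerForEnd} each run $\mathcal{O}(N)$ times (since a transition graph on $N$ nodes contains at most $N$ pairwise-disjoint maximal $\tau$-ECs), yielding $\mathcal{O}(N^3 \cdot S(N))$ for a single call to $\mathsf{FindMecSplit}$ and $\mathcal{O}(N \cdot S(N))$ for a single call to $\mathsf{MecRefine}$. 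Combining with the per-iteration cost $W(N)$ of $\mathsf{WeakQuotient}$ and multiplying by the $\mathcal{O}(N^2)$ outer iterations, the overall complexity is $\mathcal{O}\!\left( N^2 \cdot W(N) + N^5 \cdot S(N)\right)$, which is polynomial in $N$.

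There is no real obstacle here: the only point that warrants checking is that the iteration-count argument of Proposition \ref{thm-ExhBranBisim-Complexity} still applies when $\mathsf{Quotient}$ is swapped for $\mathsf{WeakQuotient}$, and this is immediate from Theorem \ref{thm-wbAlg}. All the other components of the complexity bound are inherited unchanged, precisely because the notion of (maximal) $\tau$-EC and the subroutines manipulating it are defined on the transition graph alone and are therefore orthogonal to the choice of underlying bisimulation semantics.
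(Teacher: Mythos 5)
Your proposal is correct and matches the paper's intended argument: the paper gives no explicit proof for this proposition, relying precisely on the observation that $\mathsf{ExhWeakBisim}$ is obtained from $\mathsf{ExhBranBisim}$ by swapping $\mathsf{Quotient}$ for the polynomial-time $\mathsf{WeakQuotient}$ (Theorem \ref{thm-wbAlg}), so the analysis of Proposition \ref{thm-ExhBranBisim-Complexity} carries over with $Q(N)$ replaced by $W(N)$. Your write-up simply makes this inherited argument explicit.
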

	
	We end this section by summarizing the algorithmic results in Table \ref{table-allAlg}.
	\begin{table}[htb]
		\centering
		\begin{tabular}{|c|c|c|c|}
			\hline
			Bisimilarity &$\simeq^{\Delta}$ & $\simeq_e$ & $\approx_e$ \\
			\hline
			Algorithm & Proposition \ref{thm-DivBranBisim-Complexity}     & Proposition \ref{thm-ExhBranBisim-Complexity} &
			\cite{he_DivergencesensitiveWeakProbabilistic_2023}, Proposition \ref{prop-ExhWeakBisim-Complexity}\\
			\hline
		\end{tabular}
		\caption{Polynomial algorithms for bisimilarities.}
		\label{table-allAlg}
	\end{table}

	\section{Conclusion and future work}\label{sec-conclusion}
	The probabilistic process theory has been studied for over three decades.
	From early on it has been realized that the key issue is to reconcile the imcompatibility between the probabilistic choice and the nondeterministic choice.
	Models, equivalence relations and investigating tools have been proposed to address the issue.
	A rich theory of distribution-based equivalence is now available \cite{segala_ModelingVerificationRandomized_1995,deng_SemanticsProbabilisticProcesses_2014,turrini_PolynomialTimeDecision_2015}, and a model independent theory of probabilistic process theory has been shown to enjoy the congruence property \cite{fu_ModelIndependentApproach_2021}.
	
	A difficult topic in the classical process theory is about dealing with divergence.
	Intensive studies on this issue have revealed that a comprehensive understanding of divergence is crucial if one hopes to place the classical process theory on a firmer foundation \cite{vanglabbeek_BranchingBisimilarityExplicit_2009,liu_AnalyzingDivergenceBisimulation_2017}.
	In the probabilistic scenario, the issue of divergence becomes urgent once the basic observational theory of the probabilistic processes has been settled.
	It is the opinion of the present authors that studies on the divergence issue in the probabilistic models are still on early stage, and further research can definitely improve our understanding of the probabilistic models.
	Based upon the previous work \cite{liu_AnalyzingDivergenceBisimulation_2017,fu_ModelIndependentApproach_2021,he_DivergencesensitiveWeakProbabilistic_2023}, we have conducted in this paper a systematic study on the (divergence-sensitive) branching and weak bisimilarities for the $\mathrm{RCCS}_{fs}$ model.
	We have explored two distinct methods to handle divergence, i.e., by the existence of divergent $\epsilon$-trees (roughly, divergent with probability $1$) or by the reachability of related $\tau$-ECs (roughly, divergent with any non-zero probability).
	We have established a lattice over these bisimilarities (see Figure  \ref{Fig-spectrum}) and showed that divergent $\epsilon$-tree preserving property is stronger than $\tau$-EC invariant property.
	And finally, we have provided efficient checking algorithms for all the divergence-sensitive bisimilarities in the lattice, as summarized in Table \ref{table-allAlg}.
	
	Having done the work reported in this paper, we feel that the role of divergence needs be further clarified in several accounts.
	Here are two possible directions for future investigation.
	Firstly, similar to van Glabbeek's famous linear time-branching time spectrum, it would be valuable to  give a comprehensively comparative study on other process semantics for probabilistic models with divergence.
	Notice that when divergence is defined independent of bisimulations (such as by $\tau$-EC), the algorithms of this paper can be reused.
	It would also be interesting to generalize our approach to other popular nondeterministic probabilistic models such as MDP \cite{baier_PrinciplesModelChecking_2008,etessami_RecursiveMarkovDecision_2015,brazdil_ReachabilityRecursiveMarkov_2008}, PA \cite{segala_ModelingVerificationRandomized_1995, cattani_DecisionAlgorithmsProbabilistic_2002, turrini_PolynomialTimeDecision_2015}, and the like.
	Notice that the technique for relating the $\epsilon$-trees and the distributions is actually independent of models.
	Secondly complete axiomatization systems are available for the divergence-sensitive branching bisimulations of $\mathrm{CCS}_{fs}$~\cite{liu_CompleteAxiomatisationDivergence_2021} in the absence of probability, and also for the branching bisimulations of $\mathrm{RCCS}_{fs}$ \cite{zhang_UniformRandomProcess_2019} in the absence of divergence.
	A challenging issue is about sound and complete axiomatizations for the divergence-sensitive branching (or weak) bisimulations for $\mathrm{RCCS}_{fs}$.

%
	
	\section*{Acknowledgment}
	We thank BASICS members for their instructive discussions and feedbacks. The support from the National Natural Science Foundation of China (62072299, 62102243) is acknowledged.

\bibliographystyle{alphaurl}
\bibliography{DivBisim}

\end{document}